\newtheoremstyle{boldstyle} 
{\topsep}   
{\topsep}   
{\normalfont}  
{}          
{\bfseries} 
{.}         
{ }         
{}          
\theoremstyle{boldstyle}
\newtheorem{definition}{Definition}
\newtheorem{assumption}{Assumption}
\newtheorem{theorem}{Theorem}
\newtheorem{lemma}{Lemma}
\newtheorem{proposition}{Proposition}
\newtheorem{remark}{Remark}
\newcommand{\Xtar}{X_\mathrm{tar}}
\def\BibTeX{{\rm B\kern-.05em{\sc i\kern-.025em b}\kern-.08em
    T\kern-.1667em\lower.7ex\hbox{E}\kern-.125emX}}
\begin{document}
\title{From Learning to Safety: A Direct Data-Driven Framework for Constrained Control}
\author{Kanghui He, Shengling Shi, \IEEEmembership{Member, IEEE}, Ton van den Boom, and Bart De Schutter, \IEEEmembership{Fellow, IEEE}
	\thanks{This paper is part of a project that has received funding from the European Research Council (ERC) under the European Union’s Horizon 2020 research and innovation programme (Grant agreement No. 101018826 - CLariNet). }
	\thanks{Kanghui He ({\tt\small k.he@tudelft.nl}), Ton van den Boom ({\tt\small a.j.j.vandenBoom@tudelft.nl}), and  Bart De Schutter ({\tt\small b.deschutter@tudelft.nl}) are with the Delft Center for Systems and Control, Delft University of Technology, Delft, The Netherlands.}%
	\thanks{Shengling Shi ({\tt\small shengling.shi@tudelft.nl}) was with the Department of Chemical Engineering, Massachusetts Institute of Technology, USA, and is now with the Delft Center for Systems and Control, Delft University of Technology, the Netherlands.}}

\maketitle

\begin{abstract}
Ensuring safety in the sense of constraint satisfaction for learning-based control is a critical challenge, especially in the model-free case. While safety filters address this challenge in the model-based setting by modifying unsafe control inputs, they typically rely on predictive models derived from physics or data. This reliance limits their applicability for advanced model-free learning control methods. To address this gap, we propose a new optimization-based control framework that determines safe control inputs directly from data. The benefit of the framework is that it can be updated through arbitrary model-free learning algorithms to pursue optimal performance. As a key component, the concept of direct data-driven safety filters (3DSF) is first proposed. The framework employs a novel safety certificate, called the state-action control barrier function (SACBF). We present three different schemes to learn the SACBF. Furthermore, based on input-to-state safety analysis, we present the error-to-state safety analysis framework, which provides formal guarantees on safety and recursive feasibility even in the presence of learning inaccuracies. The proposed control framework bridges the gap between model-free learning-based control and constrained control, by decoupling performance optimization from safety enforcement. Simulations on vehicle control illustrate the superior performance regarding constraint satisfaction and task achievement compared to model-based methods and reward shaping.
\end{abstract}

\begin{IEEEkeywords}
Learning-based control, safe reinforcement learning, safety filters, control barrier functions. 
\end{IEEEkeywords}

\section{Introduction}
\subsection{Background}
Learning-based control has achieved state-of-the-art performance in addressing complex problems in the presence of uncertainty, including applications in transportation systems \cite{haydari2020deep} and robotics \cite{he2020composite}. However, ensuring safety is still a challenging problem, particularly when an explicit model of the system is unavailable. Traditional model-based approaches to safety-critical control, such as model predictive control (MPC) \cite{rawlings2017model}, struggle with online computational efficiency and rely on the model, while emerging data-driven methods often lack well-understood safety guarantees. 

In control problems, safety is typically defined as maintaining state and input variables within given constraints \emph{throughout} the system’s evolution. The difficulty lies in the fact that unsafe control policies do not necessarily immediately violate constraints but will lead to constraint violation in the future. As a fundamental principle in safe control, control invariance ensures that a system remains within a safe operating set, contained in the state constraint set. This property is crucial for guaranteeing long-term safety.

A widely adopted approach to enforcing long-term safety is the use of safety filters, which provide a modular framework that can be applied to any control policy, even those without explicit safety considerations \cite{ames2016control,didier2024approximate,mestres2025control,he2024approximate,fisac2018general,choi2021robust,wabersich2021predictive}. The basic principle of safety filters is to post-process the input of a given control policy such that the resulting closed-loop system remains forward invariant w.r.t. the specified state and input constraints. The design of safety filters typically consists of two phases: an offline phase, where a safety certificate characterizing safe states is computed, and an online phase, where this certificate is incorporated as a constraint to modify potentially unsafe control actions from the reference controller. With the development of different kinds safety certificates, various kinds of safety filters have been proposed, such as control barrier function (CBF)-based safety filters \cite{ames2016control,didier2024approximate,mestres2025control}, invariant set-based safety filters \cite{he2024approximate}, Hamilton-Jacobi reachability-based safety filters \cite{fisac2018general,choi2021robust}, and predictive safety filters \cite{wabersich2021predictive}. Learning-based approaches have increasingly been used to synthesize safety certificates, particularly for complex, nonlinear systems with non-convex state and input constraints. For a detailed overview, we refer the reader to relevant work in the literature \cite{robey2020learning,didier2024approximate,chen2024learning,westenbroek2021combining,taylor2020learning} as well as comprehensive surveys \cite{wabersich2023data,dawson2023safe}.

Despite the differences and connections, almost all safety filters rely on a mathematical model, which is either exactly derived from physical principles or approximately estimated. In most formulations, model information is required in both the offline and online phases. Specifically, enforcing invariance conditions, first in the safety certificate, and then in the control policy, requires explicit knowledge of the system dynamics. Recently, there has been a growing number of approaches focusing on offline construction of safety certificates using only state transition data \cite{tan2023value}. However, these approaches cannot abandon the reliance on an explicit model during the online phase. This limitation is mainly due to the inherent property of existing safety certificates, which fully work on the state space. In particular, when using an existing safety certificate, safety filters need a prediction model to enforce safety conditions on the successor states. A detailed explanation is provided at the end of Section II.

To overcome the above limitation, data-driven safety filters have received significant attention. Almost all existing data-driven safety filters belong to \emph{indirect} data-driven methods \cite{cheng2019end,wabersich2023data,fisac2018general,westenbroek2021combining,taylor2020learning,dhiman2021control,wabersich2021predictive}. The difference between direct and indirect methods was made in the context of adaptive control community. In indirect approaches, first system identification or disturbance estimation is performed and then a controller is learned based on the obtained model. In direct methods the controller is learned directly from data, bypassing the need for system identification. Indirect data-driven safety filters have one main issue: The errors arising from both model identification and certificate learning will compound, leading to a degradation in the safety performance of the filtered controller. Among all data-driven safety filters, there is only one \emph{direct} data-driven formulation, which learns discriminating hyper-planes to directly regulate the control inputs \cite{lavanakul2024safety}. However, this method is limited to linear safety constraints on inputs, potentially leading to conservative control actions if nonlinear constraints are considered, and lacks formal guarantees regarding constraint satisfaction. 

An alternative approach bypasses model identification by jointly learning a CBF and an explicit policy that enforces the CBF constraint, but this often results in overly conservative policies focused solely on safety \cite{zheng2024data,dawson2023safe}.

Similar to the distinction between indirect and direct data-driven control, learning-based control can also be categorized into model-based learning and model-free learning. Learning-based control, encompassing supervised learning (SL) and reinforcement learning (RL) \cite{busoniu2017reinforcement}, iteratively finds an optimal control policy that minimizes a pre-defined cost. Due to the stochastic nature of learning algorithms, learning-based control, especially in the absence of an explicit model, cannot fully guarantee safety without using safety filters to regulate policy execution. However, as almost all data-driven safety filters still rely on an underlying model, there still remains a gap when applying data-driven safety filters to learning-based control approaches that does not use an explicit model. 

\begin{figure}\label{compairson}
	\centering
	\subfloat[The flow chart of indirect data-driven safe control using safety filters involving classical safety certificates.]{\includegraphics[width=240pt,clip]{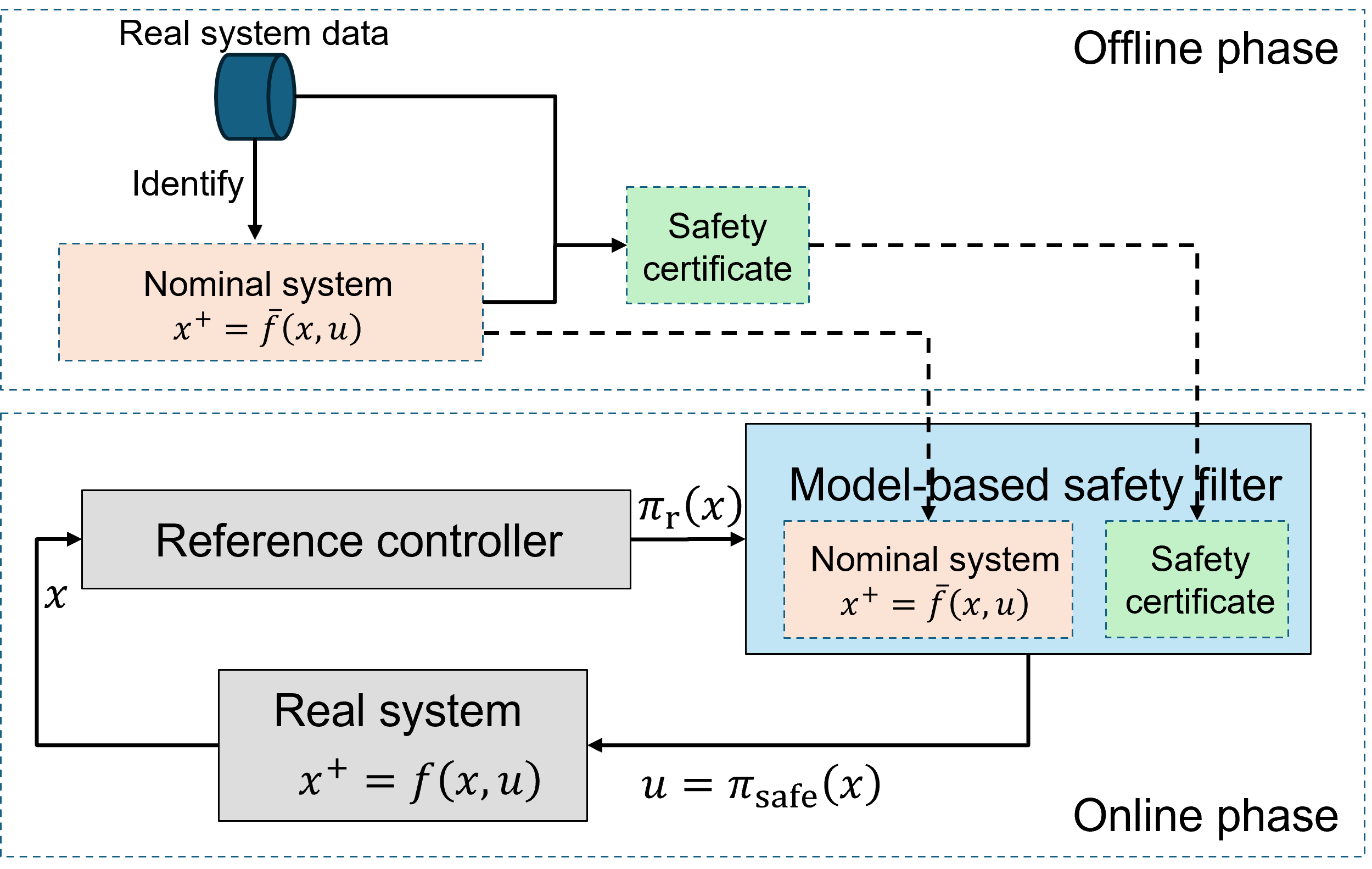}}
	\hfil
	\subfloat[The flow chart of the proposed 3DSF, a special form of the optimization-based control framework we propose.]{\includegraphics[width=240pt,clip]{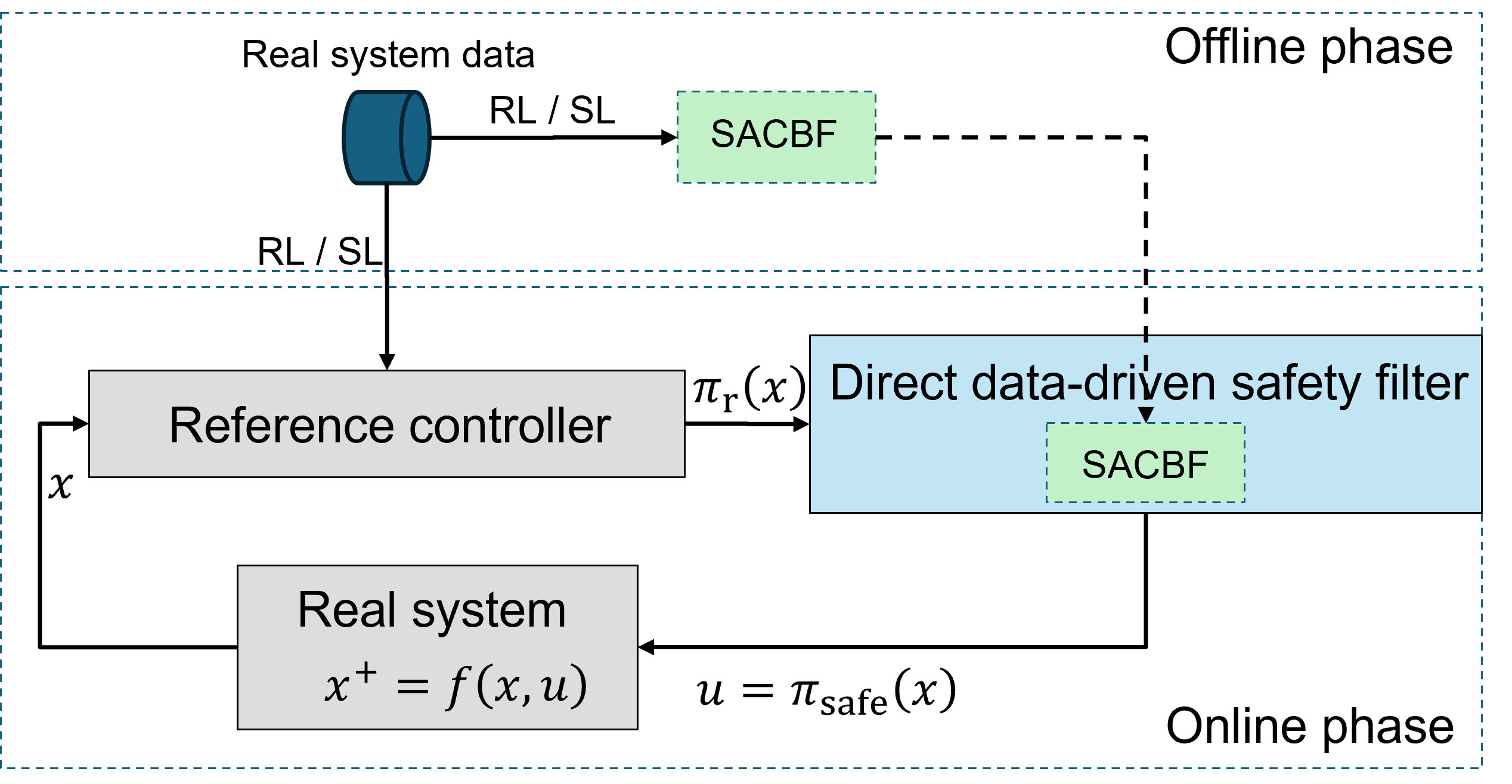}}
	\caption{The comparison between the proposed 3DSF and the existing indirect data-driven counterparts. }
\end{figure}

In this paper, we focus on designing a direct data-driven safety filter (3DSF) based on our previously proposed concept of \emph{state-action control barrier functions} (SACBFs) \cite{he2023state}. \hk{Unlike classical safety certificates, which only evaluate safety in the state space, the SACBF framework enables safety evaluation for each state-action pair, and the safety condition is enforced at the current state and action without propagating the dynamics forward. As a result, the corresponding safety filter can assess the safety of candidate control actions without explicitly querying or simulating the system dynamics.} The comparison between the proposed 3DSF and the existing indirect data-driven counterparts is illustrated in Fig. 1.

\subsection{Contributions}
The paper contributes to the state of the art in the following aspects:

\hk{\noindent(1) We propose a direct data-driven safe control framework based on SACBFs, which combines the advantages of learning-based control (adaptivity to model-free cases) and optimization-based control (explicit inclusion of safety constraints). The core mechanism relies on our previously introduced safety certificate called the SACBF \cite{he2023state}, for which we develop three distinct data-driven synthesis methods, including SL from a known CBF, learning from an expert safe controller, and RL. The framework can be trained and implemented using only state transition data, without system identification, and can integrate with various learning-based controller synthesis methods while providing formal safety guarantees.

\noindent(2) To analyze how learning-induced errors in SACBFs affect overall safety performance, we propose a robust analysis method, called Error-to-State Safety (ESSf). The analysis motivates the approach of state constraint tightening followed by SACBF constraint relaxation to ensure that the learned SACBF-based controller meets safety requirements.

\noindent(3) To enable near-optimal control performance, building on the proposed framework, we present a general strategy to extend classical unconstrained value-based RL algorithms to constrained ones. This design highlights an important feature of our approach: it separates the learning of the controller into two components: optimizing performance through learning the objective function, and ensuring safety through learning the constraints.
}
\section{Preliminaries and problem formulation} \label{pre}

\subsection{Notations}
The set $\mathbb{R}$ ($\mathbb{R}_{\geq0}$) is the set of (nonnegative) real numbers, and the set of real vectors with the dimension $n$ is denoted by $\mathbb{R}^n$. The sets $\mathbb{N}$ and $\mathbb{N}^+$ represent the set of non-negative integers and the set of positive integers, respectively. Besides, $\mathbb{N}_a = \{0,1,...,a\}$, and $\mathbb{N}^+_a = \{1,...,a\}$ for any positive integer $a$. The matrix $I_n$ is the identity matrix with the dimension $n\times n$. The notation $A^\dagger$ represents the right inverse of the matrix $A$. The relation $A \succeq 0$ means that the matrix $A$ is positive semi-definite. The determinant of a matrix $A$ is denoted by $\mathrm{det}(A)$. The dimension of a vector $x$ is denoted by $\mathrm{dim}(x)$. The norm $||x||_2$ is the Euclidian norm of the vector $x$. The number of elements in a finite set $S$ is called its cardinality, denoted by $|S|$. The set $\mathcal{B}_\epsilon(\bar{z}):= \{z\in \mathbb{R}^{n_z}\mid||z-\bar{z}||_2 \leq\epsilon\}$ denotes the closed ball around $\bar{z}$ with radius $\epsilon$. The unit step function $\mathrm{step}(\cdot)$ returns 1 if the input is larger than 0 and returns 0 otherwise. A continuous function $\alpha(\cdot): [0, a) \to [0,\infty)$ for some $a\in (0,\infty]$ is said to belong to class $\mathcal{K}$ if it is strictly increasing and $\alpha(0) = 0$. A class $\mathcal{K}$ function $\alpha$ is said to belong to class $\mathcal{K}_\infty$ if it further satisfies $a = \infty$ and $\alpha(r) \to \infty$ as $r \to \infty$.

\subsection{Problem formulation}
We consider a deterministic discrete-time nonlinear system
\begin{equation}\label{system}
	x_{t+1}=f\left(x_t, u_t\right), \quad t=0,1, \ldots,
\end{equation}
where $x_t \in \mathcal{X} \subseteq \mathbb{R}^{n_x}$ and $u_t \in \mathcal{U}\subseteq \mathbb{R}^{n_u}$ are the state and the input at time step $t$, and $f(\cdot,\cdot):\mathcal{X} \times \mathcal{U} \to \mathcal{X}$ is a continuous function. We consider a constrained optimal control problem, in which the states and inputs should satisfy time-invariant constraints: $x_t \in X := \{x \in \mathcal{X}| h(x) \leq 0\}$ and  $u_t \in U \subseteq \mathcal{U}$ for all time steps. Here, $h(\cdot): \mathcal{X} \to \mathbb{R}$ is a scalar-valued continuous function that defines the state constraint\footnote{For the constraint defined by multiple inequalities $h_i(x) \leq 0,\; i=1,2,...,I$, we can let $h(x) = \max_{i \in \mathbb{N}^+_I} h_i(x)$. The set $\{x| h(x) \leq 0\}$ is identical to $\{x| h_i(x) \leq 0,\;\forall i \in \mathbb{N}^+_I\}$, and $h$ will be continuous if each $h_i$ is continuous.}. The set $U$ is compact. For the convenience of performance analysis and sampling, we require the compactness of $\mathcal{X}$ and $X$. In our framework, we do not assume the knowledge of the explicit form of the system dynamics $f(\cdot,\cdot)$. Instead, we require the availability of transition samples $(x_t, u_t, x_{t+1})$, achieved through simulation or experimental methods. Various sampling strategies, including random sampling and grid-based sampling, may be employed to obtain the transition triples.

For any initial state $x \in \mathcal{X}$ and any deterministic control policy $\pi: \mathcal{X} \to \mathcal{U}$, we define an infinite-horizon cost $J^\pi\left(\cdot \right): \mathcal{X} \to [0,\infty)$:
\begin{equation}\label{value function}
	J^\pi\left(x\right)= \lim _{k \rightarrow \infty}  \sum_{t=0}^k \gamma^t g\left(x_t, \pi\left(x_t\right)\right)\; \mathrm{s.t.} \; \eqref{system} \;\mathrm{and}\;x_0 =x,
\end{equation}
where the stage cost $g$ is non-negative and upper-bounded and $\gamma \in (0,1)$ is a discount factor. The objective is to find an optimal deterministic policy $\pi^*(\cdot) : \mathcal{X} \to \mathcal{U}$ that solves the following infinite-horizon optimal control problem:
\begin{align}\label{infinite}
	J^*\left(x\right) := \inf _{\pi}  J^\pi\left(x\right)\;\mathrm{s.t.}\; h(x_t) \leq 0, \pi(x_t) \in U, t=0, 1, \ldots,
\end{align}
where $J^*(\cdot): \mathcal{X} \to [0,\infty)$ is the optimal value function.

The exact form of the optimal policy $\pi^*$ is difficult to compute due to the following reasons: (i) The number of constraints in \eqref{infinite} is infinite; (ii) The closed-form expressions of the objective function are unknown since the system is unknown and the horizon is infinite. To get an approximation of $\pi^*(\cdot)$, a parameterized policy is usually preferred in literature and then the parameters  updated using RL or SL \cite{bucsoniu2018reinforcement,busoniu2017reinforcement}.

\subsection{Control barrier functions for safety}
To ensure that the learned policy satisfy the constraints in \eqref{infinite}, two different classes of methods, including cost shaping \cite{massiani2022safe} and using safety certificates \cite{agrawal2017discrete,didier2024approximate,ames2016control}, have been developed in recent papers. In contrast to the cost shaping method, which usually does not provide strict safety guarantees, using safety certificates can impose a constraint on the control input to ensure safety. The CBF is one of the most popular safety certificates. 

\begin{definition}[Discrete-time control barrier function \cite{agrawal2017discrete}]\label{CBF definition}
	A function $B(\cdot):\mathcal{X} \to \mathbb{R}$ is called a \emph{control barrier function} (CBF) with a corresponding safe set $\mathcal{S}_B:=\left\{x \in \mathbb{R}^{n_x}| B(x) \leq 0\right\} \subseteq \mathcal{X}$, if $\mathcal{S}_B$ is non-empty, $h(x)\leq B(x),\;\forall x \in \mathcal{X}$, and if there exists a $\beta_B \in [0,1]$ such that
	\begin{equation}\label{CBFex}
		\min _{u \in U} B(f(x, u)) \leq \beta_B B(x),\;\forall x \in \mathcal{X}.
	\end{equation}
\end{definition}

With a CBF available, one can implicitly construct a policy $\pi(\cdot)$ as an optimizer of the following nonlinear optimization problem:
\begin{align}\label{filter}
	\pi(x) :=\arg\min_{u\in U}&\; Q(x,u) \nonumber\\
	\mathrm{s.t.}\;&  B(f(x, u)) \leq\beta_B B(x),
\end{align}
where $Q:\mathcal{X}\times \mathcal{U} \to \mathbb{R}$ is a problem-dependent objective function. For instance, one can specify $Q$ as $||u-\pi^\mathrm{unsafe}(x)||_2$ where $\pi^\mathrm{unsafe}$ is a reference policy that may have some other acceptable performance. With this specification, \eqref{filter} works as a safety filter to refine the potentially unsafe policy  $\pi^\mathrm{unsafe}$. One can also let $Q$ be an approximation of the state-action optimal value function (Q function) of \eqref{infinite}, that is,
\begin{equation}\label{Q function}
	Q^*(x,u):= g(x,u) + \gamma J^*(f(x,u)),
\end{equation}
which is commonly used in RL. In this situation, $\pi$ becomes the greedy policy w.r.t. the optimal value function under a given CBF constraint.

A valid CBF is sufficient to guarantee the safety of $\pi$ \cite{ames2016control}. \hk{However, the main limitation of CBFs is that the knowledge of $f$ is required. Even though many data-driven methods exist for learning CBFs from black-box models, such as learning CBFs by temporal-difference learning \cite{tan2023value} and by policy evaluation \cite{so2024train,zhang2025discrete}, system identification is necessary to implement
the CBF-based safety filter \eqref{filter}. This limitation significantly restricts the application of existing safe filters and CBFs in RL algorithms that do not require an explicit model.} Moreover, the mismatch between the identified model and the real system, along with inaccuracies in learning CBFs, jointly affect the safety performance of $\pi$. 

\subsection{State-action control barrier function}
wWe consider the definition of state-action CBFs, which we have introduced in \cite{he2023state}.
\begin{definition}[State-action control barrier function (SACBF)]\label{SACBF definition}
	A function $Q^B(\cdot,\cdot) : \mathcal{X} \times U \to \mathbb{R}$ is called a \emph{state-action control barrier function} (SACBF) with a corresponding safe set $\mathcal{S}_Q$ of states, if the pair $(Q^B, \; \mathcal{S}_Q)$ satisfies the following conditions:
	
	\noindent(i) $\mathcal{S}_Q$ is non-empty, and $h(x) \leq 0,\; \forall x \in \mathcal{S}_Q$.
	
		\noindent(ii) $\min_{u\in U} Q^B(x,u) \leq 0,\;\forall x\in \mathcal{S}_Q$.
	
		\noindent(iii) For any $x \in \mathcal{S}_Q$, any $u \in U$ satisfying $Q^B(x,u) \leq 0$ ensures that $f(x,u) \in \mathcal{S}_Q$.
\end{definition}

\hk{Unlike a standard CBF, which evaluates the safety level of a state, the SACBF is a function of both the states and the input. This additional dependence on the input allows the SACBF to not only assess whether a state is safe, but also to quantify the safety margin associated with taking a particular action at that state.}

\section{Control parameterization} 
Inspired by the Q function in RL, we propose a new optimization-based control framework that does not contain $f$:
\begin{align}\label{DDfilter}
	\pi(x): =\arg\min_{u\in U}&\; Q(x,u) \nonumber\\
	\mathrm{s.t.}\;&  Q^B(x,u)\leq 0 ,
\end{align}
where $Q$ can be designed similarly to that in \eqref{filter}, and $Q^B(\cdot,\cdot): \mathcal{X}\times \mathcal{U} \to \mathbb{R}$ is an SACBF. 

We have the following properties for \eqref{DDfilter}:
\begin{lemma}[Safety of SACBFs]\label{lemma2}
	Considering the policy $\pi$ in \eqref{DDfilter}, if $Q^B$ is an SACBF with the safe set $\mathcal{S}_Q$, $\pi$ will render \eqref{system} positively invariant in $\mathcal{S}_Q$. As a result, the trajectories of \eqref{system}, starting from $x_0 \in \mathcal{S}_Q$, controlled by $\pi$, satisfy $h(x_t) \leq 0$ and $u_t \in U,\;\forall t\in \mathbb{N}$.
	
\end{lemma}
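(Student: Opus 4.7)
The plan is a straightforward induction on the time index $t$, driven entirely by unpacking the three conditions in Definition 2. The state $x_0 \in \mathcal{S}_Q$ is the base case. Suppose inductively that $x_t \in \mathcal{S}_Q$. I then want to show two things in sequence: first, that $\pi(x_t)$ is well-defined and feasible, and second, that its application sends the state back into $\mathcal{S}_Q$ while preserving the state constraint.

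For feasibility, I would invoke condition (ii) of Definition 2: because $x_t \in \mathcal{S}_Q$, the set $\{u \in U : Q^B(x_t,u) \leq 0\}$ is non-empty. Hence the constraint in the optimization problem \eqref{DDfilter} defining $\pi$ is satisfiable, so an optimal (or at least feasible selected) $u_t = \pi(x_t) \in U$ exists with $Q^B(x_t, u_t) \leq 0$. In particular $u_t \in U$, giving the input-constraint part of the claim. I would then apply condition (iii) directly to $(x_t, u_t)$ to conclude $x_{t+1} = f(x_t, u_t) \in \mathcal{S}_Q$, closing the induction. Condition (i) then yields $h(x_t) \leq 0$ for every $t \in \mathbb{N}$, giving the state-constraint part.

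The only real subtlety, and the step I would flag as the main obstacle, is the existence of the $\arg\min$ in \eqref{DDfilter}. Condition (ii) only guarantees non-emptiness of the feasible set, not that the infimum is attained. To turn $\pi(x)$ into a well-defined map I would either (a) rely on continuity of $Q$ and $Q^B$ together with compactness of $U$ and closedness of the sublevel set $\{u \in U : Q^B(x,u) \leq 0\}$ to apply Weierstrass, or (b) weaken the statement to: select any feasible $u_t$, not necessarily the minimizer. Either route preserves the invariance argument, because the induction only uses the feasibility $Q^B(x_t, u_t) \leq 0$ and not optimality. Apart from this technical point, the argument is essentially a one-line application of condition (iii) inside an induction, so no further calculations are needed.
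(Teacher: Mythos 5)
Your induction is correct and is essentially the same argument the paper relies on (it defers to a straightforward adaptation of the proof of Lemma 1 in \cite{he2023state}, which is exactly this unpacking of conditions (i)--(iii) along the closed-loop trajectory). Your remark about attainment of the $\arg\min$ is a valid technical point that the paper glosses over; either of your fixes (Weierstrass via continuity and compactness, or selecting any feasible $u$) suffices since the argument only uses feasibility, not optimality.
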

The proof can be obtained by a straightforward adaptation of the proof of \cite[Lemma 1]{he2023state}.

\begin{remark}
	Analogous to CBFs, SACBFs are difficult to compute exactly for nonlinear problems. In the remainder of this paper, we will explore using learning-based methods to synthesize the optimization-based controller, i.e., to learn the functions $Q$ and $Q^B$. A tractable way is to parameterize them by $Q_\theta$ and $Q^B_\omega$ with the parameters $\theta$ and $\omega$, and then learn these parameters. Instead of simultaneously updating $Q$ and $Q^B$, we prefer to first learn an SACBF $Q^B_\omega$ (Section IV) and subsequently to integrate it into the learning of the objective function $Q_\theta$ to achieve the optimal control objective (Section VI). This strategy is motivated by the observation that safety is inherently independent of other performance metrics, whereas achieving optimality regarding task performance should be addressed under the premise of ensuring safety.
\end{remark}

The policy $\pi$ in \eqref{DDfilter} with the parameterization $Q_\theta$ and $Q^B_\omega$ is denoted by $\pi_{\theta, \omega}$.

\section{Learning state-action control barrier functions}\label{section learning SACBF}

\hk{In this section, we present three approaches that differ in their assumptions and the strength of the resulting safety guarantees. We begin with the most general method (RL approach), which does not require any assumptions. This method relies on off-the-shelf value-based RL to approximate $Q^B$. However, the learning error may in principle become unbounded. Consequently, the RL approach cannot offer formal safety guarantees. Motivated by this limitation, we then introduce two additional approaches that impose stronger assumptions. The second approach, inspired by \cite{robey2020learning}, employs sampling-based methods to approximate the solution of a robust optimization problem. Such a solution is guaranteed to lead to a valid SACBF. This approach requires prior knowledge of a safety control policy. For the last one, we assume the knowledge of a valid CBF and use SL to obtain an SACBF. These assumptions restrict generality but allow the approximation errors to be explicitly characterized, enabling us to derive provable and rigorous safety guarantees.}

\subsection{Learning SACBFs via RL}\label{section RL SACBF}

We observe that the proposed SACBFs show some similarities with state-action value functions (Q functions) in RL. These similarities motivate us to use RL methods to synthesize SACBFs. We consider the optimal value function of the following optimal control problem:
\begin{align}\label{CBF generator}
	B^*(x) := &\min _{ \{(x_t,u_t)\}_{t=0}^{\infty}} \max _{t \in \mathbb{N}}   h(x_t )  \nonumber\\
	&\quad \mathrm{s.t.}\; (1),\;\;u_t \in U,\;t \in \mathbb{N}, \text{ and }x_0 = x,
\end{align}
which is a typical reachability problem \cite{fisac2019bridging} in the discrete-time domain. The optimal value function $B^*$, which can be obtained by dynamic programming, is a CBF with the safe set being the maximal control-invariant set \cite{choi2021robust}. 

\begin{proposition}\label{CBF relation}
	Under the constraints $h(x)\leq 0$ and $u\in U$, suppose that $B$ is a CBF satisfying \eqref{CBFex} with $\beta_B \in [0,1]$, and that the safe set is $\mathcal{S}_B$. Then, $Q^B$ defined by
	\begin{equation}\label{relation}
		Q^B(x,u):= \max\{h(x),\;\frac{1}{\beta_B} B(f(x,u))\}
	\end{equation}
	is an SACBF with the safe set $\mathcal{S}_Q = \mathcal{S}_B$.
\end{proposition}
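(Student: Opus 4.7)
The plan is to verify the three conditions in the SACBF definition one-by-one with the candidate safe set $\mathcal{S}_Q := \mathcal{S}_B = \{x \mid B(x) \leq 0\}$, using only the CBF properties of $B$ stated in Definition \ref{CBF definition}. So condition (i) is essentially inherited from the CBF: $\mathcal{S}_B$ is non-empty by hypothesis, and the CBF requirement $h(x) \leq B(x)$ on $\mathcal{X}$ combined with $B(x) \leq 0$ on $\mathcal{S}_B$ immediately yields $h(x) \leq 0$ for all $x \in \mathcal{S}_Q$.

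For condition (ii), I would fix $x \in \mathcal{S}_B$ and apply the CBF inequality \eqref{CBFex} to get $\min_{u \in U} B(f(x,u)) \leq \beta_B B(x)$. Since $\beta_B \geq 0$ and $B(x) \leq 0$, the right-hand side is non-positive, so there exists $u^\star \in U$ with $B(f(x,u^\star)) \leq 0$. Dividing by $\beta_B$ (see the obstacle below) gives $\tfrac{1}{\beta_B} B(f(x,u^\star)) \leq 0$, and since $h(x) \leq 0$ as well, I obtain $Q^B(x,u^\star) = \max\{h(x), \tfrac{1}{\beta_B} B(f(x,u^\star))\} \leq 0$, proving $\min_u Q^B(x,u) \leq 0$.

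For condition (iii), I take $x \in \mathcal{S}_B$ and $u \in U$ with $Q^B(x,u) \leq 0$. By definition of $Q^B$, both arguments of the max are non-positive, so in particular $\tfrac{1}{\beta_B} B(f(x,u)) \leq 0$, which gives $B(f(x,u)) \leq 0$ and therefore $f(x,u) \in \mathcal{S}_B = \mathcal{S}_Q$, as required.

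The main (and essentially only) obstacle is the degenerate case $\beta_B = 0$, where the expression $\tfrac{1}{\beta_B} B(f(x,u))$ in \eqref{relation} is ill-defined. I would address this either by restricting the statement to $\beta_B \in (0,1]$, or by adopting the extended-real convention that $\tfrac{1}{0} \cdot r$ equals $-\infty$, $0$, or $+\infty$ according to the sign of $r$; under this convention the argument for (ii) and (iii) still goes through because $\min_u B(f(x,u)) \leq 0$ still guarantees an admissible $u^\star$ with $B(f(x,u^\star)) \leq 0$, and $Q^B(x,u) \leq 0$ still forces $B(f(x,u)) \leq 0$. Everything else in the proof is immediate bookkeeping from the CBF definition, so I do not expect any further technical difficulties.
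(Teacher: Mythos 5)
Your proof is correct and follows essentially the same route as the paper's: verify conditions (i)--(iii) of Definition \ref{SACBF definition} directly, using $h(x)\leq B(x)$ and the CBF inequality \eqref{CBFex} to obtain $h(x) \leq \min_{u\in U} Q^B(x,u) \leq 0$ on $\mathcal{S}_B$, and the sign of $B(f(x,u))$ to get forward invariance. Your explicit handling of the degenerate case $\beta_B = 0$ (where $\tfrac{1}{\beta_B}$ is ill-defined) is a legitimate point that the paper's one-line proof silently skips over.
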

\begin{proof}
	From the relation between $Q^B$ and $B$, we can get that $h(x) \leq  \min_{u\in U} Q^B(x,u) \leq 0,\; \forall x\in \mathcal{S}_B$. The above arguments prove that $Q^B$ obeys (i) and (ii) of Definition \ref{SACBF definition}. Furthermore, for any $x \in \mathcal{S}_B$, if $Q^B(x,u) \leq 0$, we have $B(f(x,u))\leq 0$ and thus $f(x,u) \in \mathcal{S}_B$. This finishes the proof. 
\end{proof}
    
Using Proposition \ref{CBF relation} and letting $\beta_B =1$, we know that the state-action optimal value function of \eqref{CBF generator}, defined by
	\begin{equation}\label{Q safe function}
		Q^{B*}(x,u) := \max\{h(x),\; B^*(f(x,u))\},
	\end{equation}
	is an SACBF, of which the safe set is also the maximal control-invariant set.



%
Combining \eqref{CBF generator} and \eqref{Q safe function}, it is observed that $Q^{B*}$ is one solution of the Bellman optimality equation \cite{busoniu2017reinforcement}:
\begin{equation}\label{bellman}
	Q^{B}(x,u) = \max\{h(x), \min_{u^+\in U} Q^{B}(f(x,u),u^+) \}.
\end{equation}

For general nonlinear systems, accurately computing $Q^{B*}$ from \eqref{CBF generator} and \eqref{Q safe function} is nearly impossible. Similar to the proposed SL method and the proposed expert-guided learning method, an approximation of $Q^{B*}$ is necessary. Since the form of $f$ is unknown, it is not possible to design query-based algorithms and use SL methods to get the approximation. However, by employing \eqref{bellman} and the theoretical results in Proposition \ref{CBF relation}, we can adapt off-the-shelf value-based RL methods such as temporal difference learning methods \cite{bansal2021deepreach,watkins1992q} and neural fitted Q iteration (FQI) \cite{riedmiller2005neural} to approximate $Q^{B*}$. 

Before learning $Q^{B*}$, it is crucial to recognize that \eqref{bellman} may have multiple solutions and that most RL methods based on Bellman iteration could approach any solution of \eqref{bellman}. The following result eliminates our concern by stating that any solution of \eqref{bellman}, satisfying $\min_{x \in \mathcal{X}, u\in U} Q^{B}(x,u) \leq 0$, is an SACBF. 

\begin{proposition}\label{SACBF Bellman}
	Consider the Bellman optimality equation \eqref{bellman}. Any solution $Q^{B}$ satisfying $\min_{x \in \mathcal{X}, u\in U} Q^{B}(x,u) \leq 0$ is an SACBF with the safe set $\mathcal{S}_Q = \{x \in\mathbb{R}^{n_x} | \min_{u\in U} Q^{B}(x,u) \leq 0\}$. Moreover, $Q^{B*}$ is the smallest solution, i,e., $Q^{B*}(x,u) \leq Q^{B}(x,u),\;\forall(x,u)\in \mathcal{X} \times U$ holds for any solution $Q^{B}$. 
\end{proposition}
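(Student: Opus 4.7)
For part (1), the plan is to verify conditions (i)--(iii) of Definition~\ref{SACBF definition} directly from \eqref{bellman}. The equation immediately yields $Q^B(x, u) \geq h(x)$ for every $(x, u) \in \mathcal{X} \times U$, since its right-hand side is an outer maximum containing $h(x)$. Taking the infimum over $u$ preserves this bound, so any $x \in \mathcal{S}_Q$ satisfies $h(x) \leq \min_{u} Q^B(x, u) \leq 0$; non-emptiness of $\mathcal{S}_Q$ is immediate from the hypothesis $\min_{x,u} Q^B(x, u) \leq 0$. This establishes (i), and (ii) is merely the definition of $\mathcal{S}_Q$. For (iii), fix $x \in \mathcal{S}_Q$ and any $u \in U$ with $Q^B(x, u) \leq 0$: the inner minimum on the right of \eqref{bellman} is bounded above by $Q^B(x, u) \leq 0$, so $\min_{u^+} Q^B(f(x, u), u^+) \leq 0$, which is exactly $f(x, u) \in \mathcal{S}_Q$.

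For part (2), I will use a telescoping argument along a nearly greedy trajectory. Fix $(x_0, u_0) \in \mathcal{X} \times U$ and $\epsilon > 0$, and recursively pick $u_{t+1} \in U$ such that
\begin{equation*}
Q^B(x_{t+1}, u_{t+1}) \leq \inf_{u^+ \in U} Q^B(x_{t+1}, u^+) + \epsilon\, 2^{-(t+1)},
\end{equation*}
where $x_{t+1} := f(x_t, u_t)$. Applying \eqref{bellman} at $(x_t, u_t)$ gives $\inf_{u^+ \in U} Q^B(x_{t+1}, u^+) \leq Q^B(x_t, u_t)$, so telescoping the resulting one-step inequalities yields $Q^B(x_t, u_t) \leq Q^B(x_0, u_0) + \epsilon$ for every $t \geq 0$. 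Combined with $h(x_t) \leq Q^B(x_t, u_t)$, this produces $\sup_{t \geq 0} h(x_t) \leq Q^B(x_0, u_0) + \epsilon$. Because the variational identity $Q^{B*}(x_0, u_0) = \inf_{\{u_t\}_{t \geq 1}} \sup_{t \geq 0} h(x_t)$ follows from \eqref{CBF generator} and \eqref{Q safe function}, the constructed control sequence witnesses $Q^{B*}(x_0, u_0) \leq Q^B(x_0, u_0) + \epsilon$, and letting $\epsilon \to 0$ delivers the minimality claim.

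The main obstacle is the possible non-attainability of the infimum in \eqref{bellman}: without continuity of $Q^B$ or a measurable selector an exact greedy policy need not exist, which is why the construction carries a vanishing slack $\epsilon\, 2^{-(t+1)}$ that is summable to $\epsilon$ rather than a direct greedy extension. A minor bookkeeping point is that every iterate remains in the domain of $Q^B$, which is immediate from $f:\mathcal{X}\times \mathcal{U} \to \mathcal{X}$ together with $u_t \in U \subseteq \mathcal{U}$, so each $Q^B(x_{t+1}, \cdot)$ is well defined.
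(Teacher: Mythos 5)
Your proof of the first claim (conditions (i)--(iii) of Definition~\ref{SACBF definition}) follows the same route as the paper's: non-emptiness from the hypothesis, $h(x)\leq\min_{u}Q^B(x,u)\leq 0$ on $\mathcal{S}_Q$ from the outer max in \eqref{bellman}, and invariance from the inner min being dominated by $Q^B(x,u)\leq 0$. For the minimality claim, however, you take a genuinely different and, in one respect, tighter path. The paper argues operator-theoretically: starting from $h(x)\leq Q^B(x,u)$, it applies the monotone Bellman operator $\Gamma_B$ infinitely many times, using $\Gamma_B Q^B=Q^B$ and the asserted identity $\Gamma_B^\infty h=Q^{B*}$. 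That last identity is the convergence of undiscounted value iteration to the optimal value, which for this non-contractive operator is not automatic and is left unjustified in the paper (it requires an interchange of a limit with an infimum over control sequences, valid here under the compactness and continuity assumptions but not stated). Your argument instead builds an explicit $\epsilon$-greedy trajectory, telescopes the one-step inequalities $\inf_{u^+}Q^B(x_{t+1},u^+)\leq Q^B(x_t,u_t)$ with summable slack, and compares against the variational characterization $Q^{B*}(x_0,u_0)=\inf_{\{u_t\}_{t\geq 1}}\sup_{t\geq 0}h(x_t)$, which does follow directly from \eqref{CBF generator} and \eqref{Q safe function} since $h(x_0)$ is independent of the control sequence. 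This sidesteps the operator-convergence question entirely and correctly handles possible non-attainment of the infimum. Both proofs are acceptable; yours is more self-contained on the minimality part, while the paper's is shorter if one is willing to cite abstract dynamic programming results for $\Gamma_B^\infty h=Q^{B*}$.
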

\begin{proof}
It follows from $\min_{x \in \mathcal{X}, u\in U} Q^{B}(x,u) \leq 0$ that $\mathcal{S}_Q$ is non-empty. From \eqref{bellman}, we deduce that $h(x) \leq  \min_{u\in U} Q^{B}(x,u) \leq 0,\; \forall x\in \mathcal{S}_Q$. The above arguments prove that $Q^{B}$ satisfies (i) of Definition \ref{SACBF definition}. The second condition of Definition \ref{SACBF definition} directly follows from $\mathcal{S}_Q = \{x \in\mathbb{R}^{n_x} | \min_{u\in U} Q^{B}(x,u) \leq 0\}$. Finally, if $Q(x,u)\leq 0$, by \eqref{bellman} we have $\min_{u^+\in U} Q^{B}(f(x,u),u^+) \leq 0$, which means that the successor state $f(x,u) \in \mathcal{S}_Q$. 

To prove that  $Q^{B*}$ is the smallest solution, consider any solution $Q^{B}$. It follows from \eqref{bellman} that 
\begin{equation} \label{proof1}
	h(x) \leq Q^{B}(x,u),\;\forall(x,u)\in \mathcal{X} \times U.
\end{equation}
Applying the Bellman operation $\Gamma_B(\cdot) := \max\{h(x), \min_{u^+\in U} (\cdot)(f(x,u),u^+)\}$ to both sides of \eqref{proof1} infinitely many times and noting that $\Gamma_B Q^{B} =  Q^{B}$ and $\Gamma^\infty_B h =  Q^{B*}$, we obtain that $Q^{B*}(x,u) \leq Q^{B}(x,u),\;\forall(x,u)\in \mathcal{X} \times U$. This completes the proof of Proposition \ref{SACBF Bellman}.
\end{proof}

Following the temporal difference learning method in \cite{bansal2021deepreach}, the training loss for the candidate SACBF $Q^B_{\omega}$ is designed as 
\begin{align}\label{TDerror}
	l(\mathcal{D}, \omega) &= l_1(\mathcal{D}, \omega)+ \rho  l_2(\mathcal{D}, \omega), \;\mathrm{with}\nonumber\\
	l_1(\mathcal{D}, \omega) &= \sum_{(x, u, x^+) \in \mathcal{D}}\left.(\max\{h(x), \min_{u^+\in U} Q^B_{\omega}(x^+,u^+) \}\right. \nonumber\\
		&\quad \quad \quad \quad \quad \quad  \left. - Q^B_{\omega}(x,u)\right)^2\nonumber\\
	l_2(\mathcal{D}, \omega) & = \sum_{(x, u,x^+) \in \mathcal{D}}  Q^B_{\omega}(x,u),
\end{align}
where $\mathcal{D} = \{(x^{(i)}, u^{(i)}, f(x^{(i)},u^{(i)}))\}^N_{i=1}$ represents the collection of state transition triples. Intuitively, minimizing the loss $l_1$ encourages $Q^B_{\omega}$ to reduce the temporal difference, thereby approaching the solution of \eqref{bellman}. Meanwhile, as the Bellman equation \eqref{SACBF Bellman} may have multiple solutions, the loss $l_2$ is introduced to guide $Q^B_{\omega}$ toward the smallest solution of \eqref{bellman}, which has the largest safe set. The positive parameter $\rho$ balances the relative importance of the two loss functions.

\begin{remark}
    \hk{Since the reachability problem is undiscounted, most RL algorithms cannot, in general, guarantee convergence to $Q^{B*}$. Tractable approaches to alleviate this issue include introducing a discount factor \cite{fisac2019bridging,so2024train} and considering the reachability to a known invariant set \cite{choi2023forward}. We do not employ these approaches because they either flatten the value function in the safe set or require prior knowledge of an invariant set, which is restrictive in the model-free setting. Addressing this challenge will be a topic for future work.}
\end{remark}

\subsection{Learning SACBFs from an expert controller}  \label{section expert}

The RL approach usually does not have performance guarantees in theory. Towards the goal of learning a valid SACBF, like in \cite{robey2020learning} where CBFs are synthesized from expert demonstrations, we assume the availability of an expert safe controller $\pi_\mathrm{s}(\cdot): \mathcal{X} \to \mathcal{U}$ in this subsection. 

Formally, we have the following assumption in this subsection.
\begin{assumption}\label{A2}
	There exists a continuous policy $\pi_\mathrm{s}$ and a compact set $\mathcal{S}_0 \subseteq X$ ($\mathcal{S}_0$ can be unknown) such that with the initial condition $x_0 \in \mathcal{S}_0$, the state-input trajectories of the system \eqref{system} with $\pi_\mathrm{s}$ always stay in $\mathcal{S}_0 \times U$ and such that the states reach a non-empty target set $\Xtar\subseteq X$ in a finite number of time steps $T$.
\end{assumption}
\hk{Note that we do not require the explicit form of the safe policy. This assumption is reasonable in certain scenarios, such as human-controlled systems (e.g., driving), where safety is achieved by feedback on state variables (e.g., position and velocity) without an explicit model. Other types of expert safe controllers that do not require model knowledge include artificial potential fields (APF) \cite{warren1989global} for obstacle avoidance and safe controllers derived from Hamilton-Jacobi reachability, which have recently been extended to model-free settings \cite{fisac2019bridging}. The target set is introduced to determine whether the sampled trajectories are safe (i.e., satisfy the constraints and reach the target) or not. Once the trajectory of the system enters this set, we assume that the system will operate safely, as is typically assumed in reach-avoid problems \cite{ni2025learning}.}

\begin{remark}
    \hk{It is worth noting that these expert controllers are typically designed with a primary focus on safety and may therefore be overly conservative in terms of performance optimization (i.e., minimizing the infinite-horizon cost \eqref{value function}). By introducing the SACBF framework and the optimization-based controller \eqref{DDfilter}, we can achieve near-optimal performance while preserving safety.}
\end{remark}


Under Assumption \ref{A2}, we formulate the synthesis of $Q^B$ as the following optimization problem:

\begin{subequations}\label{robust optimization}
	\begin{align}
		\min_{Q^B,q}&\; \int_{x \in \mathcal{X}} q(x) dx  \label{robust optimization a}\\
		\text{s.t.}\;&0 \leq q(x),\;\forall x \in \mathcal{X} \setminus \mathcal{S}_0 \label{robust optimization b}\\
		& Q^B(x, \pi_\mathrm{s}(x)) \leq \beta q(x), \forall x \in \mathcal{S}_0 \label{robust optimization c} \\
		& q(f(x,u)) \leq Q^B(x, u), \forall (x,u) \in \mathcal{S}_0 \times U \label{robust optimization d} 
	\end{align}
\end{subequations}
where $Q^B$ and $q(\cdot) : \mathcal{X} \to \mathbb{R}$ are continuous in their domain, and $\beta \in [0,1)$ is a tuning parameter. 

\hk{One relevant question is whether the robust optimization problem \eqref{robust optimization} is actually feasible. In the following proposition, we address this concern by presenting a converse SACBF theorem. This result shows that if a safe policy exists (Assumption \ref{A2}), then an SACBF must also exist. Consequently, the feasibility of problem \eqref{robust optimization} is guaranteed. Moreover, the proposition establishes that any optimal solution to \eqref{robust optimization} yields a valid SACBF with a non-empty safe set.}
\begin{proposition}[Converse SACBFs]\label{expert}
	Under Assumption \ref{A2}, there exists a $\beta \in  [0,1)$ such that the problem \eqref{robust optimization} is feasible and such that any optimal solution pair $(Q^{B*},q^*)$ ensures that $Q^{B*}$ is an SACBF \hk{with a non-empty safe set $\mathcal{S}^*_{Q} =  \{x \in\mathcal{X} | q^*(x) \leq 0\}$.}
\end{proposition}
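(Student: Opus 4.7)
The plan is to prove Proposition \ref{expert} in two phases: first I establish feasibility of \eqref{robust optimization} by exhibiting an explicit pair $(\tilde{Q}^B, \tilde{q})$, and then show that every optimal pair $(Q^{B*}, q^*)$ satisfies the three conditions of Definition \ref{SACBF definition} with the safe set $\mathcal{S}^*_Q = \{x \in \mathcal{X} : q^*(x) \leq 0\}$.

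For feasibility, my candidate for $\tilde{q}$ is a continuous Lyapunov-like function constructed from the expert policy $\pi_\mathrm{s}$. Since $\pi_\mathrm{s}$ makes the compact set $\mathcal{S}_0$ positively invariant under the continuous map $f(\cdot,\pi_\mathrm{s}(\cdot))$ and drives every trajectory into $\Xtar$ within $T$ steps, a converse Lyapunov-type argument produces a continuous function $V:\mathcal{X}\to\mathbb{R}$ that is strictly negative on the interior of $\mathcal{S}_0$, nonnegative on $\mathcal{X}\setminus\mathcal{S}_0$, and satisfies $V(f(x,\pi_\mathrm{s}(x))) \leq \beta V(x)$ on $\mathcal{S}_0$ for some $\beta \in [0,1)$. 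Taking $\tilde{q} := V$ and $\tilde{Q}^B(x,u) := \tilde{q}(f(x,u))$, constraint \eqref{robust optimization d} holds with equality, \eqref{robust optimization c} reduces exactly to the decay property of $V$, and \eqref{robust optimization b} follows from $V \geq 0$ outside $\mathcal{S}_0$. The objective $\int \tilde{q}\,dx$ is finite by continuity of $\tilde{q}$ and compactness of $\mathcal{X}$, and it is in fact strictly negative since $\tilde{q} < 0$ on a set of positive measure inside $\mathcal{S}_0$.

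Feasibility in hand, the optimal cost is strictly negative, hence $q^* \leq 0$ on a set of positive measure and $\mathcal{S}^*_Q \neq \emptyset$. Constraint \eqref{robust optimization b} yields $\mathcal{S}^*_Q \subseteq \mathcal{S}_0 \subseteq X$, so $h(x) \leq 0$ on $\mathcal{S}^*_Q$, verifying condition (i) of Definition \ref{SACBF definition}. Condition (ii) follows immediately from \eqref{robust optimization c}: for any $x \in \mathcal{S}^*_Q$, $Q^{B*}(x,\pi_\mathrm{s}(x)) \leq \beta q^*(x) \leq 0$, so $\min_{u\in U} Q^{B*}(x,u) \leq 0$. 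Condition (iii) follows from \eqref{robust optimization d}: for any $x \in \mathcal{S}^*_Q$ and any $u \in U$ with $Q^{B*}(x,u) \leq 0$, we get $q^*(f(x,u)) \leq Q^{B*}(x,u) \leq 0$, so $f(x,u) \in \mathcal{S}^*_Q$.

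The main obstacle is producing the continuous feasibility witness $V$ that simultaneously achieves negativity on $\mathcal{S}_0$ and geometric decay along $\pi_\mathrm{s}$-trajectories. The naive choice of (negative) time-to-target is only upper-semicontinuous and does not directly qualify; the fix is either to smooth it via a discounted trajectory cost $V(x) = -\epsilon \sum_{t \geq 0} \lambda^t \phi(x_t)$ for a continuous shaping function $\phi \geq 0$ supported on $\mathcal{S}_0$, with $\lambda, \epsilon$ tuned so the resulting decay ratio is strictly less than $1$, or to appeal to a converse Lyapunov theorem for forward invariant compact sets under continuous dynamics. A secondary subtlety is that \eqref{robust optimization b} does not strictly exclude $q^* = 0$ on the common boundary of $\mathcal{S}_0$ and $\mathcal{X}\setminus\mathcal{S}_0$, but such boundary points already lie in $\overline{\mathcal{S}_0}$ and may be absorbed into $\mathcal{S}_0$ without loss of generality, so the invariance argument in condition (iii) closes unchanged.
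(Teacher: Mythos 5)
Your overall architecture matches the paper's: exhibit a feasible pair of the form $\tilde{Q}^B=\tilde q\circ f$, then read conditions (i)--(iii) of Definition \ref{SACBF definition} off the constraints \eqref{robust optimization b}--\eqref{robust optimization d} at an optimal pair. However, two steps are genuinely gapped. First, the feasibility witness is never actually produced: you yourself flag the construction of $V$ as "the main obstacle," and neither of your proposed fixes is carried out (a converse Lyapunov theorem needs asymptotic stability rather than mere forward invariance of $\mathcal{S}_0$, and the discounted-cost candidate $-\epsilon\sum_t\lambda^t\phi(x_t)$ does not obviously satisfy $V(f(x,\pi_\mathrm{s}(x)))\leq\beta V(x)$ for a fixed $\beta<1$ once the trajectory leaves the support of $\phi$). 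The paper closes this concretely: it takes $q=\max\{h,\bar q\}$ with $\bar q$ the signed distance to $\partial\mathcal{S}_0$, defines $\alpha(r)=\sup_{\{-r\leq q\leq 0\}}\bigl(q(f(x,\pi_\mathrm{s}(x)))-q(x)\bigr)$, and uses a class-$\mathcal{K}$ majorant of $\alpha$ to extract $\beta=1-\min_{r>0}\bar\alpha(r)/r\in[0,1)$, following the converse CBF theorem of Ames et al. Note also that the proposition only asserts existence of \emph{some} $\beta\in[0,1)$, so $\beta=0$ together with forward invariance of $\mathcal{S}_0$ already makes \eqref{robust optimization c} hold for the signed-distance witness; exploiting this would let you bypass the decay machinery entirely.

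Second, your non-emptiness argument is a different route from the paper's, and as written it fails: from "$\tilde q<0$ on a positive-measure subset of $\mathcal{S}_0$" you cannot conclude $\int_{\mathcal{X}}\tilde q\,dx<0$, because your witness is only required to be \emph{nonnegative} on $\mathcal{X}\setminus\mathcal{S}_0$ and its positive part there may dominate the integral. The paper instead argues by local improvement: if $q^*>0$ everywhere, then $q'=\min\{q^*,\max\{h,\bar q\}\}$ remains feasible and strictly decreases the objective, contradicting optimality. Your route is salvageable if you force the witness to vanish on $\mathcal{X}\setminus\mathcal{S}_0$ (e.g., replace $\tilde q$ by $\min\{0,\tilde q\}$ and check \eqref{robust optimization b}--\eqref{robust optimization d} survive the truncation), but that step must be stated. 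Finally, your "secondary subtlety" is larger than a boundary issue: \eqref{robust optimization b} permits $q^*=0$ on regions of $\mathcal{X}\setminus\mathcal{S}_0$ far from $\mathcal{S}_0$, and indeed optimality pushes $q^*$ down to $0$ there, so $\mathcal{S}^*_Q\subseteq\mathcal{S}_0$ does not follow from \eqref{robust optimization b}; conditions (i)--(iii) are only verified on $\mathcal{S}^*_Q\cap\mathcal{S}_0$ since \eqref{robust optimization c}--\eqref{robust optimization d} are quantified over $\mathcal{S}_0$. The paper's own proof glosses over the same point, but "absorbing into $\mathcal{S}_0$ without loss of generality" does not address it.
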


\begin{proof}
	The proof consists of two parts. In the first part, we will prove the feasibility of \eqref{robust optimization}. By fixing $Q(x,u) = q(f(x,u))$, which satisfies \eqref{robust optimization d}, the constraint \eqref{robust optimization c} becomes $q(f(x,\pi_\mathrm{s}(x))) \leq \beta q(x),\;\forall x \in \mathcal{S}_0$. 
	
	Since $\mathcal{S}_0$ is compact, there exists a continuous function $q$ such that \eqref{robust optimization b} and
	\begin{align}\label{level set}
		\mathcal{S}_0 & =\left\{x \in \mathcal{X}| q(x) \leq 0\right\}, \nonumber\\
		\partial \mathcal{S}_0 & =\left\{x \in \mathcal{X}| q(x)=0\right\}, \nonumber\\
		\operatorname{Int}(\mathcal{S}_0) & =\left\{x \in \mathcal{X}| q(x)<0\right\}
	\end{align}
hold. Actually, the above statement can be proved by considering $q(x) = \max\{h(x),\;\bar q(x)\}$, where $\bar{q}$ is a distance function defined by
	\begin{equation}\label{distance}
		\bar{q}(x)  = \begin{cases}-\inf _{y \in \partial \mathcal{S}_0}\|x-y\|\text{ if } x \in \mathcal{S}_0 \\ \inf _{y \in \partial \mathcal{S}_0}\|x-y\|\text{ otherwise}.\end{cases}
	\end{equation}
	
	With the above definitions, we follow the proof of the Converse CBF Theorem in the continuous time domain \cite[Proposition 3]{ames2016control} to prove the feasibility of \eqref{robust optimization}. In particular, for any continuous $q$ such that \eqref{robust optimization b} and \eqref{level set} hold, define a function $\alpha(\cdot):[0,\infty) \to \mathbb{R}$ by
	\begin{equation}\label{alpha}
		\alpha(r) = \sup_{\{x\in\mathcal{X}| -r \leq q(x) \leq 0\}} q(f(x,\pi_\mathrm{s}(x))) - q(x).
	\end{equation}
	Since $\{x\in\mathcal{X}| -r \leq q(x) \leq 0\}$ is compact and $q$, $f$, and $\pi_\mathrm{s}$ are continuous, $\alpha$ is well-defined and non-decreasing, and satisfies $q(f(x,\pi_\mathrm{s}(x))) - q(x) \leq \alpha (-q(x)),\;\forall x \in \mathcal{S}_0$.
	
	When $x$ is such that $q(x)=0$, i.e., $x \in \partial \mathcal{S}_0$, by the invariance of $\mathcal{S}_0$, we have $q(f(x,\pi_\mathrm{s}(x))) \leq 0$, and consequently $\alpha (0) \leq 0$. Meanwhile, by taking the extreme values $0$ and $-r$ of $q(f(x,\pi_\mathrm{s}(x)))$ and $q(x)$ respectively, we have $\min_{r>0}\frac{\alpha(r)}{r} \leq \min_{r>0}\frac{0-(-r)}{r} =1$, which, combined with $\alpha (0) \leq 0$, implies that there exists a class $\mathcal{K}$ function $\bar{\alpha}$ upper-bounding $\alpha$ and satisfying 
	\begin{align}\label{key}
		&q(f(x,\pi_\mathrm{s}(x)))-q(x) \leq \bar\alpha(-q(x)), \forall x \in \mathcal{S}_0 ,\nonumber\\
		&\min_{r>0}\frac{\bar \alpha(r)}{r} \leq 1.
	\end{align}
Letting
\begin{equation}\label{key}
	\beta = 1 - \min_{r>0}\frac{\bar \alpha(r)}{r} \in [0,1)
\end{equation}
proves the feasibility of \eqref{robust optimization}.
	
	In the second part of the proof, we will show that the optimal solution to \eqref{robust optimization} yields an SACBF. To this end, we first establish the non-emptiness of $\mathcal{S}^*_{Q}$ by contradiction. Suppose that $\mathcal{S}^*_{Q}$ is empty, i.e., $q^*(x) > 0,\;\forall x \in \mathcal{X}$. Consider a new function $q'(\cdot):\mathcal{X} \to \mathbb{R}$ defined by $q'(x) =  \min\left\{q^*(x),\;\max\{h(x),\;\bar{q}(x)\}\right\}$, where $\bar{q}$ is defined by \eqref{distance}. Based on the proof of the feasibility of \eqref{robust optimization}, $\max\{h(x),\;\bar{q}(x)\}$ satisfies all the constraints in \eqref{robust optimization}. Therefore, we have
	\begin{equation*}
		q'(x^+) \!\leq\! \min\left\{ \beta q^*(x), \beta\max\{h(x),\bar{q}(x)\}\right\} \!\leq \!\beta q'(x),\;\forall x \in \mathcal{S}_0
	\end{equation*}
	with $x^+ =  f(x,\pi_\mathrm{s}(x))$. This implies that the new function $q'$ satisfies all the constraints in \eqref{robust optimization}. Moreover, since the zero sub-level set of $\max\{h,\;\bar{q}\}$ is non-empty, the zero sub-level set of $q'$ is non-empty as well. Furthermore, it strictly holds that $\int_{x\in \mathcal{X}} q'(x) dx < \int_{x\in \mathcal{X}} q^*(x) dx$ because (i) $q'(x) \leq q^*(x),\;\forall x \in \mathcal{X}$, and (ii) $q^*>0$ and $q'\leq 0$ in the zero sub-level set of $q'$. These arguments prove that the new function $q'$ provides a better solution than $q^*$, which contradicts the optimality of $q^*$.
	
	The condition (i) of Definition \ref{SACBF definition} holds because of \eqref{robust optimization b} and the non-emptiness of $\mathcal{S}^*_{Q}$. Moreover, $\forall x \in \mathcal{S}^*_{Q}$, it follows from \eqref{robust optimization c} that $\min_{u\in U} Q^{B*} (x,u) \leq Q^{B*} (x,\pi_\mathrm{s}(x)) \leq \beta q(x) \leq 0$, meaning that the condition (ii) holds. Furthermore, \eqref{robust optimization d} enforces the condition (iii). These complete the proof that $Q^{B*}$ is an SACBF.
\end{proof}

For computational tractability, $Q^B$ and $q$ in \eqref{robust optimization} are parameterized as $Q^B_\omega$ and $q_\omega$, with all the parameters condensed in $\omega$. As a result, the search space becomes the parameter space of $\omega$. To solve \eqref{robust optimization}, we use state transition samples to relax the constraints that should hold in the continuous state(-input) space to constraints that only need to hold for the samples. 

In particular, let $\{x^{(i)}_0\}^{N}_{i=1}$ denote the set of initial states. Here, $N$ is the number of samples. For each $x^{(i)}_0$, we apply $\pi_\mathrm{s}$ to \eqref{system} for $T$ time steps and get the state-input trajectory $\left\{(x^{(i)}_t,\pi_\mathrm{s}(x^{(i)}_t))\right\}^{T}_{t=0}$. If this trajectory obeys $x^{(i)}_t \in X,\;\pi_\mathrm{s}(x^{(i)}_t) \in U,\;\forall t\in \mathbb{N}_T$, and $x^{(i)}_T \in \Xtar$, then $x^{(i)}_t \in \mathcal{S}_0,\; \forall t\in \mathbb{N}_T$. Otherwise, $x^{(i)}_t \notin \mathcal{S}_0,\; \forall t\in \mathbb{N}_T$. The above checking procedure allows us to separate the state trajectories according to whether or not they are in $\mathcal{S}_0$, without knowing $\mathcal{S}_0$. Then, we define the set $\mathcal{S}_\mathrm{s}: = \{ x^{(i)}_t | x^{(i)}_t \in X,\;\pi_\mathrm{s}(x^{(i)}_t) \in U,\;\forall t\in \mathbb{N}_T$, $x^{(i)}_T \in \Xtar,\; \forall i \in \mathbb{N}^+_N\}$, which contains the safe state trajectories. Meanwhile, we define $\mathcal{X}_\mathrm{s}:=\{x^{(i)}_t\}^{N\;\;\; T}_{i=1\;t=0}$.

For the optimization problem \eqref{robust optimization}, we replace the infinite sets $\mathcal{S}_0$, $\mathcal{X}$, and $\mathcal{S}_0 \times U$ by the finite sets $\mathcal{S}_\mathrm{s}$, $\mathcal{X}_\mathrm{s}$, and $\mathcal{S}_\mathrm{s}\times U_\mathrm{s}$, respectively. Here $U_\mathrm{s}$ is the set of inputs sampled in $U$ through some sampling strategy\footnote{\hk{We do not assume i.i.d. samples.}}. As a result, \eqref{robust optimization} is simplified to a constrained optimization problem with a finite number of constraints. If $Q^B_{\omega}$ and $q$ are linear in $\omega$, the problem \eqref{robust optimization} becomes linear and can be solved efficiently. In a more general case such as when these functions are represented by deep neural networks, a tractable solution involves transforming the constraints into soft penalties and adding them into the objective.

\hk{Note that the proposed method reduces conservatism through its optimization objective, because the objective of minimizing $q$ contributes to enlarging the safe set.}

\begin{remark}
	\hk{The above sampling-based approach does not guarantee the validity of the approximated solution. \cite{robey2020learning}, which studies learning CBFs, suggests leveraging the Lipschitz properties of $f$ and the approximated solution for principled constraint tightening. However, estimating the Lipschitz constants when $Q^B$ is parametrized by a neural network is non-trivial, and obtaining non-conservative Lipschitz constants for the unknown function $f$ is questionable. To address this limitation, we will propose a new tightening approach that exploits the inherent robustness of the SACBF. This will be elaborated on in Section V.}
\end{remark}

\subsection{Supervised learning SACBFs from CBFs} \label{section supervised}
At the end of this section, we present the most sample-efficient learning-based method for obtaining an SACBF, based on a rather restrictive assumption that we have a CBF at hand. 

\begin{assumption}\label{A1}
	A CBF $B(\cdot):\mathcal{X} \to \mathbb{R}$ is known.
\end{assumption}

\hk{In certain applications, such as adaptive cruise control \cite{ames2016control}, CBFs can be manually designed based on state constraints, such as the distance between adjacent vehicles. Moreover, there are mature techniques for synthesizing valid CBFs without model information \cite{tan2023value,so2024train,zhang2025discrete}. However, as noted in the introduction, such CBFs cannot be directly used to design a safety filter when the model is not fully known.}

Proposition \ref{CBF relation} implies a straightforward way to learn an SACBF. Initially, a collection of transition triples $\{(x^{(i)}, u^{(i)}, f(x^{(i)},u^{(i)}))\}^N_{i=1}$, is generated from the state space $\mathcal{X}$ and the control space $U$. Subsequently, the labels $Q^B(x^{(i)},u^{(i)})$ can be computed utilizing \eqref{relation}. Following this, a regression model $Q^B_\omega$ is constructed and trained to minimize the empirical loss, specifically the mean squared error between $Q^B_\omega$ and the computed labels. 

\begin{remark}
    \hk{The SL method is intended for scenarios where a valid CBF is already available, but the model is unknown. In such a case, even when a CBF is known, traditional safety filter design based on CBFs still depends on a nominal model (typically obtained through system identification) to enforce the invariance condition \cite{wabersich2023data}. In contrast, the proposed SL method offers an alternative by learning a scalar SACBF to enforce invariance directly, avoiding the need for system identification.}
\end{remark}

\section{Performance analysis and guarantee under learning errors} \label{section performance}

In the previous section, we have introduced three different learning-based methods for computing an SACBF. However, learning errors are unavoidable and can arise from a combination of factors including insufficient data, loss function mismatch, and suboptimal optimization. These errors have the potential to invalidate the learned SACBF. To address this problem, we propose a systematic analysis to evaluate the robust safety performance of \eqref{DDfilter} when it includes an inaccurate approximation $Q^B_\omega$. This analysis further leads to a practical approach for handling learning errors through state constraint tightening followed by SACBF constraint relaxation.

Our analysis is inspired by the concept of input-to-state safety (ISSf) \cite{kolathaya2018input}, which was originally developed for studying set invariance in the presence of disturbances. 
Before introducing the framework, we point out that the analysis of safety performance can be applied to the SL method in Section \ref{section supervised} and the expert-guided learning method in Section \ref{section expert}, while the RL method in Section \ref{section RL SACBF} is excluded. We will explain the reason for this exclusion at the end of this section. 

First, we relax the constraint in the original optimization-based controller \eqref{DDfilter}, based on the intuition that it may no
longer be possible to render \eqref{DDfilter} recursively feasible in the presence of learning errors. Define a control policy with a relaxed SACBF constraint as
\begin{align}\label{DDfilter_relaxed}
	\pi^\mathrm{r}_{\theta,\omega}(x) :=\arg\min_{u\in U}&\; Q_\theta(x,u) \nonumber\\
	\mathrm{s.t.}\;&  Q^B_\omega(x,u)\leq \kappa(\varepsilon),
\end{align}
where $\kappa(\cdot)$ is a $\mathcal{K}_\infty$ function and $\varepsilon\geq 0$. The notation $\varepsilon$ will be used to quantify the learning error, and $\kappa$ will be elucidated later.

Then, similar to how CBFs are extended to ISSf CBFs \cite{kolathaya2018input}, we introduce error-to-state safety SACBFs (ESSf SACBFs), capturing the set invariance when $Q^B_\omega$ is different from $Q^B$. 

%
%
%

\begin{definition}[$\varepsilon$-ESSf SACBF]\label{ESSf SACBF definition}
	A function $\hat Q^B(\cdot,\cdot) : \mathcal{X} \times U \to \mathbb{R}$ is called an \emph{$\varepsilon$-error-to-state safe SACBF} ($\varepsilon$-ESSf SACBF) with a corresponding safe set $\hat{\mathcal{S}}_Q$, if there exists a $\mathcal{K}_\infty$ function $\kappa(\cdot)$ such that the pair $(\hat Q^B, \; \hat{\mathcal{S}}_Q)$ satisfies the following conditions:
	
		\noindent(i) $\hat{\mathcal{S}}_Q$ is non-empty. 
	
		\noindent(ii) $\min_{u\in U} \hat Q^B(x,u) \leq \kappa(\varepsilon) ,\;\forall x\in \hat{\mathcal{S}}_Q$.
	
		\noindent(iii) For any $x \in \hat{\mathcal{S}}_Q$, any $u \in U$ satisfying $\hat Q^B(x,u) \leq \kappa(\varepsilon) $ ensures that $f(x,u) \in \hat{\mathcal{S}}_Q$.
\end{definition}
\hk{From Definitions \ref{SACBF definition} and \ref{ESSf SACBF definition}, it is clear that if $\hat Q^B$ is an $\varepsilon$-ESSf SACBF, then $\hat Q^B - \kappa(\varepsilon)$ is an SACBF.} 

Now, we are ready to give our main results on the ESSf property of $Q^B_\omega$ learned by the two methods presented in Sections \ref{section supervised} and \ref{section expert}.

\begin{theorem}[ESSf for the SL method]\label{performance theorem1}
	Under Assumption \ref{A1}, consider the controlled system $x_{t+1} = f(x_t, \pi^\mathrm{r}_{\theta,\omega}(x_t) )$, an SACBF $Q^B$, which is induced by a CBF $B$ from \eqref{relation} with $\beta_B<1$, and an approximation $Q^B_\omega$ of $Q^B$. Suppose that 
	\begin{equation}\label{error bound1}
				|Q^B(x,u) - Q^B_\omega(x,u)| \leq \varepsilon,\;\forall (x,u) \in \mathcal{X}\times U
    \end{equation}
	 holds. Letting $\kappa(\varepsilon)= \frac{1+\beta_B}{1-\beta_B} \varepsilon$, we have:
	
		\noindent(i) The approximation $Q^B_\omega$ is an $\varepsilon$-ESSf SACBF. The corresponding safe set is $\mathcal{S}_\omega =  \{x \in \mathcal{X}| B(x) \leq \frac{2 \beta_B }{1-\beta_B} \varepsilon \}$.
	
		\noindent(ii) The optimization-based controller \eqref{DDfilter_relaxed} is recursively feasible with the initial condition $x \in \mathcal{S}_\omega$, i.e., $\mathcal{S}_\omega$ is forward invariant for the controlled system $x_{t+1} = f(x_t, \pi^\mathrm{r}_{\theta,\omega}(x_t) )$.
	
		\noindent(iii) Furthermore, if $B$ is the CBF for \eqref{system} under the \emph{tightened} state constraint $ x \in X_\varepsilon := \{x \in \mathbb{R}^n | h(x) + \frac{2 \beta_B }{1-\beta_B} \varepsilon  \leq 0\}$, the controlled system satisfies the original constraints $x_t \in X$ and $u_t \in U$, $\forall t\in \mathbb{N}$.
\end{theorem}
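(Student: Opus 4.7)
My plan is to verify the three items in sequence, with (ii) and (iii) following essentially as corollaries of (i). The key algebraic ingredients throughout are: the uniform error bound \eqref{error bound1}, which sandwiches $Q^B_\omega$ between $Q^B-\varepsilon$ and $Q^B+\varepsilon$; the explicit form $Q^B(x,u)=\max\{h(x),\frac{1}{\beta_B}B(f(x,u))\}$ coming from \eqref{relation}; and the defining CBF properties $h(x)\leq B(x)$ on $\mathcal{X}$ and $\min_{u\in U} B(f(x,u))\leq \beta_B B(x)$.

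For (i), I would check the three conditions of Definition \ref{ESSf SACBF definition} in order. Non-emptiness of $\mathcal{S}_\omega$ is immediate since $\mathcal{S}_B\subseteq \mathcal{S}_\omega$. For the second condition, I would first combine the CBF decrease with $h(x)\leq B(x)$ to obtain the pointwise inequality $\min_{u\in U} Q^B(x,u)\leq B(x)$ on all of $\mathcal{X}$; substituting $x\in\mathcal{S}_\omega$ bounds the right-hand side by $\frac{2\beta_B}{1-\beta_B}\varepsilon$, and absorbing an additional $\varepsilon$ via \eqref{error bound1} yields exactly $\kappa(\varepsilon)=\frac{1+\beta_B}{1-\beta_B}\varepsilon$. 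For the third condition, if $u\in U$ satisfies $Q^B_\omega(x,u)\leq \kappa(\varepsilon)$, then $Q^B(x,u)\leq \kappa(\varepsilon)+\varepsilon=\frac{2}{1-\beta_B}\varepsilon$, and the $\max$ structure of $Q^B$ immediately gives $\frac{1}{\beta_B}B(f(x,u))\leq \frac{2}{1-\beta_B}\varepsilon$, which is equivalent to $f(x,u)\in\mathcal{S}_\omega$.

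Item (ii) is then a direct consequence of what was established in (i): the second condition guarantees that \eqref{DDfilter_relaxed} is feasible at every $x\in\mathcal{S}_\omega$, so the minimizer $\pi^\mathrm{r}_{\theta,\omega}(x)$ exists and automatically satisfies $Q^B_\omega(x,\pi^\mathrm{r}_{\theta,\omega}(x))\leq \kappa(\varepsilon)$; the third condition then propagates membership in $\mathcal{S}_\omega$ one step forward, and an induction closes the loop. For (iii), the hypothesis that $B$ is a CBF for the tightened constraint $\tilde h(x):= h(x)+\frac{2\beta_B}{1-\beta_B}\varepsilon\leq 0$ gives $\tilde h(x)\leq B(x)$ on $\mathcal{X}$; combining this with the defining inequality $B(x)\leq \frac{2\beta_B}{1-\beta_B}\varepsilon$ of $\mathcal{S}_\omega$ cancels the offset and produces $h(x)\leq 0$. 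Forward invariance from (ii) then yields $x_t\in\mathcal{S}_\omega\subseteq X$ for all $t\in\mathbb{N}$, while $u_t\in U$ is built directly into \eqref{DDfilter_relaxed}.

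The main conceptual obstacle I anticipate is not computational but bookkeeping: the three constants $\frac{2\beta_B}{1-\beta_B}$ (enlargement of the safe set), $\kappa(\varepsilon)=\frac{1+\beta_B}{1-\beta_B}\varepsilon$ (constraint relaxation), and $\frac{2}{1-\beta_B}\varepsilon$ (induced bound on the true $Q^B$) must balance exactly so that one step of CBF contraction by the factor $\beta_B$, combined with two independent $\varepsilon$-errors (one incurred when selecting a feasible input, one when propagating the successor state through $Q^B_\omega$ back to $B$), does not push the state outside $\mathcal{S}_\omega$. The specified $\kappa$ and $\mathcal{S}_\omega$ are precisely the fixed point of this propagation, and verifying that the iteration closes is the only delicate step in the argument.
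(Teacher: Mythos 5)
Your proposal is correct and follows essentially the same route as the paper's proof: non-emptiness via $\mathcal{S}_B\subseteq\mathcal{S}_\omega$, feasibility from $\min_{u\in U}Q^B(x,u)\leq B(x)$ plus the $\varepsilon$-bound, and invariance from the $\max$ structure of \eqref{relation} (your step $Q^B(x,u)\leq\kappa(\varepsilon)+\varepsilon=\tfrac{2}{1-\beta_B}\varepsilon$ followed by extracting the $\tfrac{1}{\beta_B}B(f(x,u))$ branch is the same calculation as the paper's $B(f(x,u))\leq\beta_B Q^B_\omega(x,u)+\beta_B\varepsilon$, in a different order). The bookkeeping of the three constants closes exactly as you anticipate, and parts (ii) and (iii) are handled as in the paper.
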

\begin{proof}
	According to Lemma 1, $\mathcal{S}_Q = \mathcal{S}_B$. The non-emptiness of $\mathcal{S}_\omega$ follows directly from the non-emptiness of $\mathcal{S}_B$. Then, we will prove the feasibility of \eqref{DDfilter_relaxed} for any $x \in \mathcal{S}_\omega$ (Condition (ii) of Definition \ref{ESSf SACBF definition}). From \eqref{relation} and using the properties of CBFs, we have $\min_{u \in U} Q(x,u) \leq \max\{h(x),\;B(x)\} \leq B(x),\;\forall x \in \mathcal{X}$. For all $x \in \mathcal{S}_\omega$, it follows from \eqref{error bound1} that 
	\begin{align*}
		\min_{u \in U} Q^B_\omega(x,u) &\leq \min_{u \in U} Q^B(x,u) + \varepsilon\\
		&\leq B(x) + \varepsilon \leq \underbrace {\frac{{1 + {\beta _B}}}{{1 - {\beta _B}}}\varepsilon }_{\kappa(\varepsilon)}.
	\end{align*}
	
	Next, let $x \in \mathcal{S}_\omega$ and $u$ be any input satisfying $u\in U$ and $Q^B_\omega(x,u) \leq \kappa(\varepsilon)$. We have 
	\begin{align*}
		B(f(x,u)) \leq \beta_B Q^B(x,u) \leq  \beta_B Q^B_\omega(x,u) + \beta_B \varepsilon \leq \frac{2 \beta_B }{1-\beta_B} \varepsilon.
	\end{align*}
	The above inequalities prove the forward invariance of the set $\mathcal{S}_\omega$ and the satisfaction of Condition (iii) of Definition \ref{ESSf SACBF definition}. The above arguments prove the statements (i) and (ii).
	
	Furthermore, if $B$ is a CBF under the \emph{tightened} constraint $ x \in X_\varepsilon$, we derive that $B(x) - \frac{2 \beta_B }{1-\beta_B} \varepsilon \geq h(x)$, which further implies that the safe set $\mathcal{S}_\omega$ of the learned SACBF $Q^B_\omega$ is contained in the original state constraint set $X$. As $\mathcal{S}_\omega$ is forward invariant, the infinite-time safety of the controlled system follows. 
\end{proof}
\begin{theorem}[ESSf for the expert-guided learning method]\label{performance theorem2}
	Under Assumption \ref{A2}, consider the controlled system $x_{t+1} = f(x_t, \pi^\mathrm{r}_{\theta,\omega}(x_t) )$ and approximations $Q^B_\omega$ and $q_\omega$ of $Q^B$ and $q$ in \eqref{robust optimization}. Suppose that 
	\begin{subequations}\label{condition2}
		\begin{align}
			&Q^B_\omega(x,\pi_\mathrm{s}(x)) \leq \beta q_\omega(x) + \varepsilon,\;\forall x \in \mathcal{S}_\omega \label{error bound2 a} \\
			&q_\omega(f(x,u)) \leq Q^B_\omega(x,u) + \varepsilon,\;\forall (x,u) \in \mathcal{S}_\omega\times U \label{error bound2 b} \\
			& \hk{\min_{x\in \mathcal{X}} q_\omega (x) \leq  \frac{2  }{1-\beta} \varepsilon \label{error bound2 c}}
		\end{align}
	\end{subequations}
	 holds. Letting $\kappa(\varepsilon)= \frac{1+\beta }{1-\beta} \varepsilon$, we have:
	
		\noindent(i) The approximation $Q^B_\omega$ is an $\varepsilon$-ESSf SACBF. The corresponding safe set is $\mathcal{S}_\omega =  \{x \in \mathbb{R}^n | q_\omega (x)  \leq  \frac{2  }{1-\beta} \varepsilon \}$.
	
		\noindent(ii) The optimization-based controller \eqref{DDfilter_relaxed} is recursively feasible with the initial condition $x \in \mathcal{S}_\omega$, i.e., $\mathcal{S}_\omega$ is forward invariant for the controlled system $x_{t+1} = f(x_t, \pi^\mathrm{r}_{\theta,\omega}(x_t) )$.
	
		\noindent(iii) Furthermore, if $h(x) + \frac{2 }{1-\beta} \varepsilon  \leq q_\omega(x),\;\forall x \in \mathcal{X} $, the controlled system satisfies the original constraints $x_t \in X$ and $u_t \in U$, $\forall t\in \mathbb{N}$.
\end{theorem}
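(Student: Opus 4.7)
My plan is to mirror the proof of Theorem \ref{performance theorem1} by verifying the three conditions of Definition \ref{ESSf SACBF definition} for $Q^B_\omega$ with the announced safe set $\mathcal{S}_\omega$, which directly yields part (i); parts (ii) and (iii) then follow as in the previous theorem. The hypotheses \eqref{error bound2 a}--\eqref{error bound2 c} play, respectively, the roles of the approximation bound \eqref{error bound1} and the structural identity \eqref{relation} used before; the constants $\frac{2}{1-\beta}\varepsilon$ (the level defining $\mathcal{S}_\omega$) and $\kappa(\varepsilon)=\frac{1+\beta}{1-\beta}\varepsilon$ are chosen precisely so that composing \eqref{error bound2 a} with \eqref{error bound2 b} closes back onto $\mathcal{S}_\omega$.

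Non-emptiness of $\mathcal{S}_\omega$ is immediate from \eqref{error bound2 c}, since any $\frac{2}{1-\beta}\varepsilon$-sublevel point of $q_\omega$ belongs to $\mathcal{S}_\omega$. For Definition \ref{ESSf SACBF definition}(ii) I would use $\pi_\mathrm{s}(x)$ as a feasible input: for any $x\in \mathcal{S}_\omega$,
\begin{equation*}
\min_{u\in U} Q^B_\omega(x,u) \leq Q^B_\omega(x,\pi_\mathrm{s}(x)) \leq \beta q_\omega(x) + \varepsilon \leq \frac{2\beta}{1-\beta}\varepsilon + \varepsilon = \kappa(\varepsilon),
\end{equation*}
where the second inequality is \eqref{error bound2 a} and the third uses the defining inequality of $\mathcal{S}_\omega$. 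For Definition \ref{ESSf SACBF definition}(iii), given $x\in \mathcal{S}_\omega$ and $u\in U$ with $Q^B_\omega(x,u)\leq \kappa(\varepsilon)$, \eqref{error bound2 b} gives
\begin{equation*}
q_\omega(f(x,u)) \leq Q^B_\omega(x,u) + \varepsilon \leq \kappa(\varepsilon) + \varepsilon = \frac{2}{1-\beta}\varepsilon,
\end{equation*}
so $f(x,u)\in \mathcal{S}_\omega$. This finishes part (i). Part (ii) is then a restatement: $\pi^\mathrm{r}_{\theta,\omega}(x)$ is feasible in \eqref{DDfilter_relaxed} at every $x\in \mathcal{S}_\omega$ and its successor remains in $\mathcal{S}_\omega$. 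For part (iii), the pointwise tightening $h(x)+\frac{2}{1-\beta}\varepsilon \leq q_\omega(x)$ combined with $q_\omega(x)\leq \frac{2}{1-\beta}\varepsilon$ on $\mathcal{S}_\omega$ forces $h(x)\leq 0$ there, so $\mathcal{S}_\omega \subseteq X$; input satisfaction is built into \eqref{DDfilter_relaxed}.

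The main obstacle is bookkeeping rather than analysis: the safe set is specified through the \emph{learned} $q_\omega$ rather than through an a priori certificate, so I must read \eqref{error bound2 a} and \eqref{error bound2 b} as assumptions valid on this very set and then check that the induced dynamics never escape it. The whole chain hinges on the two arithmetic identities $\beta\cdot\tfrac{2}{1-\beta}+1=\tfrac{1+\beta}{1-\beta}$ and $\tfrac{1+\beta}{1-\beta}+1=\tfrac{2}{1-\beta}$, which are exactly what force the telescoping to close at the boundary of $\mathcal{S}_\omega$. No continuity or Lipschitz machinery is required, in sharp contrast to the sampling-to-continuum reasoning employed in Proposition \ref{expert}.
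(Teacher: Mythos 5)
Your proposal is correct and follows essentially the same route as the paper's proof: feasibility on $\mathcal{S}_\omega$ via the expert action $\pi_\mathrm{s}(x)$ together with \eqref{error bound2 a}, invariance via \eqref{error bound2 b} and the identity $\kappa(\varepsilon)+\varepsilon=\tfrac{2}{1-\beta}\varepsilon$, and the tightening argument for part (iii) inherited from Theorem \ref{performance theorem1}. Your explicit appeal to \eqref{error bound2 c} for non-emptiness is a small but welcome addition that the paper leaves implicit.
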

\begin{proof}
	 Consider any $x\in  \mathcal{S}_\omega$, we have
		\begin{align*}
			\min _{u \in U} Q_\omega^B(x, u) \leq Q_\omega^B(x, \pi_\mathrm{s}(x))\leq \beta q_\omega(x) +\varepsilon  \leq \underbrace{\frac{1+\beta}{1-\beta} \varepsilon}_{\kappa(\varepsilon)}, 
		\end{align*}
	which proves the feasibility of \eqref{DDfilter_relaxed}. 
	
	Moreover, consider any $x \in \mathcal{S}_\omega$ and any $u\in U$ satisfying $Q^B_\omega(x,u) \leq \kappa(\varepsilon)$. We have 
	\begin{align*}
		q_\omega (f(x,u)) \leq Q^B_\omega(x,u) + \varepsilon \leq \frac{2}{1-\beta} \varepsilon.
	\end{align*}
     The rest of the proof can be completed using the same reasoning as that of the proof of Theorem \ref{performance theorem1}.
\end{proof}

	\hk{Theorems \ref{performance theorem1} and \ref{performance theorem2} provide a systematic way to explicitly quantify how much the degrees of constraint tightening and relaxation are sufficient to ensure safety for the controller $\pi^\mathrm{r}_{\theta,\omega}$ obtained from the relaxed safety filter \eqref{DDfilter_relaxed}. In particular, Theorem \ref{performance theorem1} states that if the approximation $Q^B_\omega$ satisfies condition \eqref{error bound1}, to ensure safety for the SL method, the state constraint must be tightened to $ h(x) + \frac{2 \beta_B }{1-\beta_B} \varepsilon  \leq 0$, while the safety constraint in the safety filter should be relaxed to $ Q^B_\omega (x,u) \leq \frac{1+\beta_B}{1-\beta_B} \varepsilon$. Similarly, in the expert-guided learning method, Theorem \ref{performance theorem2} tells that the state constraint should be tightened to $ h(x) + \frac{2 }{1-\beta} \varepsilon  \leq 0$, with the relaxed SACBF constraint $\leftarrow Q^B_\omega (x,u) \leq \frac{1+\beta}{1-\beta} \varepsilon$ in the safety filter.}

    \hk{The safety performance analysis in Theorems  \ref{performance theorem1} and \ref{performance theorem2} is only valid when $\beta$ is bounded substantially below 1. Otherwise, the tightened state constraint $X_\omega$ shrinks drastically and can even become empty. However, for the expert-guided learning approach, this limitation can be mitigated in practice since $\beta$  can be tuned through an outer-loop procedure to maintain a value sufficiently below 1.}

\begin{remark}
    \hk{The soundness of Theorems \ref{performance theorem1} and \ref{performance theorem2} relies on conditions \eqref{error bound1} and \eqref{condition2}. Condition \eqref{error bound1}  quantifies the local approximation quality of the regression model, which is often assumed in SL literature \cite{moreno2022predictive,hertneck2018learning,he2023state}. The two conditions in \eqref{error bound2 a} and \eqref{error bound2 b} are motivated by \eqref{robust optimization c} and \eqref{robust optimization d}, where we assume that the constraints in \eqref{robust optimization c} and \eqref{robust optimization d} are violated, and the degree of violation is limited by $\varepsilon$. Besides, condition \eqref{error bound2 c} is also justifiable, as it is introduced to ensure that $\mathcal{S}_\omega$ is non-empty. Condition \eqref{error bound2 c} can be easily checked by solving a nonlinear optimization problem, while checking the other conditions and computing $\varepsilon$ can be achieved by using existing deterministic or probabilistic verification tools \cite{hertneck2018learning,he2024approximate,robey2020learning}.}
\end{remark}
\begin{remark}
	\hk{Classical ISSf \cite{kolathaya2018input} or MPC-based tightening techniques \cite{hertneck2018learning} operate in the state space and rely on explicit knowledge of system dynamics. In contrast, our ESSf framework is built upon SACBFs, which evaluate safety directly on $(x,u)$ pairs. This fundamental shift eliminates the need for model predictions and allows safety analysis entirely from data. To the best of our knowledge, such an error-to-state formulation for state-action certificates in a direct data-driven setting has not appeared in prior work.

 Compared to robust MPC \cite{hertneck2018learning}, which typically relies on constraint tightening, ensuring both safety and recursive feasibility for our optimization-based controller is more challenging. In MPC, recursive feasibility directly guarantees constraint satisfaction of the resulting policy, but this implication does not generally hold for the proposed controller. To address this challenge, we introduce a novel constraint tightening-relaxation approach. Our analysis provides closed-form, quantitative relationships between learning errors in the SACBF, the required tightening of state constraints, and the necessary relaxation of the SACBF constraint.}
\end{remark}
	
At the end of this section, we explain why the RL method cannot be included in the proposed ESSf framework. To make $Q^{B*}$ in \eqref{Q safe function} an SACBF, the optimal control problem \eqref{CBF generator} should be un-discounted, which leads to a non-contractive Bellman equation \eqref{bellman} \cite{bertsekas2022abstract}. As a consequence, the learning error of $Q^{B*}$ could become unbounded. Besides, in Theorem \ref{performance theorem1} we require $\beta_B<1$ while in the reachability formulation \eqref{CBF generator}, $\beta_B=1$. The above two factors make the ESSf analysis inapplicable to the RL method.

\section{Refining policies with SACBF constraints}

Given a valid SACBF obtained by the methods in the previous section, in this section, we update $\theta$ to refine the feasible policy $\pi_{\theta, \omega}$, bringing it closer to the optimal policy that solves \eqref{infinite}.

A convenient approach for designing $Q_\theta$, allowing $\theta$ to be updated using most existing unconstrained policy-based RL algorithms is to consider a Euclidean distance objective function $Q_\theta(x,u) := ||u-\pi_\theta(x)||_2$, where $\pi_\theta: \mathcal{X} \rightarrow \mathcal{U}$ is an explicit controller (e.g., a neural network controller) with the parameter $\theta$ \cite{he2024approximate}. This approach, however, could significantly compromise the optimality of the projected policy $\pi_{\theta, \omega}$. A less conservative approach is to make $Q_\theta$ approximate the constrained optimal value function $Q^*$ defined in \eqref{Q function}, so that the policy derived from \eqref{DDfilter} more closely approximates the optimal constrained policy. To this end, we present a unified approach to transform unconstrained value-based RL algorithms into constrained ones, utilizing the obtained SACBF.

Normally, to obtain $Q^*$, as is shown in \cite[Chapter 7.4.1]{borrelli2017predictive}, one typically needs to recursively compute the backward reachable sets $X_k,\;k=0,1,...$ and enforce the constraint $f(x,u) \in X_k$ during the Bellman iteration. This procedure is in general intractable for nonlinear systems. Instead, we replace the state constraints $h(x_t) \leq 0,\;\forall t\in \mathbb{N}$ in \eqref{infinite} with the obtained SACBF constraints $Q_{\omega}^B(x_t, \pi(x_t))\leq 0,\;\forall t\in \mathbb{N}$. Note that this will be an equivalent transformation if $Q_{\omega}^B = Q^{B*}$. The benefit of this transformation is that the constraints during the Bellman iteration become fixed to $Q_{\omega}^B(x_t, \pi(x_t))\leq 0$. This is a result of the inherent invariance of the safe set of $Q_{\omega}^B$. 

By combining the following expression of $\bar J^*$:
\begin{align}\label{infinite under Q}
	\bar J^*\left(x\right) := \inf _{\pi}  J^\pi\left(x\right)\;\mathrm{s.t.}\; Q_{\omega}^B(x_t, \pi(x_t))\leq 0, \pi(x_t) \in U, t\in \mathbb{N},
\end{align}
and the expression \eqref{DDfilter} of $\pi_{\theta, \omega}$, we know that $\pi_{\theta, \omega}$ is optimal for the \eqref{infinite under Q} if $Q_\theta$ equals $\bar Q^* (x, u) := g(x, u)+\gamma \bar J^*(f(x, u))$.

The constrained optimal state-action value function $\bar Q^*$ satisfies the following constrained Bellman equation:
\begin{align}\label{bellman2}
	\bar Q^*(x,u) =\Gamma \bar Q^*:= g(x,u) + \gamma &\min_{u^+ \in U}\bar Q^*(f(x,u),u^+) \nonumber\\
	&\mathrm{s.t.}\; Q_{\omega}^B(f(x,u), u^+) \leq 0,
\end{align}
where $\Gamma $ is the Bellman operator. It is easy to show that $\Gamma $ is a monotonous contraction mapping. As a consequence, the uniqueness of the solution to \eqref{bellman2} holds and the constrained Bellman iteration $\bar Q_{k+1} = \Gamma \bar Q_k$ converges to $\bar Q^*$ for any real-valued and bounded $\bar Q_0: \mathcal{X}\times \mathcal{U} \to \mathbb{R}$ \cite[Proposition 2.1.1]{bertsekas2022abstract}.

With this property, we can update $\theta$ to minimize the average Bellman residual over $\mathcal{S}_{\omega} \times U$. Formally speaking, we find
\begin{subequations}
	\begin{align}\label{learning objective}
		\theta^* := \arg\min_{\theta} \int_{\mathcal{S}_{\omega} \times U} (  
		q_\theta(x,u) &- Q_\theta(x,u) )^2 dxdu,\\
		\text{where }q_\theta(x,u) =	g(x,u) + \gamma 
		& \min_{u^{+} \in U} Q_\theta\left(f(x, u), u^{+}\right) \nonumber \\
		& \text {s.t. } Q_\omega^B\left(f(x, u), u^{+}\right) \leq 0.
	\end{align}
\end{subequations}

Problem \eqref{learning objective} is bi-level and intractable to solve exactly. In practice, following standard offline value iteration algorithms \cite{busoniu2017reinforcement}, we iteratively update $\theta$. In each iteration, $q_\theta$ is treated as the target value, and $\theta$ is updated such that the difference between $ q_\theta$ and $Q_\theta$ is minimized. This ultimately forms the proposed constrained fitted Q iteration (constrained FQI) algorithm (Algorithm 1). Furthermore, just as standard FQI can be adapted to constrained FQI, other online value-based RL algorithms, such as constrained deep Q-learning, constrained approximate SARSA, and Lagrangian RL, can likewise be designed accordingly.


\begin{algorithm}
	\caption{Constrained fitted Q iteration}
	\label{alg:C}
	\begin{algorithmic}[1]
		
		\State \textbf{Given} the SACBF $Q^B_\omega$, the state and input constraint sets $X$ and $U$, $\mathcal{X}$, the sample set $\mathcal{D} \subseteq \mathcal{S}_{\omega}\times U\times \mathcal{S}_{\omega}$, the learning rate $\zeta >0$, the error threshold $\varepsilon_{\mathrm{QI}} >0$, and the maximum number of updates $K \in \mathbb{N}^+$
		
		\State \textbf{Initialize} $\theta_0$, $k\leftarrow0$

		\State\textbf{Repeat} at each iteration $k$
		
		$q_s \leftarrow  \Gamma Q_{\theta_k} (x_s,u_s)$ for each $(x_s,u_s,x^+_s) \in \mathcal{D}$
		
		$\theta_{k+1} \leftarrow  \arg \min _\theta \sum_{s=1}^{|\mathcal{D}|}\left(q_s-Q_{\theta_k}\left(x_{s}, u_{s}\right)\right)^2$

		$k \leftarrow k+1$
		
		\State\textbf{Until} $|Q_{\theta_{k+1}}(x,u) -Q_{\theta_{k}}(x,u)| \leq \varepsilon_{\mathrm{QI}},\;\forall (x,u) \in \mathcal{X}\times U $,  or $k > K$
		\State \textbf{Output} $Q_{\theta_k}$
	\end{algorithmic}
\end{algorithm} 

\hk{Under the additional assumption that for any $k \in \mathbb{N}$, the deviation $|Q_{\theta_{k+1}}(x,u) - \Gamma Q_{\theta_k}(x,u)|$ is upper-bounded, a standard condition in the literature on RL convergence analysis (e.g., \cite[Section 2.3.1]{bertsekas2022abstract}), it follows that $Q_{\theta_k}$ converges to a neighborhood of $\bar Q^*$ as $k \to \infty$ \cite[Proposition 2.3.2]{bertsekas2022abstract}. Moreover, note that the update of $Q_\theta$ does not affect the guarantees of safety or recursive feasibility of \eqref{DDfilter_relaxed}.}

\section{Discussions}
\hk{\noindent\textbf{Computational complexity:} In our work, we use two deep neural networks $Q_\theta$ and $Q^B_\omega$ to parameterize the objective function $Q$ and the constraint function $Q^B$ in (7), respectively. Let $Q_\theta$ have $L_Q$ layers with widths $(d_0,\dots,d_{L_Q})$, where
$d_0 = n_x+n_u$ and $d_{L_Q}=1$, and analogously $Q^B$ has widths $(\tilde d_0,\dots,\tilde d_{L_B})$. The number of parameters of the two networks is
\[
P_Q = \sum_{l=1}^{L_Q} (d_{l-1} d_l + d_l), 
\qquad
P_B = \sum_{l=1}^{L_B} (\tilde d_{l-1} \tilde d_l + \tilde d_l).
\]
A forward pass performs (approximately) one multiply and one add per weight, so the number of floating-point operations (FLOPs) per evaluation satisfies
\[
C_{\mathrm{for}} = \mathcal{O}(P_Q + P_B).
\]

Now assume that we solve the nonlinear program in the optimization-based controller by an iterative first-order method (e.g., projected gradient, penalty or barrier method). Let $K$ denote the number of iterations
per control step. Each iteration requires one forward and backward pass through both networks, and the cost of a backward pass is on the same order as the forward pass, i.e., $C_{\mathrm{back}}=C_{\mathrm{for}}$. In addition, some cheap vector operations are needed to handle the constraint $u\in U$ (e.g., projections, line search updates), which takes $\mathcal{O}(n_u)$ FLOPs. Usually, $P_Q + P_B \gg n_u$. Hence, over $K$ iterations, the total complexity per control step is
\[
C_{\mathrm{comp}}
= \mathcal{O}\bigl(K (P_Q + P_B)\bigr).
\]
}

\hk{\noindent\textbf{Memory usage:} We distinguish static (always present) and dynamic (per evaluation/iteration) memory. 

The static memory is the memory used for storing the networks:
\[
M_{\mathrm{sta}} = \mathcal{O}(P_Q + P_B).
\]

During a forward and backward pass, activations, gradients, and the final solution (control input) must be stored. Hence the total dynamic memory per iteration is
\[
M_{\mathrm{dyn}}
= \mathcal{O}\left(\sum_{l=1}^{L_Q} d_l + \sum_{l=1}^{L_B} \tilde d_l + n_u\right),
\]

Since $P_Q + P_B$ dominates $\sum_{l=1}^{L_Q} d_l + \sum_{l=1}^{L_B} \tilde d_l + n_u$, we can summarize that the total memory is 
\[
M_{\mathrm{total}}= \mathcal{O}(P_Q + P_B).
\]
}

\noindent\textbf{Comparison with learning CBFs:} \hk{Machine learning has recently been developed for synthesizing CBFs (safety filters) with or without model uncertainties. A common methodology to synthesizing permissive CBFs is to link CBFs with value functions of optimal control problems, such as the approaches in \cite{fisac2018general,chen2024learning,didier2024approximate,tan2023value} and the current paper, or value functions of known policies \cite{so2024train,zhang2025discrete}.} \hk{Some authors assume the knowledge of a nominal model and a valid CBF for that model \cite{taylor2020learning,westenbroek2021combining}. Then, RL \cite{taylor2020learning,westenbroek2021combining} or Gaussian process learning \cite{castaneda2025recursively} is used to learn the discrepancies between the CBF constraint associated with the nominal model and the real model. By leveraging the covariance of the estimation provided by the Gaussian process and developing an event-trigger mechanism, \cite{castaneda2025recursively} ensures the recursive feasibility of the safety filter in high probability \cite{castaneda2025recursively}.} \hk{Another approach is to formulate the synthesis of CBFs as a robust optimization problem and to use sampling-based methods to solve it \cite{robey2020learning,nejati2023data}. This approach is more sample-efficient than RL. However, all the above methods can only enable the design of model-based safety filters.} In contrast, in the current paper we fundamentally propose a new direct data-driven safe control framework, in which an SACBF is employed to evaluate the feasibility of input signals without using model information.

The paper in \cite{lavanakul2024safety} uses discriminating hyperplanes (a series of linear constraints) to represent safe constraints in safety filters. This approach eliminates the dependence on any specific safety certificate, and consequently, on the model. The hyperplanes can be updated by RL, in which the reward is designed such that unsafe policies are highly penalized. However, the RL method of \cite{lavanakul2024safety} can still be understood as a reward shaping method for managing safety, which lacks formal guarantees regarding constraint satisfaction. Besides, approximating nonlinear constraints by linear constraints may lead to a rather conservative policy. In contrast, the proposed optimization-based control framework, which involves a nonlinear program, provides a formally sound and more general method to enhance safety and to eliminate dependence on an explicit model.

\hk{The three approaches we proposed can be viewed as extensions of traditional CBF synthesis methods (typically model-based) \cite{robey2020learning,wabersich2022predictive,choi2021robust} to the synthesis of SACBFs in a model-free framework. Since SACBFs remove the model dependency by enforcing the invariance condition directly on the joint state-action space, learning SACBFs requires learning a safety function over a higher-dimensional domain with dramatically more samples and without relying on an explicit model. Secondly, since the safe set is not explicitly defined by the level set, SACBF learning must jointly infer the safe set,
the SACBF function, and the invariance structure. Finally, ESSf robustness analysis is more involved for SACBFs because it must characterize the effect of learning errors without propagating them through the system model. For these reasons, learning SACBFs involves fundamentally different technical challenges compared to learning CBFs.}

\noindent\textbf{Comparison with integrating RL and MPC:}  The combination of MPC and RL can have various kinds of forms. \cite{gros2022learning} uses RL algorithms to update a parameterized nonlinear MPC scheme. It is shown in \cite{gros2022learning} that the parameterized MPC scheme can produce safe and stabilizing policies as long as the RL algorithm updates parameters in a personally-defined safe and stable set. Such a set, however, is usually problem-dependent. It is possible to use LMIs to determine such a set for linear systems. Different from \cite{gros2022learning}, which parameterizes all terms of MPC including the terminal cost, the prediction model, and the constraints, \cite{moreno2022predictive} only parameterizes the terminal cost and uses approximate value iteration to learn it offline. Similarly, \cite{lin2023reinforcement} adopts approximate policy iteration to learn the terminal cost offline. It is proven in \cite{moreno2022predictive,lin2023reinforcement} that the resulting MPC controller makes the system safe and asymptotically stable if the approximation error is bounded and the MPC horizon is large enough. The above combinations, however, fail to solve the online computational problem faced by MPC, since their policies are still determined by online optimization over a long prediction horizon. In comparison, the proposed optimization-based control approach yields significant online computational benefits by using two state-action value functions $Q_\theta$ and $Q^B_\omega$ to approximate the value function and safety constraints implicitly defined by the parameterized MPC scheme. 

Moreover, we acknowledge that, unlike parametrized MPC, which employs an explicit prediction model to derive optimal and safe policies for long-term goals, our optimization-based control framework lacks explainability in the resulting policy. However, our method offers greater flexibility, as it allows the integration of any RL algorithm into the control synthesis.

\noindent\textbf{Comparison with safe RL:} Existing methods for safe RL include using reward shaping \cite{massiani2022safe,he2024approximate}, Lagrangian methods combined with policy gradient or actor-critic algorithms \cite{yu2022reachability}, interior-point optimization \cite{liu2020ipo}, and safety filtering \cite{cheng2019end,dalal2018safe}. Additionally, a small body of work explores direct stochastic optimization of neural network controllers over finite horizons, improving safety at sampled states but demanding significant computational resources \cite{li2022using}. \hk{The work of \cite{fisac2019bridging} approximates the optimal safe policy by using RL to solve a time-discounted modification of the Hamilton-Jacobi reachability problem, while \cite{hsu2023isaacs} extends this framework to adversarial RL, making the resulting policy robust to model uncertainty. More recently, \cite{li2025certifiable} establishes the Lipschitz continuity of the discounted value function introduced in \cite{fisac2019bridging} and leverages this property to develop certification methods for the corresponding approximate policy.} However, as discussed in the introduction, because learning algorithms operate stochastically, learning-based control—particularly when an explicit model is unavailable—requires safety filters to regulate policy execution and to ensure reliable safety. The proposed optimization-based control framework is compatible with all the safe RL methods mentioned above, and provides formal safety assurance under learning errors. 

\noindent\textbf{Comparison with existing work on SACBF-based safety:} \hk{In \cite{he2023state}, a general definition of SACBFs is introduced. In parallel, the authors of \cite{fisac2019bridging} and \cite{oh2025safety} study the synthesis of SACBFs through model-free learning. The current work extends \cite{he2023state,fisac2019bridging,oh2025safety} in three major ways. First, we propose three learning-based methods for synthesizing SACBFs under different assumptions on the prior knowledge about safety information, without requiring any nominal model, which is assumed in \cite{he2023state}. That means that the SACBF considered in this paper is not limited to the state-action value functions in \cite{he2023state,fisac2019bridging,oh2025safety}. Second, to address the effect of learning errors, this paper establishes robustness analysis via the notion of ESSf, resulting in significantly less conservative constraint tightening than \cite{he2023state}. In contrast, \cite{fisac2019bridging} and \cite{oh2025safety} do not examine the effect of learning errors. Third, unlike the safety-filter approaches in \cite{he2023state,fisac2019bridging,oh2025safety}, our optimization-based controller uses a general nonlinear objective $Q$ and a policy refinement strategy to achieve near-optimal performance while preserving safety.}

\noindent\textbf{Limitations:}
\hk{One downside is that safety constraints will inevitably be violated when collecting samples of state transitions to train the controller. This may be inappropriate for real-world applications where maintaining safety is essential throughout the learning process. Using simulators to generate sample data during learning can solve this issue, although the gap from sim to real needs further robustness analysis.}

\hk{In the paper, we have extended the barrier certificate, originally used to characterize state safety, to the SACBF, which captures both state-input safety. However, this extension introduces a drawback: it necessitates sampling and learning in the state-input space rather than just the state space, thereby increasing computational and sample complexity. In general, the sample complexity for providing formal performance guarantees (see Section V) is exponential with respect to the dimension of the state-input space. This is also the case for other methods from the literature \cite{robey2020learning,nejati2023data,hertneck2018learning,he2024approximate}. Potential solutions to mitigate
this complexity include leveraging model symmetries or reduced-order representations, which will be investigated in our future work.}

\hk{We focus on discrete-time nonlinear systems because our work targets safety in learning-based control methods, like RL and SL, which typically use discrete-time models. The extension of SACBFs and the associated robustness analysis to continuous-time dynamical systems will be investigated in our future work.}

\section{Case study}

In this section, we verify the proposed data-driven approaches for synthesizing safety filters and their application to safe learning-based control for an autonomous vehicle moving in a 2-D space containing obstacles.

\subsection{Model}
We consider the kinematic vehicle model \cite{didier2024approximate}:
\begin{equation}\label{car}
	\begin{aligned}
		\dot p_x & = v \cos (\Psi)  \\
		\dot p_y & = v \sin (\Psi) \\
		\dot v &= a \\
		\dot{\Psi} & =v \tan (\delta) /L,\\
	\end{aligned}
\end{equation}
where $L= 0.1 $. The state vector $x$ includes the position $p_x$, $p_y$, the speed $v$, and the yaw angle $\Psi$. The acceleration $a$ and the steering angle $\delta$ are the inputs. The input constraints are given by $-5\leq a \leq 2 $ and $|\delta| \leq \pi/4 $. The state constraints are specified by $|p_x| \leq 2.6 $, $|p_y| \leq 2.6 $, $0\leq v \leq 1$, and $|\Psi| \leq \pi $, as well as the requirement of avoiding some obstacles shown in Fig. \ref{SACBF}. According to the considered state constraints, the function $h$ is as follows:
\begin{align*}
	h(x) =& \max \left\{ {|{p_x}| - 2.6,|{p_y}| - 2.6, - v,v - 1,|\Psi|-\pi},\begin{array}{*{20}{c}}
		{} \\ 
		{} 
	\end{array}\right.\\
	&\left.\max_{i=1,2,3,4}\{r^2_i - (p_x - c_{x,i})^2 - (p_y - c_{y,i})^2\} \right\}.
\end{align*}
Here, $r_i$ denotes the radius of each obstacle, while $c_{x,i}$ and $c_{y,i}$ represent the $x$- and $y$-coordinates of the center of each obstacle. The state constraint is tightened to $\{x| h(x)+ 0.2 \leq 0\}$ when learning SACBFs. The target set is $X_\mathrm{tar} = \{x\in \mathbb{R}^4|p^2_x + p^2_y \leq 0.1^2 \}$. The system is discretized using the Euler method with sampling time 0.05 s.

\subsection{Synthesizing SACBFs} The synthesis of the SACBF is performed using two approaches, including the expert-guided learning approach presented in Section \ref{section expert} and the RL approach described in Section \ref{section RL SACBF}. The SL approach in Section \ref{section supervised} is not considered because it is unrealistic to assume knowing a CBF without knowing the model in this example. For the expert-guided approach, we adopt artificial potential fields (APF) \cite{warren1989global} to design the expert controller. The APF-based controller has a certain capability for obstacle avoidance but is prone to falling into local optima, which can either reside within obstacle regions or be located far from the target. 

To get the training data sets $\mathcal{S}_\mathrm{s}$, $\mathcal{X}_\mathrm{s}$ for the expert-guided approach, and $U_\mathrm{s}$, we start the system from 10000 randomly generated initial conditions inside $\mathcal{X} = \{x\in \mathbb{R}^4 \mid\; |p_x| \leq 3,\;|p_y| \leq 3,\; -0.2\leq v \leq 1.2,\;|\Psi| \leq \pi\}$ and get the trajectories over 200 time steps. For the remaining learning models, including using RL to learn the SACBF $Q^B$, learning the optimal value function $Q$, learning the reference control policies, and identifying the model in the subsequent statistical tests, we use uniformly random sampling to get $10^6$ state-input samples from $\mathcal{X}$.

We use neural networks with [128 128 32] “tansig” layers to represent all the neural SACBF\footnote{The vector $[a_1 \;a_2\;...\;a_s]$ means the neural network contains $s$ hidden layers with $a_i$ units in each layer.}. The training algorithm is stochastic gradient descent with momentum \cite{ramezani2024generalization}, with a learning rate 0.001.  For the expert-guided learning approach, $q_\omega$ is represented by a neural network with [64 64 32] “tansig” layers, and the constraints in \eqref{robust optimization b}-\eqref{robust optimization d} are penalized in the loss with $\beta = 0.1$ and the penalty weights $\lambda_{\eqref{robust optimization b}} =1$ and $\lambda_{\eqref{robust optimization c}} =\lambda_{\eqref{robust optimization d}}=10$. After 10 epochs of training, \hk{it is verified that the three inequalities in \eqref{error bound2 a}-\eqref{error bound2 c} hold for $\varepsilon = 0.083$ at all the training samples. Therefore, following the probabilistic verification approach in \cite{hertneck2018learning}, we know that these three inequalities hold over the continuous sets $\mathcal{S}_0$, $\mathcal{S}_0 \times U$, and $\mathcal{X} \setminus \mathcal{S}_0$ with high confidence by employing Hoeffding’s Inequality \cite{hoeffding1994probability}.} According to Theorem \ref{performance theorem2}, under the tightened constraint $h(x)+0.2 \leq 0$, the safe set $\{x| q_\omega (x) \leq 2\varepsilon/(1-\beta)\}$ is contained in the original state constraint with high probability.

In Figs. \ref{SACBF}(a)-(d), we illustrate the neural SACBFs obtained by the expert-guided learning approach and the RL approach, and their corresponding safe sets. To visualize the 4D sets, we display their shapes on the $p_x$-$p_y$ plane with fixed $v=0.5$ and $\Phi = 0$. Both safe sets avoid intersections with obstacles and wall areas. The safe set derived from the expert controller is smaller than that learned through the RL approach. This difference arises because the RL method theoretically approximates the maximal safe set, whereas the expert controller (APF) is not always feasible within the maximal safe set, which is illustrated by the red curves in Fig. \ref{SACBF}(a). 
\begin{figure}
	\centering
	\subfloat[The boundaries (blue curves) of the safe set $\{x| q_\omega (x) \leq 2\varepsilon/(1-\beta)\}$ learned from the expert controller. The green curves represent the safe trajectories of the APF controller, while the red curves and points represent the unsafe trajectories and initial states.]{\includegraphics[width=110pt,clip]{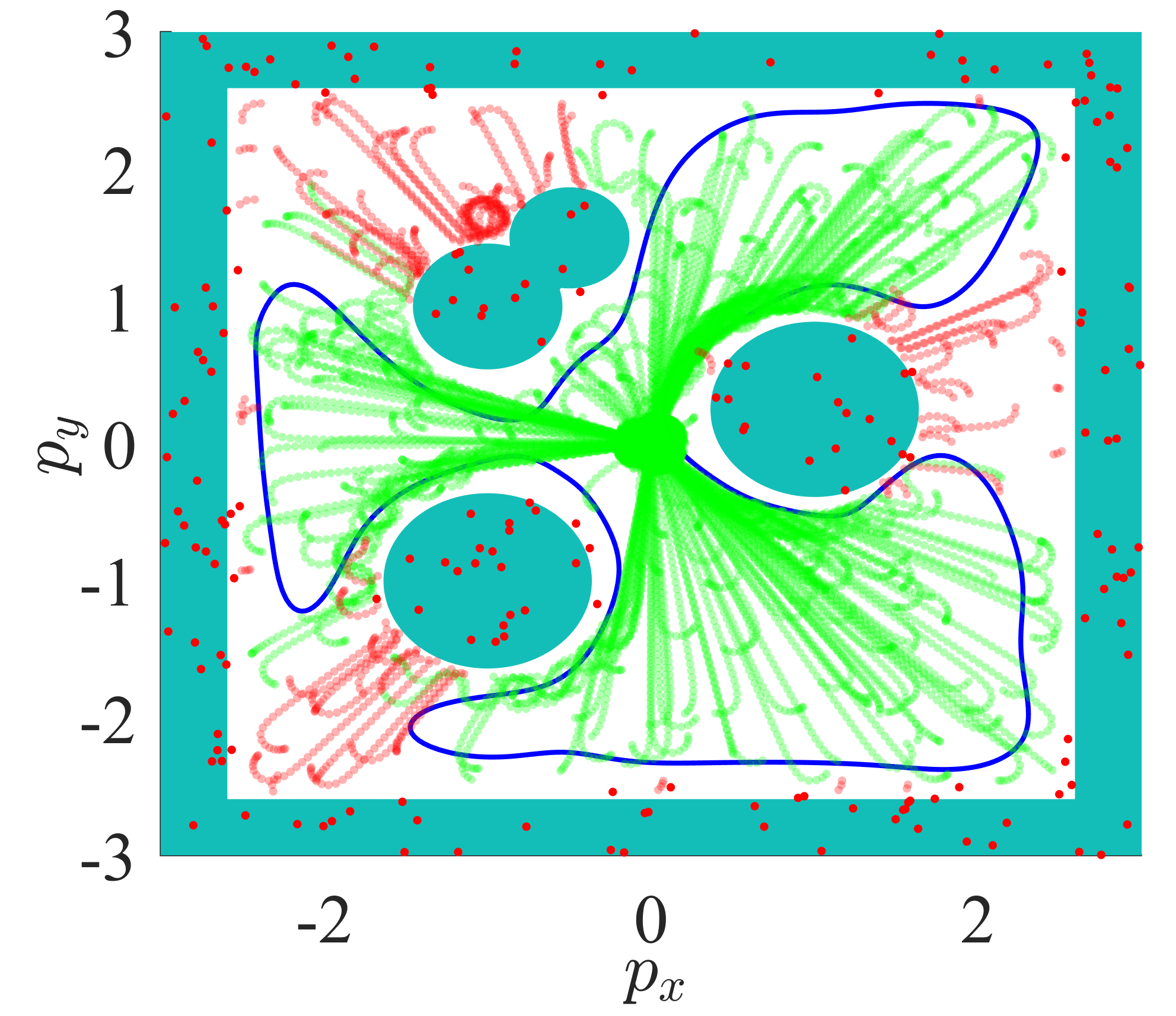}}
	\hfil
	\subfloat[The value of $q_\omega$ learned from the expert controller.]{\includegraphics[width=110pt,clip]{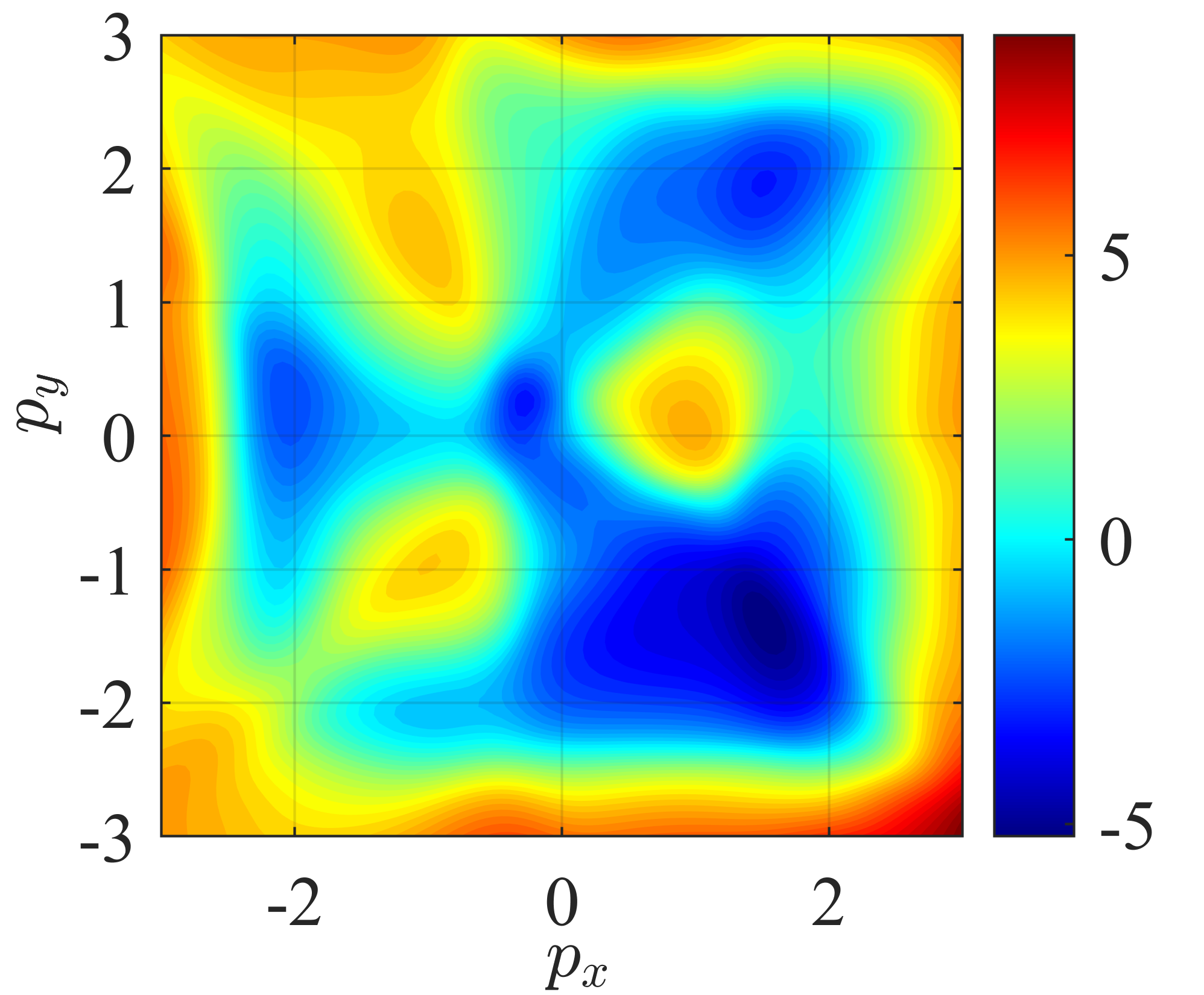}}
	\hfil
	\subfloat[The boundaries (blue curves) of the safe set $\{x| \min_{u \in U}Q^B_\omega (x,u) \leq 0\}$ learned by RL]{\includegraphics[width=110pt,clip]{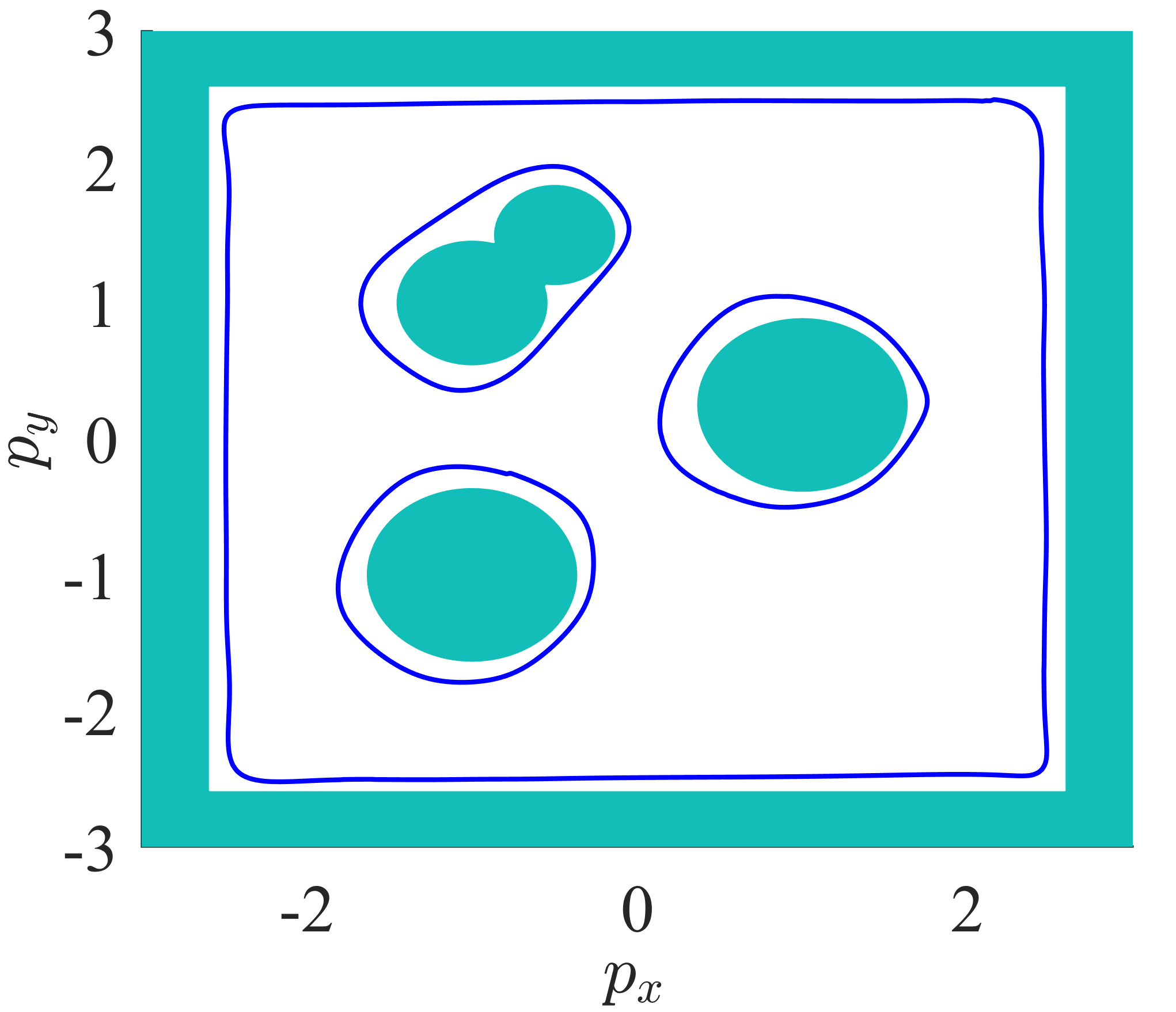}}
	\hfil
	\subfloat[The value of $\min_{u \in U}Q^B_\omega (\cdot,u)$ learned by RL.]{\includegraphics[width=110pt,clip]{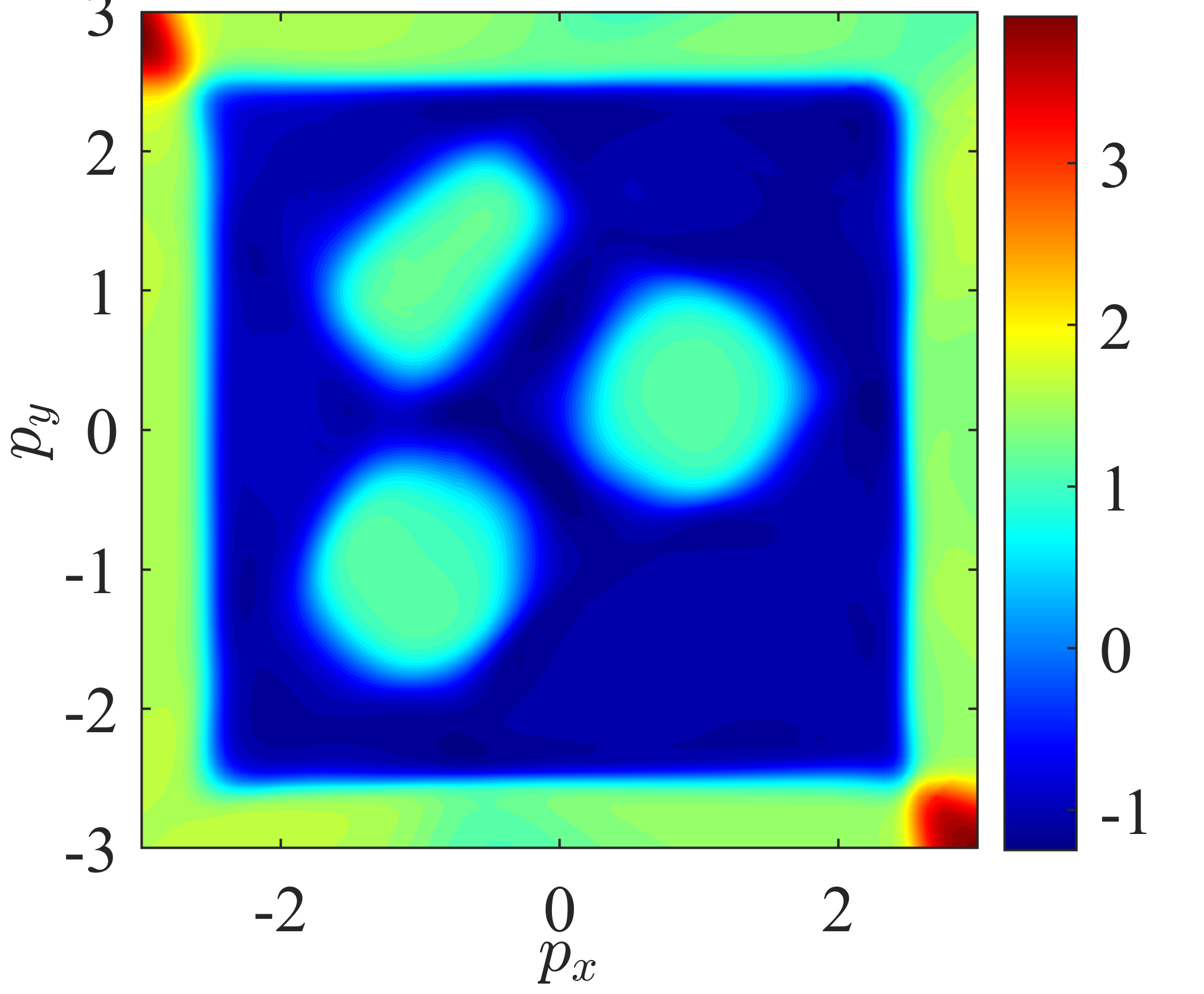}}
	\caption{The SACBFs and their corresponding safe sets in the $p_x$-$p_y$ plane with $v = 0.5$ and $\Psi =0$. The green area represents the obstacles and walls. In the contour figures, values below zero mean that the positions are feasible.}
    \label{SACBF}
\end{figure}

Fig. \ref{beta} demonstrates how the size of $S_\omega$ and $X_\omega$ varies with different values of $\beta$ when applying the expert-guiaded learning approach. From Fig. \ref{beta}, we can observe that the safe set is non-empty for all $\beta = 0,\;0.2,\;0.4,\;0.6,\;0.8$.

\begin{figure}
	\centering
	\subfloat[ ]{\includegraphics[width=120pt,clip]{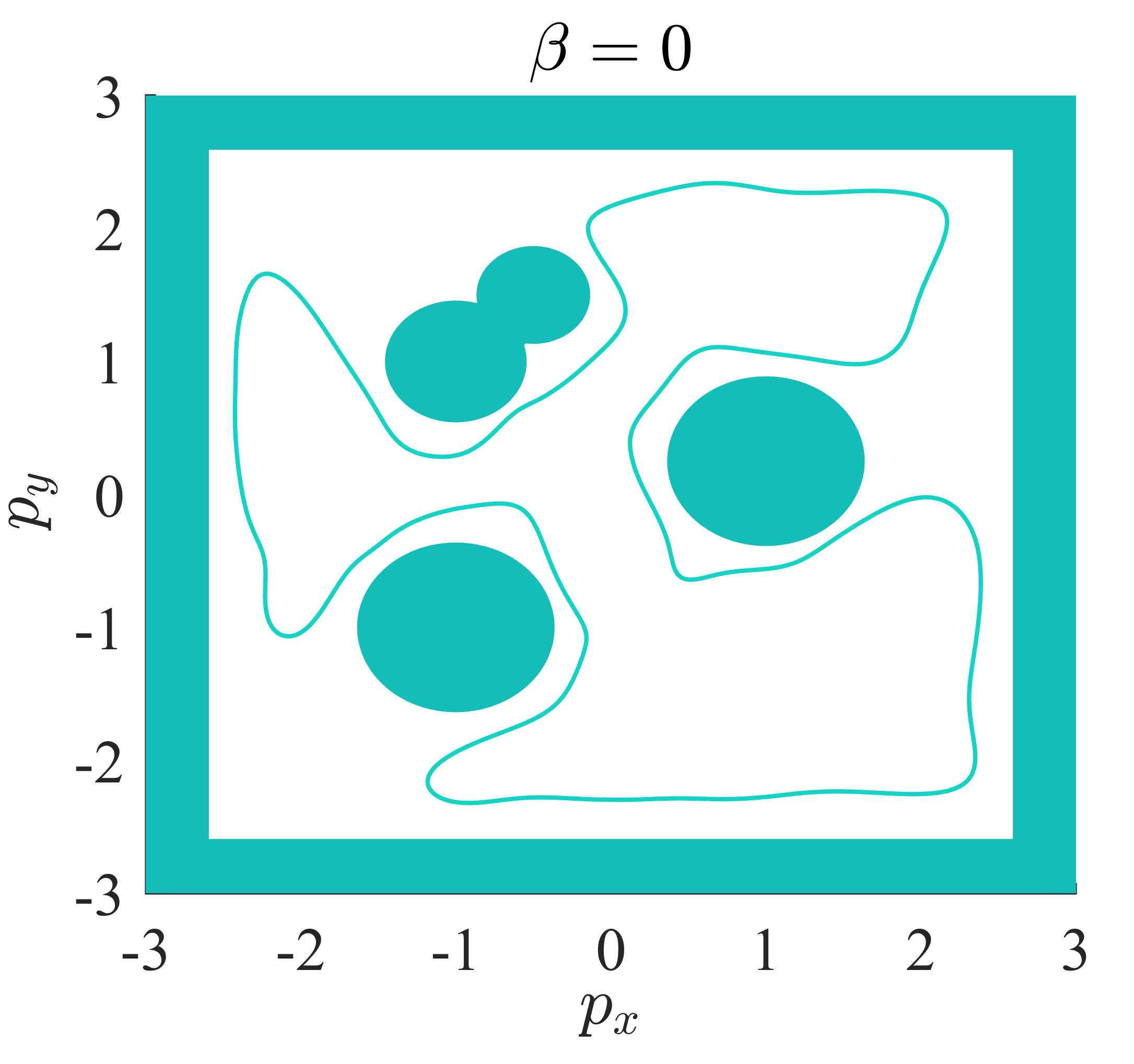}}
	\hfil
	\subfloat[]{\includegraphics[width=120pt,clip]{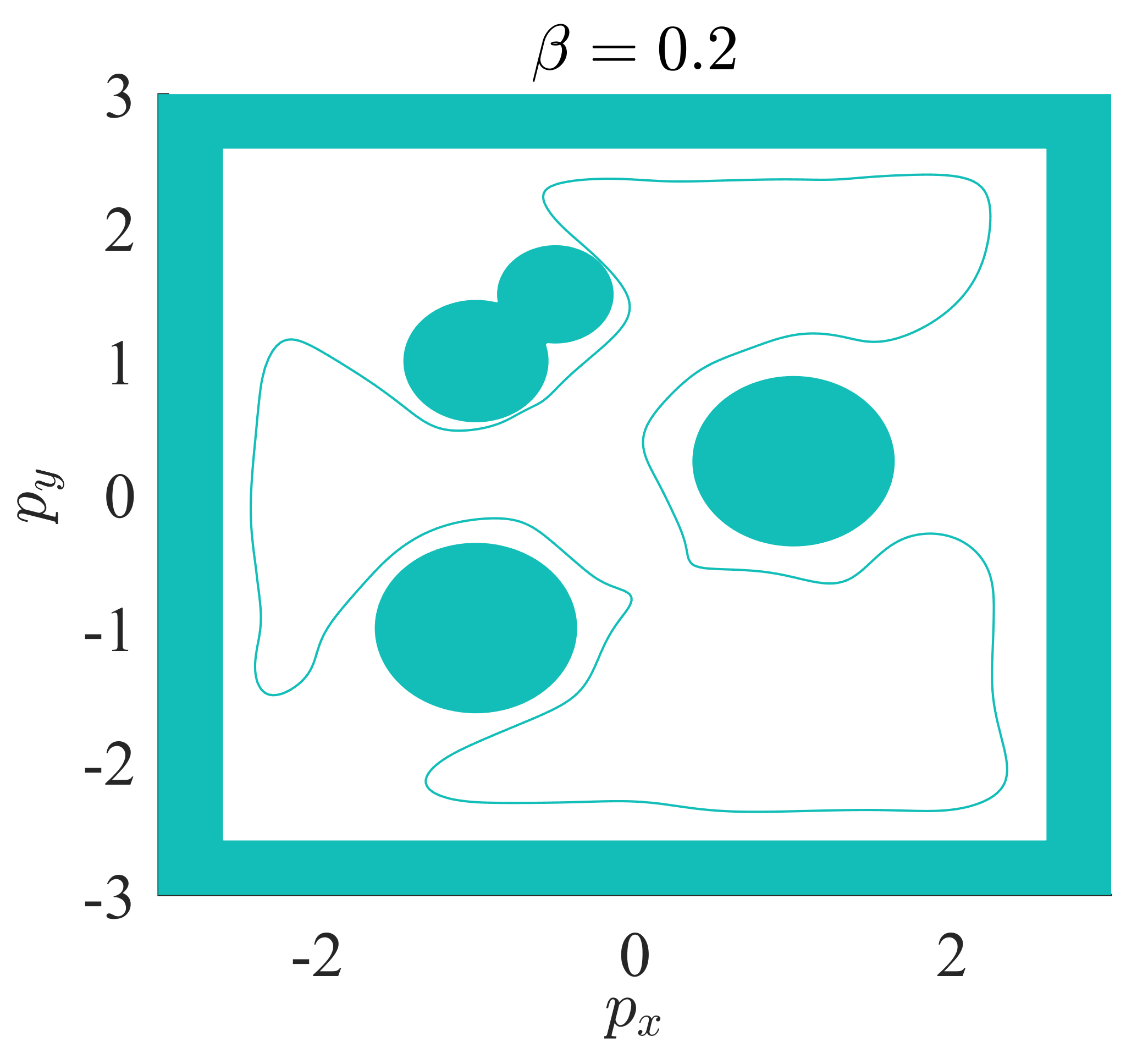}}
	\hfil\\
	\subfloat[]{\includegraphics[width=120pt,clip]{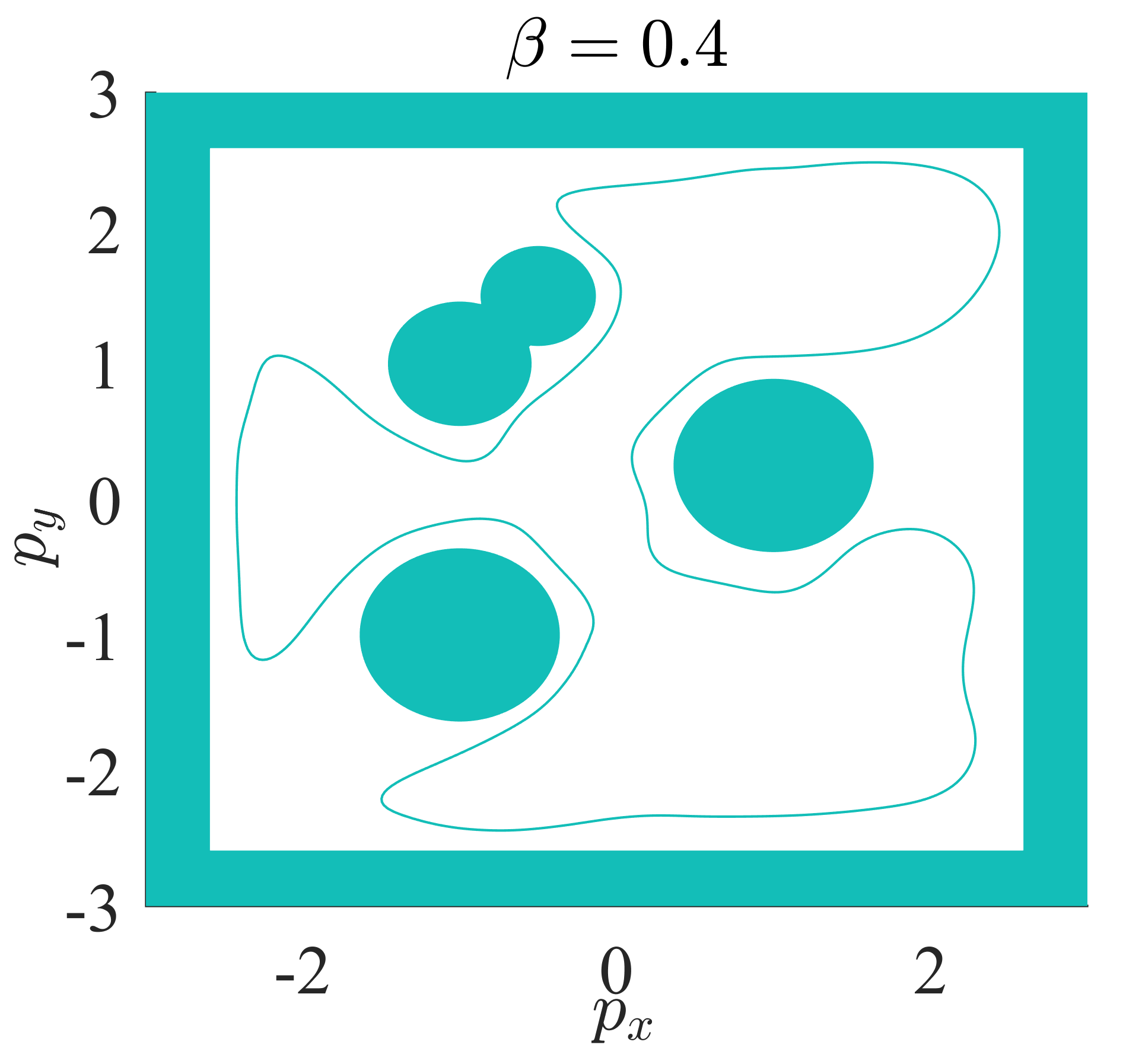}}
	\hfil
	\subfloat[]{\includegraphics[width=120pt,clip]{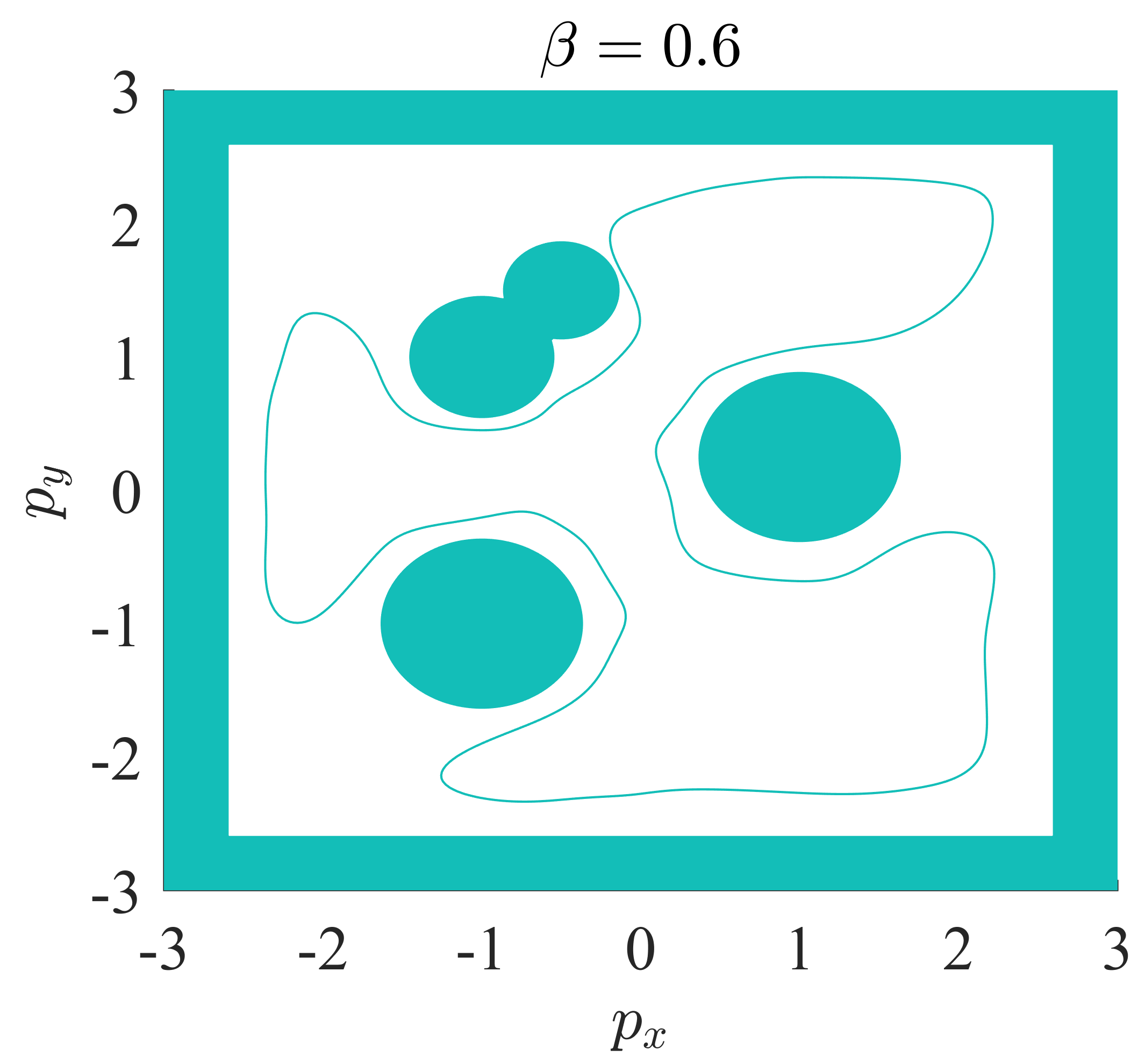}}
    \hfil
	\subfloat[]{\includegraphics[width=120pt,clip]{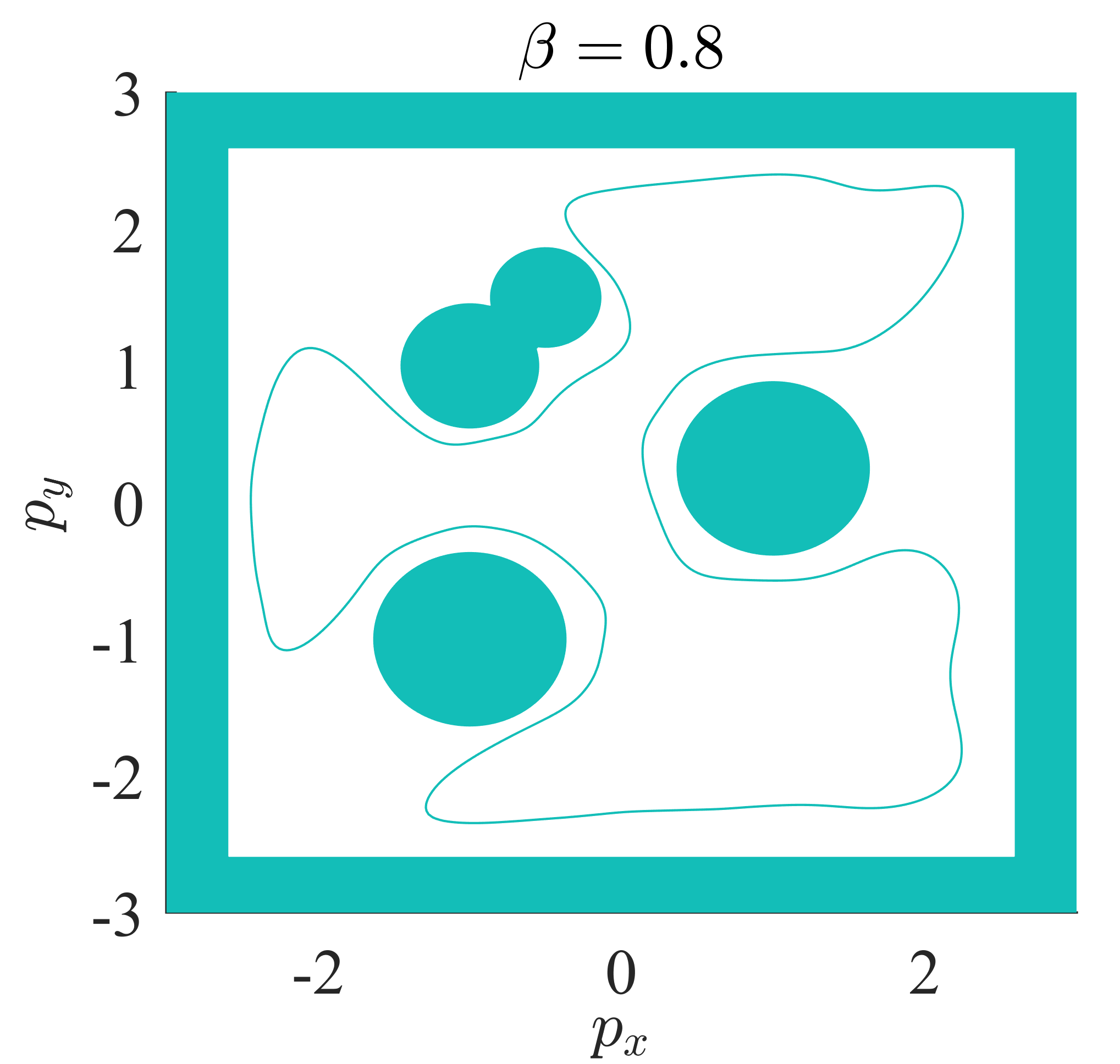}}
	\caption{Illustration of the safe set learned from the expert controller under different $\beta$.}
    \label{beta}
\end{figure}

\subsection{Closed-loop simulation} Hereafter, we evaluate the performance of the proposed optimization-based controller \eqref{DDfilter}.  We construct two reference control policies obtained using: (i) APF and (ii) deep deterministic policy gradient (DDPG) \cite{LillicrapHPHETS15}. The APF contains two elements: attractive potential fields, which guide the vehicle toward the target set, and repulsive potential fields, which prevent the vehicle from hitting the obstacles. To highlight the performance of the safety filter, we only apply the attractive potential field to get the reference controller (referred to as unsafe APF). Similarly, in the case of DDPG, we do not incorporate safety constraints during the training process (referred to as unsafe DDPG). For DDPG, the stage cost is designed as $g(x,u)=\max\{0,p^2_x + p^2_y-0.1^2\} $, with a discount factor $\gamma =0.99$. For comparison, we evaluate the filtered policies against: (i) the APF method incorporating both attractive and repulsive potential fields (referred to as the safe APF); (ii) the DDPG policy trained with a penalty on constraint violations in the cost function (referred to as safe DDPG) \cite{gupta2020policy}.

Fig. \ref{trajectory}(a) demonstrates the performance of the 3DSF learned using the expert-guided learning approach. When applying the unsafe policy, the trajectories exhibit some constraint violations. In contrast, the learned SACBF enables the vehicle to smoothly avoid obstacles, except for the initial condition $p_x = 2,\;p_y = 0.5$ (the rightmost black dot). \hk{To explain this constraint violation, we note that the expert controller (the safe APF) also violates the constraints when initialized from this state. This occurs because the APF becomes trapped in a local optimum located inside the obstacle in the right half-plane. Consequently, the 3DSF, which is learned based on this expert controller, recognizes this initial state as unsafe and therefore fails to refine the reference policy. The green curve in Figs. {trajectory}(a) and {trajectory}(c), representing the boundary of the safe set in the $p_x-p_y$ plane with $v=0$ and $\Psi = -\pi/2$ (corresponding to the speed and heading of the rightmost initial state), evidences this explanation by showing that this initial state is outside the safe set.”}

In comparison, the 3DSF learned by RL refines the reference policy for all the initial states. The trajectory starting from the rightmost initial state in Fig. {trajectory}(b) demonstrates the superiority of the proposed 3DSF learned by RL. In particular, the safe APF policy makes the trajectory fail into a local optimum inside the right obstacle, while the unsafe APF policy with the 3DSF circumvents the local optimum and steers the system to the target.

In Figs. {trajectory}(c-d), we show the trajectories of the vehicle controlled by the DDPG policies. It is found that although for some initial states safe DDPG successfully plans a safe trajectory, some trajectories fail to converge to the target set. This illustrates that it is hard to balance the task performance and constraint satisfaction by naively adding penalties to the stage cost during training. In contrast, the safety filter significantly enhances the safety performance of the unsafe DDPG policy while ensuring that the policy reaches the target.
\begin{figure}
	\centering
	\subfloat[Dashed blue curve: unsafe APF; Solid blue curve: unsafe APF with 3DSF learned from the expert controller; Red curve: safe APF. ]{\includegraphics[width=120pt,clip]{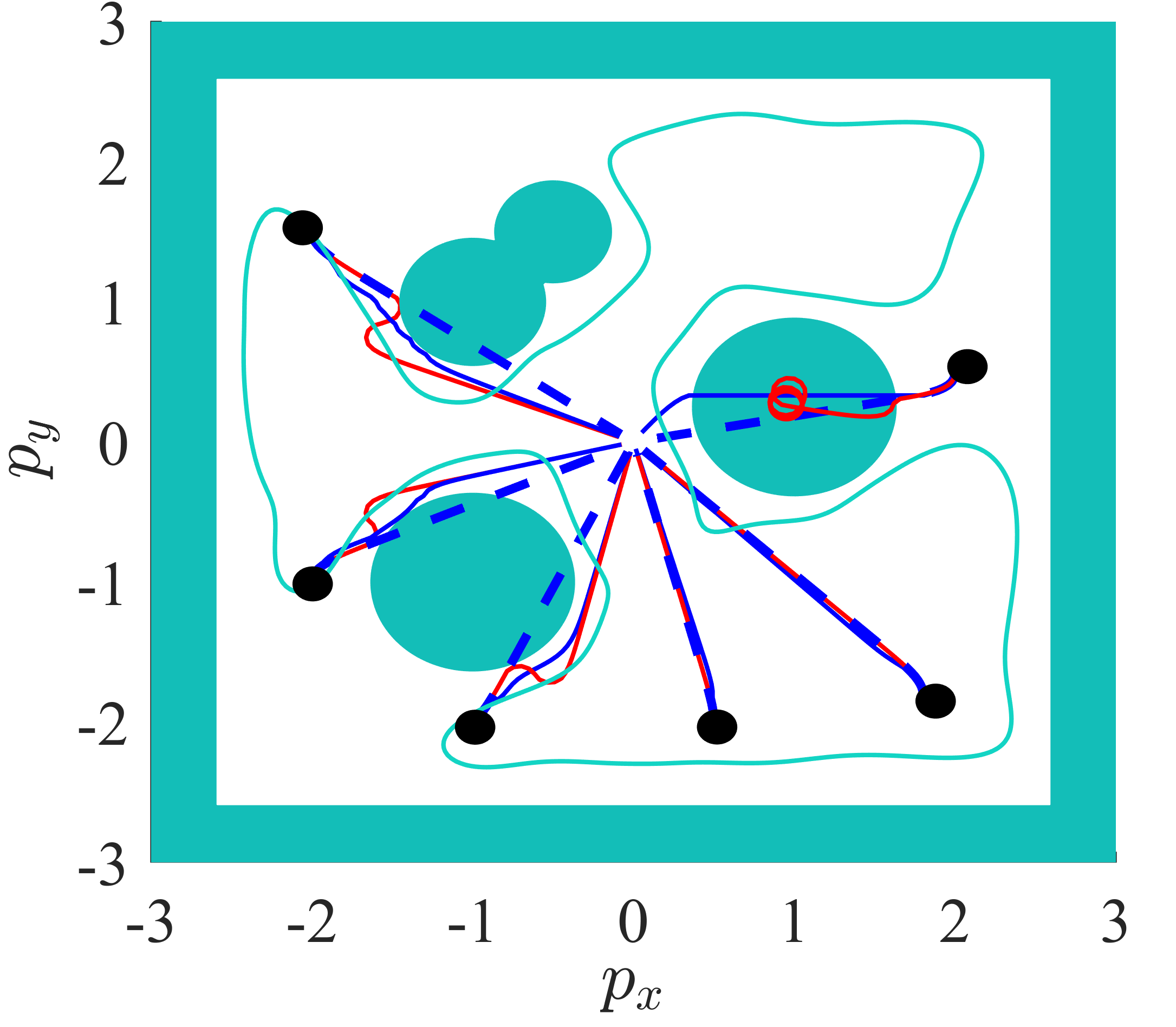}}
	\hfil
	\subfloat[Dashed blue curve: unsafe APF; Solid blue curve: unsafe APF with 3DSF learned by RL; Red curve: safe APF.]{\includegraphics[width=120pt,clip]{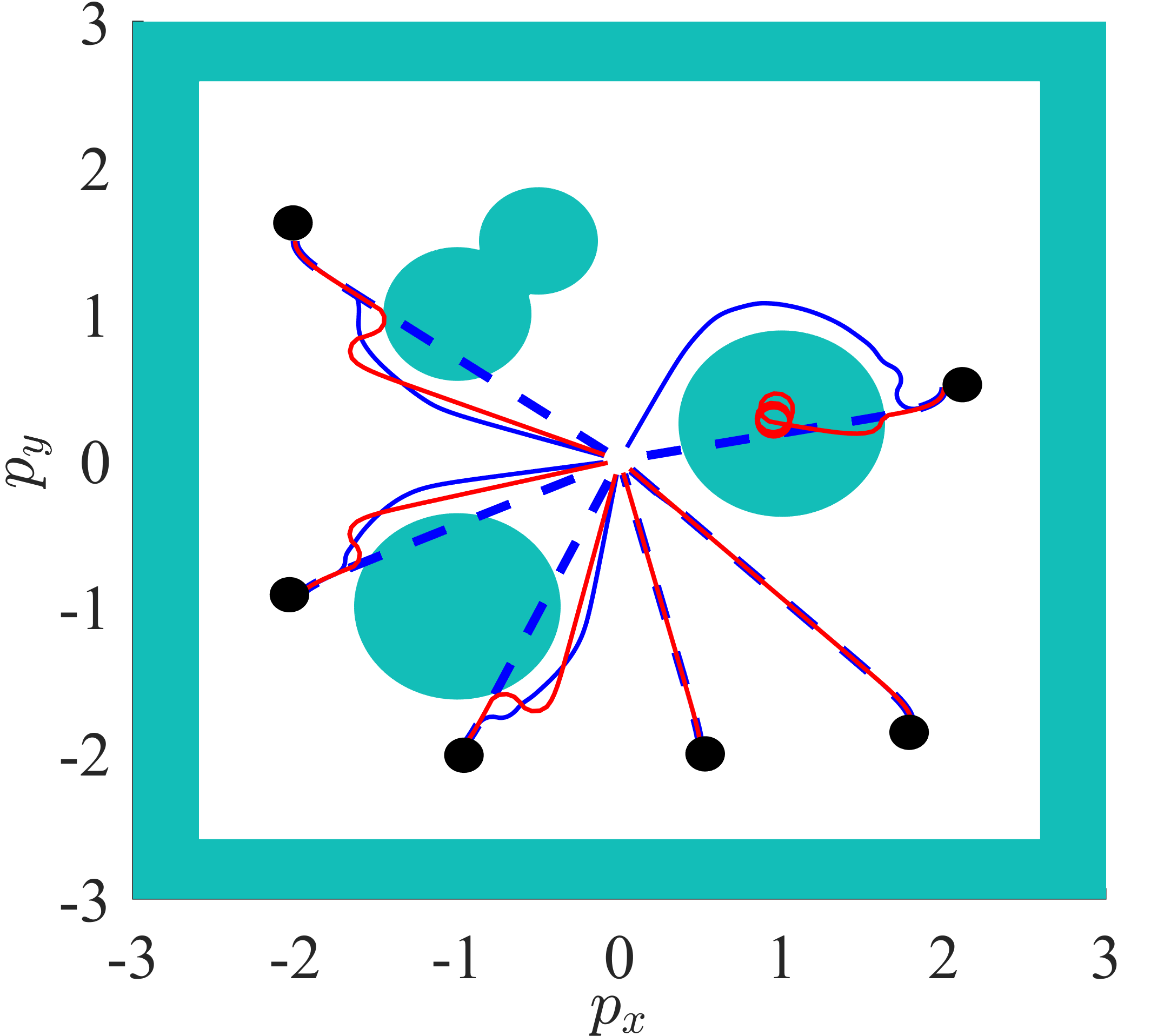}}
	\hfil\\
	\subfloat[Dashed blue curve: unsafe DDPG; Solid blue curve: unsafe DDPG with 3DSF learned from the expert controller; Red curve: safe DDPG.]{\includegraphics[width=120pt,clip]{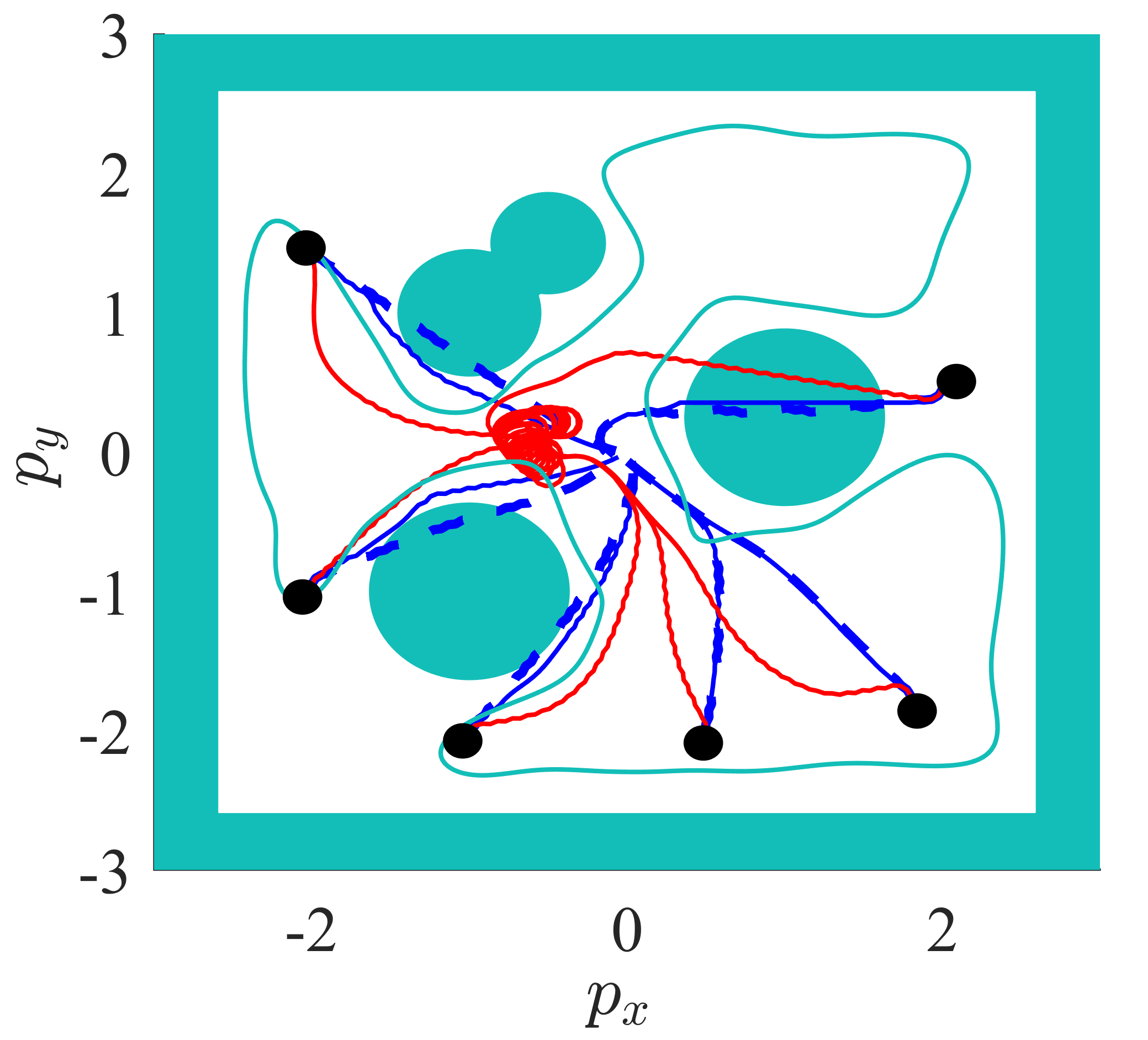}}
	\hfil
	\subfloat[Dashed blue curve: unsafe DDPG; Solid blue curve: unsafe DDPG with 3DSF learned by RL; Red curve: safe DDPG.]{\includegraphics[width=120pt,clip]{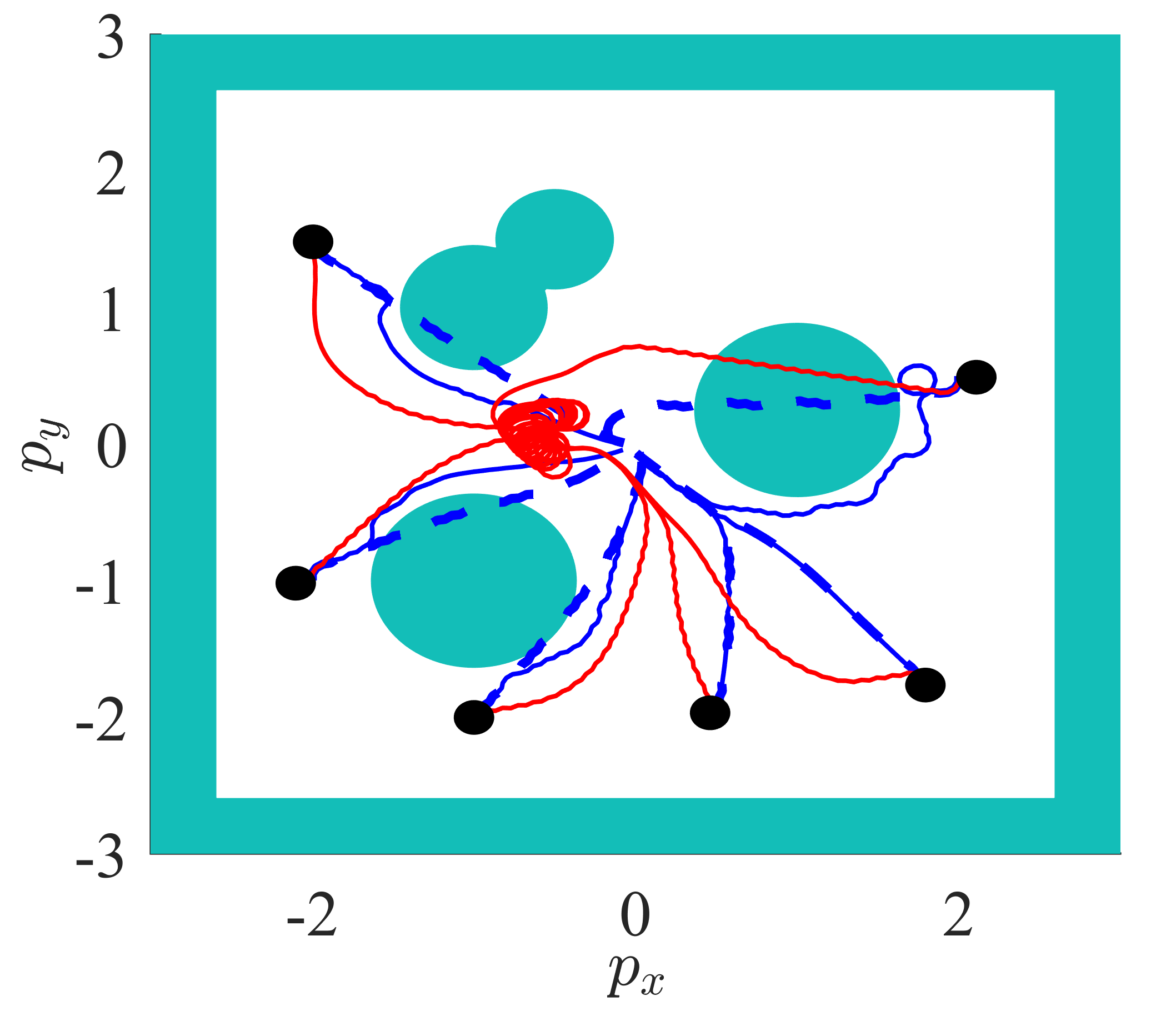}}
	\caption{Comparison of closed-loop vehicle trajectories under different controllers and different safety filters. Black dots represent initial states. \hk{Green curve: the boundary of the safe set learned from the expert
controller.}}
    \label{trajectory}
\end{figure}
	
We have also implemented an MPC controller with a prediction horizon of 20 and a stage cost identical to that used in the DDPG controller. All state and input constraints are enforced as hard constraints, resulting in a constrained nonlinear optimization problem that is solved using active set. The MPC problem is warm-started using the shifted optimal solution from the previous time step. Fig. \ref{MPC_figure} illustrates the performance of the MPC controller, where the initial states are chosen to match those used for the other controllers in the simulation. As shown in the figure, all trajectories remain strictly within the feasible region. However, the trajectory starting from $p_x=2, p_y=0.5$ does not reach the target set. This is likely due to the non-convex nature of the MPC optimization problem, which may cause the solver to converge to a globally suboptimal control sequence.
\begin{figure}
	\centering
	\includegraphics[width=120pt,clip]{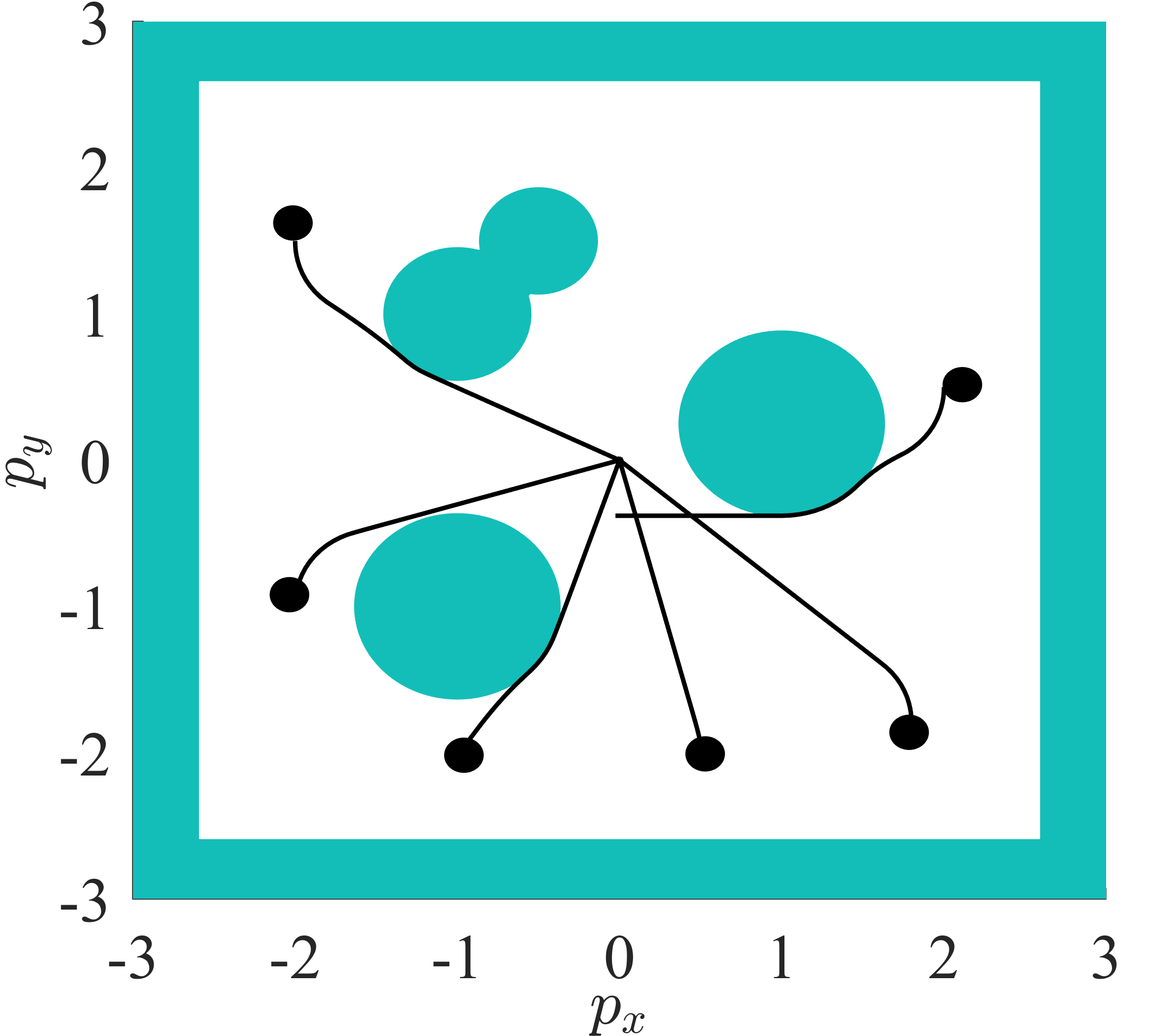}
	\caption{Closed-loop trajectories of MPC.}
    \label{MPC_figure}
\end{figure}

\subsection{Statistical evaluation and comparison with indirect data-driven safety filters} 

\begin{figure}
	\subfloat[Trajectories of the vehicle controlled by the unsafe APF policy filtered by the 3DSF.]{\includegraphics[width=120pt,clip]{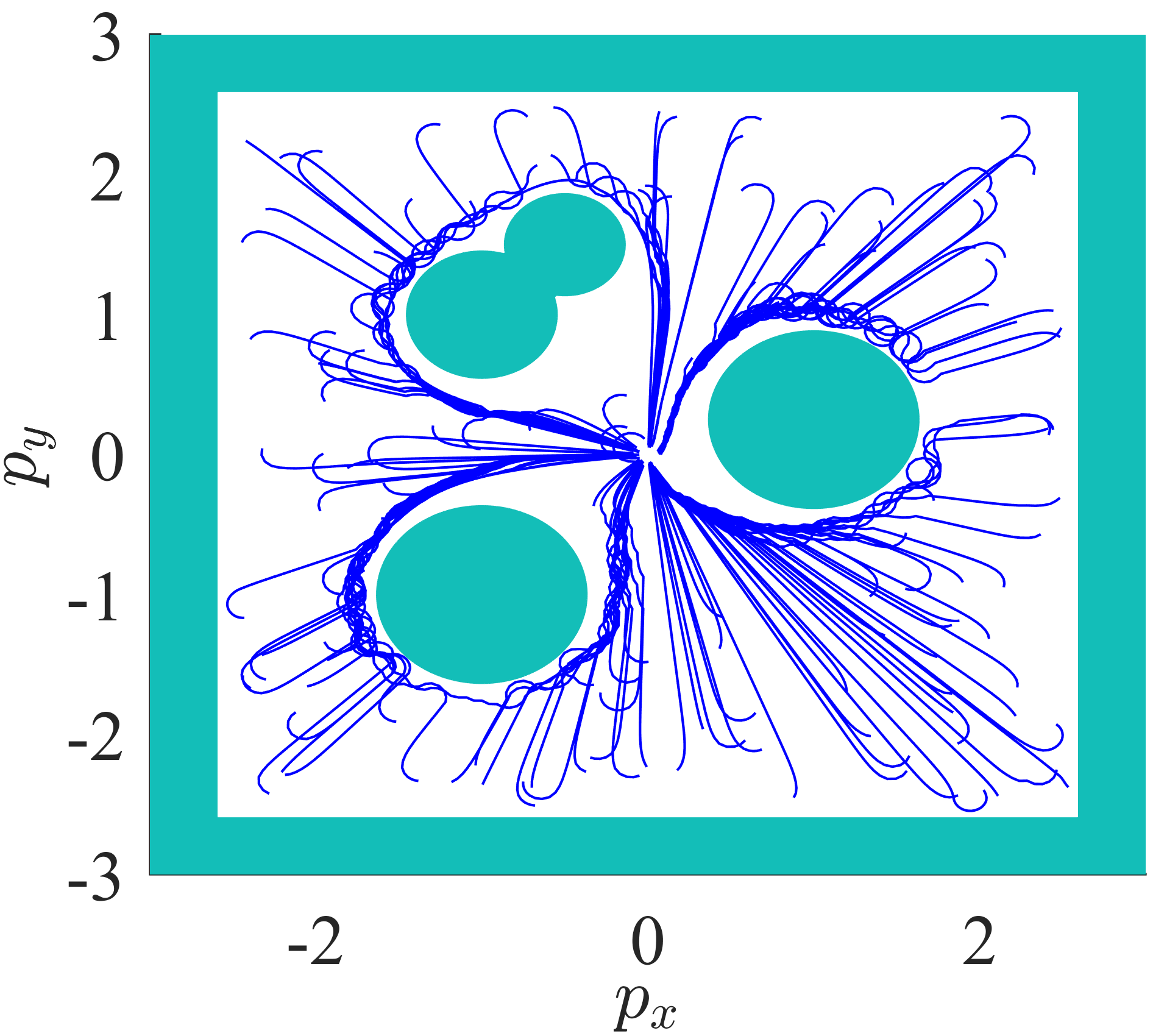}}
	\hfil
	\subfloat[Trajectories of the vehicle controlled by the unsafe APF policy filtered by the indirect safety filter.]{\includegraphics[width=120pt,clip]{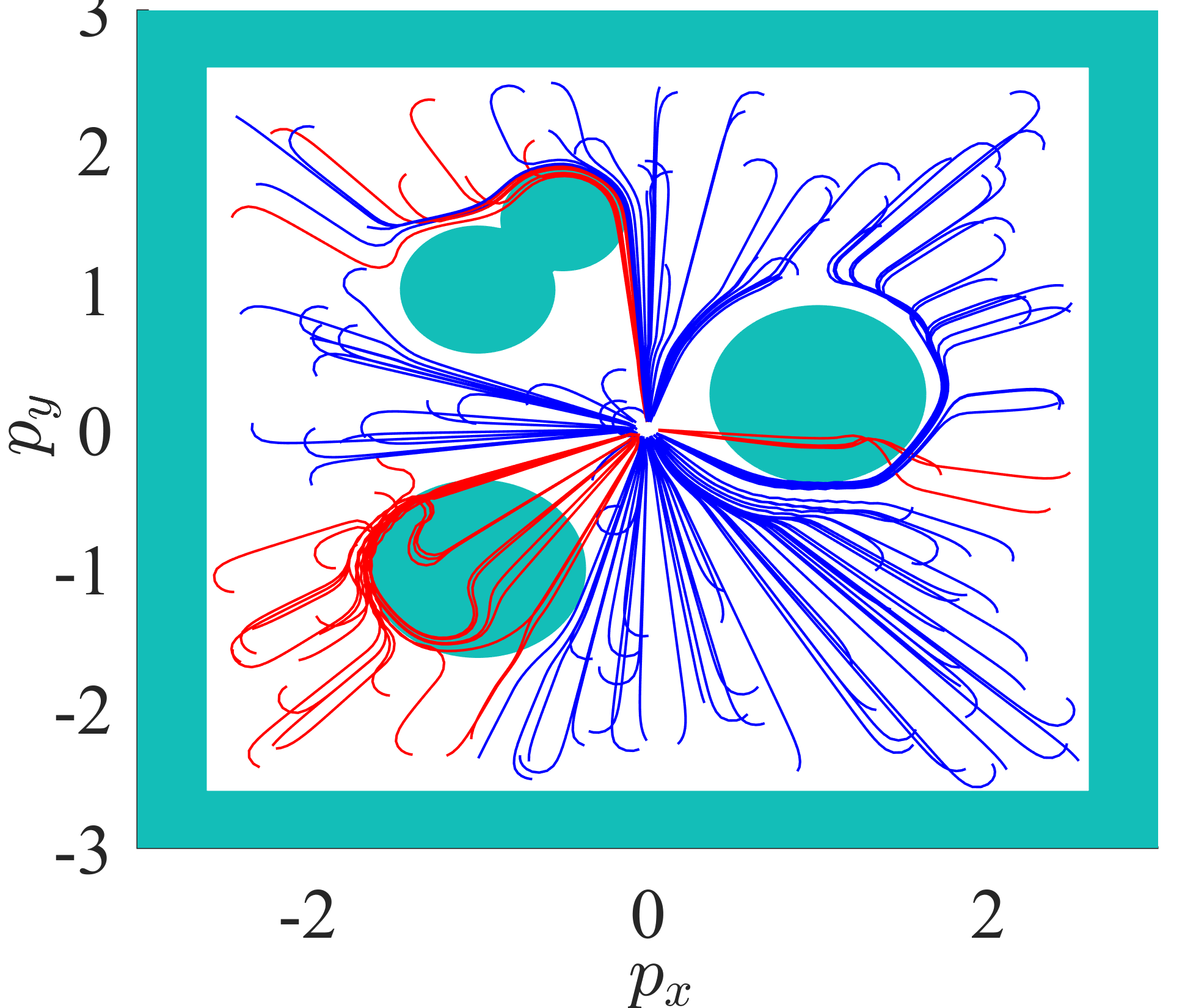}}
    \hfil
	\subfloat[Trajectories of the vehicle controlled by MPC.]{\includegraphics[width=120pt,clip]{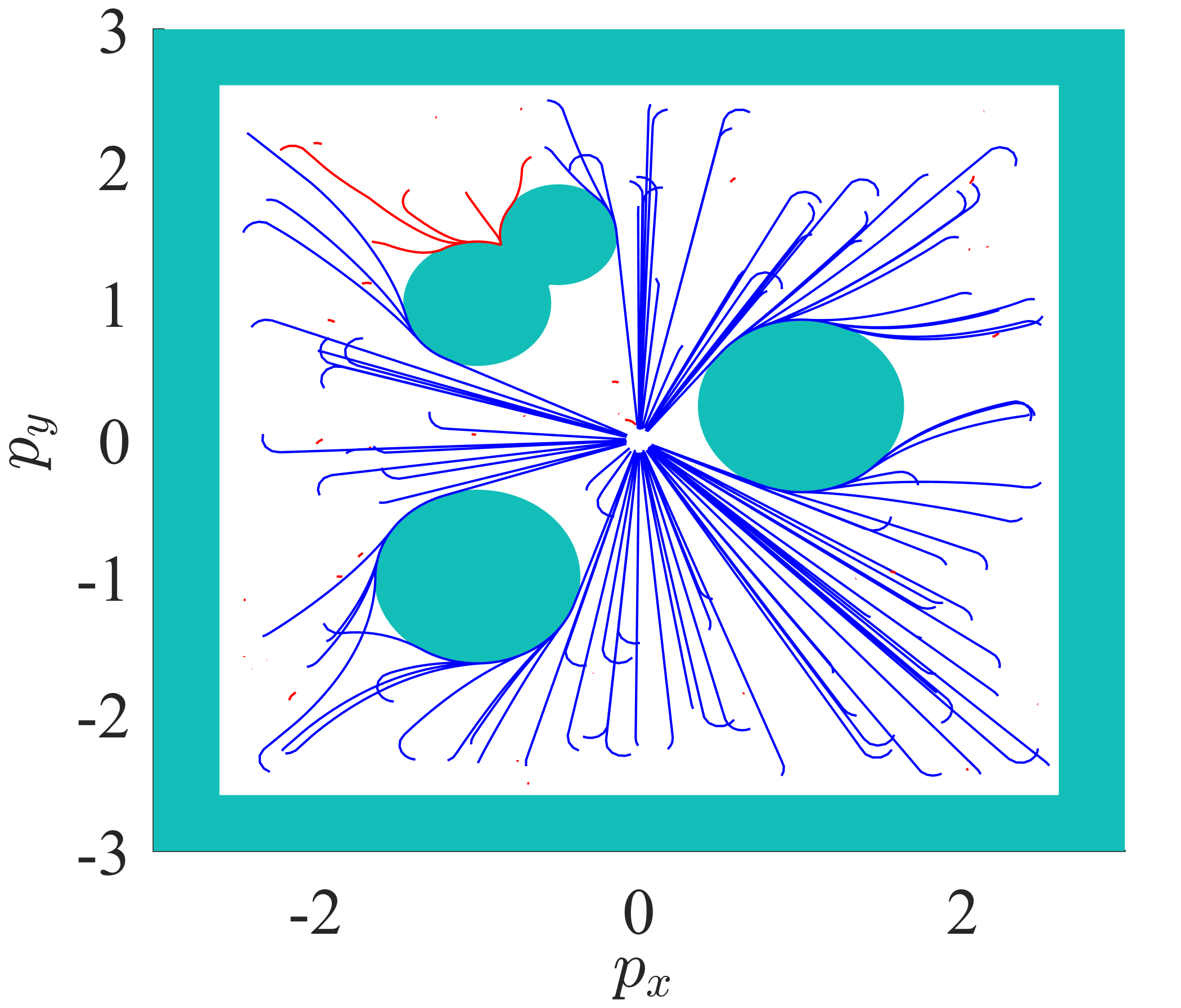}}
	\caption{Comparison of closed-loop vehicle trajectories under the 3DSF, the indirect safety filter, and MPC. }
    \label{trajectory_many}
\end{figure}

Finally, we statistically compare the proposed 3DSF against an indirect data-driven safety filter and MPC. In this subsection, the SACBF is learned by the RL approach because it has been illustrated in the previous subsection that the SACBF learned by the RL approach results in a larger safe set. To design the indirect data-driven safety filter, we intend to learn a standard CBF based on an approximate model. We first use a neural network with [128 128 128] “elu” hidden layers to identify the kinematic model. After 30 epochs of training, we get an approximate model with a sufficiently small ($2.38 \cdot 10^{-4}$) mean square error. Then, following the approaches of \cite{chen2024learning,fisac2018general,he2023state}, the approximate model is used as a prediction model in the reachability problem\footnote{The reachability problem \eqref{CBF generator} is defined over an infinite horizon. To make it solvable for each sampled state, we truncate the horizon at 20. While a longer horizon provides a better approximation of the maximal safe set, a horizon of 20 balances between sample complexity and safety performance.} \eqref{CBF generator}. We solve \eqref{CBF generator} with the initial state $x_0$ being each state sample obtained in Section VIII.B. Then, we get the target value for the neural CBF, which is parameterized by a neural network with [128 128 32] “tansig” hidden layers. 


\begin{table}
	\centering
	\caption{Comparison of safety rate, total cost, and average online CPU time for different control policies and safety filters.}
	\label{tab:policy_safety_filters}
	\begin{tabular}{@{}lcccc@{}}
		\toprule
		\textbf{Controller} & \textbf{Performance} & \textbf{No SF} & \textbf{Indirect SF} & \textbf{3DSF} \\ \midrule
		Unsafe APF              & 
		\begin{tabular}[c]{@{}c@{}}Safety Rate \\ Total Cost \\ CPU Time\end{tabular} & 
		\begin{tabular}[c]{@{}c@{}}58.16\% \\ 114.30 \\  0.0043 ms\end{tabular} & 
		\begin{tabular}[c]{@{}c@{}}79.43\% \\ 136.59 \\ 1.60 ms\end{tabular} & 
		\begin{tabular}[c]{@{}c@{}}100\% \\ 154.25\\  2.19 ms\end{tabular} \\ \midrule
		Safe APF           & 
		\begin{tabular}[c]{@{}c@{}}Safety Rate \\ Total Cost \\ CPU Time\end{tabular} & 
		\begin{tabular}[c]{@{}c@{}}81.56\% \\ 373.39\\ 0.0060 ms\end{tabular} & 
		\begin{tabular}[c]{@{}c@{}}80.85\% \\ 521.46 \\ 1.51 ms\end{tabular} & 
		\begin{tabular}[c]{@{}c@{}}100\% \\ 155.87 \\ 2.21 ms\end{tabular} \\ \midrule
		Unsafe DDPG          & 
		\begin{tabular}[c]{@{}c@{}}Safety Rate \\ Total Cost\\ CPU Time\end{tabular} & 
		\begin{tabular}[c]{@{}c@{}}70.92\% \\ 120.33 \\  0.39 ms\end{tabular} & 
		\begin{tabular}[c]{@{}c@{}}79.43 \% \\ 141.22 \\ 2.21 ms\end{tabular} & 
		\begin{tabular}[c]{@{}c@{}}96.45\% \\ 187.54 \\ 2.49 ms\end{tabular} \\
		\midrule
		Safe DDPG       & 
		\begin{tabular}[c]{@{}c@{}}Safety Rate \\ Total Cost\\ CPU Time\end{tabular} & 
		\begin{tabular}[c]{@{}c@{}}82.27\% \\ 560.60\\  0.36 ms\end{tabular} & 
		\begin{tabular}[c]{@{}c@{}}89.36\% \\ 571.45 \\ 2.49 ms\end{tabular} & 
		\begin{tabular}[c]{@{}c@{}}100\% \\611.90\\ 2.61 ms\end{tabular} \\
		\midrule
		Constrained FQI       & 
		\begin{tabular}[c]{@{}c@{}}Safety Rate \\ Total Cost\\ CPU Time\end{tabular} & 
		-- & 
		-- & 
		\begin{tabular}[c]{@{}c@{}}100\% \\ 173.85\\ 4.95 ms\end{tabular} \\ 
        \midrule
		MPC      & 
		\begin{tabular}[c]{@{}c@{}}Safety Rate \\ Total Cost\\ CPU Time\end{tabular} & 
		\begin{tabular}[c]{@{}c@{}}\hk{100\%} \\ \hk{1196.01}\\ \hk{33.29 ms}\end{tabular} & 
		-- & 
		-- \\ 
        \bottomrule
	\end{tabular}
\end{table}

We randomly sample 141 initial states that lie in the intersection of the safe set of the SACBF and the safe set of the CBF. Fig. \ref{trajectory_many} shows the trajectories controlled by the unsafe APF controller with the 3DSF and the indirect counterpart, respectively. Notably, the results demonstrate the absence of undesired equilibria, limit cycles, or unbounded trajectories.

As has been mentioned in Section \ref{pre}, \eqref{DDfilter} can be utilized either as a safety filter (3DSF) for a pre-obtained reference control policy or as a policy generator to determine suboptimal control inputs greedy to $Q_\theta$. Therefore, we apply the constrained FQI (Algorithm 1) to get the approximation $Q_\theta$ of the constrained optimal value function $\bar{Q}^*$ defined in \eqref{bellman2}. This makes \eqref{filter} a greedy policy optimization problem.

The results in Table 1 provide a detailed comparison of the safety rate, total cost, and average CPU time for those trajectories starting from the sampled initial states. \hk{Each controller in Table 1 is evaluated under three different settings: (1) No SF (no safety filter): The controller is executed directly without any safety mechanism; (2) Indirect SF: The controller is safeguarded by the indirect safety filter; (3) 3DSF: The controller is filtered by the proposed 3DSF. For each combination, we evaluate (i): Safety Rate: The percentage of simulated trajectories that satisfy all constraints; (ii) Total Cost: The accumulated performance cost ($p^2_x + p^2_y$ over 1000 steps), which represents the task performance; (iii) CPU Time: The average online computation time required for computing a control input.} The key findings are summarized below:
\begin{itemize}
	\item Both the 3DSF and the indirect data-driven safety filter can in general reduce the rate of constraint violation while not significantly degrading the performance regarding the total cost. When combined with safety filters, even unsafe policies (e.g., Unsafe APF and Unsafe DDPG) achieve safety rates comparable to their safe counterparts.
	\item Most importantly, the 3DSF significantly improves the safety rate compared to the indirect counterpart. The worse performance of the indirect counterpart is likely due to the superposition effect of the model error and the CBF learning error.
	\item Constrained FQI achieves high safety rates (100\%) and a lower total cost (173.85) compared to safe and unsafe DDPG with the 3DSF. This validates that including the SACBF constraint in the training process of RL could improve the task performance.
    \item \hk{Although MPC enforces strict constraint satisfaction, its task performance is significantly worse than that of the other controllers (only 78.72\% of trajectories eventually reach the target). This is probably due to the high non-convexity of the MPC optimization problem, which can lead to highly sub-optimal control inputs. In addition, the average CPU time of MPC is substantially higher than that of all other controllers.} 
\end{itemize}

\section{Conclusions and future work}
\hk{We have proposed an optimization-based control framework that incorporates an SACBF constraint to ensure safety for learning-based control methods. The key advantages of the framework are its compatibility with most RL and SL approaches for synthesizing safe controllers and its online computational efficiency. In particular, we demonstrate that any value-based RL algorithm or SL method can be used to construct both the objective function and the SACBF constraint in the optimization-based controller. Moreover, the method is online computationally efficient compared with other optimization-based control techniques that require long prediction horizons, such as MPC. Moreover, we have analyzed the theoretical properties regarding the robustness of SACBF and translated the results to error-to-state safety (ESSf) of the proposed control framework w.r.t. learning errors. 

The limitations of our method include its sample complexity, the possibility of constraint violations during learning, and challenges in safety guarantees for the RL approach. Future work will focus on: (i) reducing sample complexity by leveraging model symmetries or reduced-order representations, (ii) considering continuous-time systems, and (iii) developing new formulations of value-function-type SACBFs that guarantee both safety and the convergence of RL algorithms.}

\bibliographystyle{IEEEtran}
\bibliography{references} 

@book{borrelli2017predictive,
	title={Predictive Control for Linear and Hybrid Systems},
	author={Borrelli, Francesco and Bemporad, Alberto and Morari, Manfred},
	year={2017},
	publisher={Cambridge University Press}
}

@inproceedings{riedmiller2005neural,
	title={Neural fitted {Q} iteration--first experiences with a data efficient neural reinforcement learning method},
	author={Riedmiller, Martin},
	booktitle={European Conference on Machine Learning},
	pages={317--328},
	year={2005}
}

@book{busoniu2017reinforcement,
	title={Reinforcement Learning and Dynamic Programming Using Function Approximators},
	author={Busoniu, Lucian and Babuska, Robert and De Schutter, Bart and Ernst, Damien},
	year={2017},
	publisher={CRC Press}
}

@article{wabersich2021predictive,
	title={A predictive safety filter for learning-based control of constrained nonlinear dynamical systems},
	author={Wabersich, Kim Peter and Zeilinger, Melanie N},
	journal={Automatica},
	volume={129},
	pages={109597},
	year={2021},
	publisher={Elsevier}
}

@article{bucsoniu2018reinforcement,
	title={Reinforcement learning for control: Performance, stability, and deep approximators},
	author={Bu{\c{s}}oniu, Lucian and de Bruin, Tim and Toli{\'c}, Domagoj and Kober, Jens and Palunko, Ivana},
	journal={Annual Reviews in Control},
	volume={46},
	pages={8--28},
	year={2018},
	publisher={Elsevier}
}

@article{moreno2022predictive,
	title={Predictive Control with Learning-Based Terminal Costs Using Approximate Value Iteration},
	author={Moreno-Mora, Francisco and Beckenbach, Lukas and Streif, Stefan},
	journal={arXiv preprint arXiv:2212.00361},
	year={2022}
}

@article{hertneck2018learning,
	title={Learning an approximate model predictive controller with guarantees},
	author={Hertneck, Michael and K{\"o}hler, Johannes and Trimpe, Sebastian and Allg{\"o}wer, Frank},
	journal={IEEE Control Systems Letters},
	volume={2},
	number={3},
	pages={543--548},
	year={2018},
}

@article{gros2022learning,
	title={Learning for {MPC} with stability \& safety guarantees},
	author={Gros, Sebastien and Zanon, Mario},
	journal={Automatica},
	volume={146},
	pages={110598},
	year={2022},
	publisher={Elsevier}
}

@inproceedings{yu2022reachability,
	title={Reachability constrained reinforcement learning},
	author={Yu, Dongjie and Ma, Haitong and Li, Shengbo and Chen, Jianyu},
	booktitle={International Conference on Machine Learning},
	pages={25636--25655},
	year={2022},
	organization={PMLR}
}

@inproceedings{liu2020ipo,
	title={{IPO}: Interior-point policy optimization under constraints},
	author={Liu, Yongshuai and Ding, Jiaxin and Liu, Xin},
	booktitle={The AAAI Conference on Artificial Intelligence},
	volume={34},
	number={04},
	pages={4940--4947},
	year={2020}
}

@inproceedings{fisac2019bridging,
	title={Bridging {Hamilton-Jacobi} safety analysis and reinforcement learning},
	author={Fisac, Jaime F and Lugovoy, Neil F and Rubies-Royo, Vicen{\c{c}} and Ghosh, Shromona and Tomlin, Claire J},
	booktitle={2019 International Conference on Robotics and Automation (ICRA)},
	pages={8550--8556},
	year={2019},
	organization={IEEE}
}

@article{dalal2018safe,
	title={Safe exploration in continuous action spaces},
	author={Dalal, Gal and Dvijotham, Krishnamurthy and Vecerik, Matej and Hester, Todd and Paduraru, Cosmin and Tassa, Yuval},
	journal={arXiv preprint arXiv:1801.08757},
	year={2018}
}

@inproceedings{cheng2019end,
	title={End-to-end safe reinforcement learning through barrier functions for safety-critical continuous control tasks},
	author={Cheng, Richard and Orosz, G{\'a}bor and Murray, Richard M and Burdick, Joel W},
	booktitle={The AAAI Conference on Artificial Intelligence},
	volume={33},
	number={01},
	pages={3387--3395},
	year={2019}
}

@article{lin2023reinforcement,
	title={Reinforcement learning-based model predictive control for discrete-time systems},
	author={Lin, Min and Sun, Zhongqi and Xia, Yuanqing and Zhang, Jinhui},
	journal={IEEE Transactions on Neural Networks and Learning Systems},
	volume={35},
	number={3},
	pages={3312--3324},
	year={2023},
	publisher={IEEE}
}

@article{li2022using,
	title={Using stochastic programming to train neural network approximation of nonlinear {MPC} laws},
	author={Li, Yun and Hua, Kaixun and Cao, Yankai},
	journal={Automatica},
	volume={146},
	pages={110665},
	year={2022},
	publisher={Elsevier}
}

@inproceedings{robey2020learning,
	title={Learning control barrier functions from expert demonstrations},
	author={Robey, Alexander and Hu, Haimin and Lindemann, Lars and Zhang, Hanwen and Dimarogonas, Dimos V and Tu, Stephen and Matni, Nikolai},
	booktitle={2020 59th IEEE Conference on Decision and Control (CDC)},
	pages={3717--3724},
	year={2020},
}

@inproceedings{choi2021robust,
	title={Robust control barrier--value functions for safety-critical control},
	author={Choi, Jason J and Lee, Donggun and Sreenath, Koushil and Tomlin, Claire J and Herbert, Sylvia L},
	booktitle={2021 60th IEEE Conference on Decision and Control},
	pages={6814--6821},
	year={2021},
}

@article{massiani2022safe,
	title={Safe value functions},
	author={Massiani, Pierre-Fran{\c{c}}ois and Heim, Steve and Solowjow, Friedrich and Trimpe, Sebastian},
	journal={IEEE Transactions on Automatic Control},
	volume={68},
	number={5},
	pages={2743--2757},
	year={2022},
	publisher={IEEE}
}

@article{wabersich2022predictive,
title={Predictive control barrier functions: Enhanced safety mechanisms for learning-based control},
author={Wabersich, Kim P and Zeilinger, Melanie N},
journal={IEEE Transactions on Automatic Control},
volume={68},
number={5},
pages={2638--2651},
year={2022},
publisher={IEEE}
}

@inproceedings{bansal2021deepreach,
	title={Deepreach: A deep learning approach to high-dimensional reachability},
	author={Bansal, Somil and Tomlin, Claire J},
	booktitle={2021 IEEE International Conference on Robotics and Automation (ICRA)},
	pages={1817--1824},
	year={2021},
}

@article{dawson2023safe,
	title={Safe control with learned certificates: A survey of neural {Lyapunov}, barrier, and contraction methods for robotics and control},
	author={Dawson, Charles and Gao, Sicun and Fan, Chuchu},
	journal={IEEE Transactions on Robotics},
	volume={39},
	number={3},
	pages={1749--1767},
	year={2023},
	publisher={IEEE}
}

@article{ames2016control,
	title={Control barrier function based quadratic programs for safety critical systems},
	author={Ames, Aaron D and Xu, Xiangru and Grizzle, Jessy W and Tabuada, Paulo},
	journal={IEEE Transactions on Automatic Control},
	volume={62},
	number={8},
	pages={3861--3876},
	year={2016},
	publisher={IEEE}
}

@article{wabersich2023data,
	title={Data-driven safety filters: Hamilton-{Jacobi} reachability, control barrier functions, and predictive methods for uncertain systems},
	author={Wabersich, Kim P and Taylor, Andrew J and Choi, Jason J and Sreenath, Koushil and Tomlin, Claire J and Ames, Aaron D and Zeilinger, Melanie N},
	journal={IEEE Control Systems Magazine},
	volume={43},
	number={5},
	pages={137--177},
	year={2023},
	publisher={IEEE}
}

@article{lavanakul2024safety,
	title={Safety Filters for Black-Box Dynamical Systems by Learning Discriminating Hyperplanes},
	author={Lavanakul, Will and Choi, Jason J and Sreenath, Koushil and Tomlin, Claire J},
	journal={arXiv preprint arXiv:2402.05279},
	year={2024}
}

@article{he2024approximate,
	title={Approximate dynamic programming for constrained piecewise affine systems with stability and safety guarantees},
	author={He, Kanghui and Shi, Shengling and van den Boom, Ton and De Schutter, Bart},
	journal={IEEE Transactions on Systems, Man, and Cybernetics: Systems},
	year={2024},
	publisher={IEEE}
}

@article{kolathaya2018input,
	title={Input-to-state safety with control barrier functions},
	author={Kolathaya, Shishir and Ames, Aaron D},
	journal={IEEE Control Systems Letters},
	volume={3},
	number={1},
	pages={108--113},
	year={2018},
	publisher={IEEE}
}

@inproceedings{agrawal2017discrete,
	title={Discrete control barrier functions for safety-critical control of discrete systems with application to bipedal robot navigation.},
	author={Agrawal, Ayush and Sreenath, Koushil},
	booktitle={Robotics: Science and Systems},
	volume={13},
	pages={1--10},
	year={2017},
	organization={Cambridge, MA, USA}
}

@article{he2023state,
	title={State-action control barrier functions: Imposing safety on learning-based control with low online computational costs},
	author={He, Kanghui and Shi, Shengling and van den Boom, Ton and De Schutter, Bart},
	journal   = {IEEE Transactions on Automatic Control},
	year      = {2025},
	note      = {Early Access},
}

@article{watkins1992q,
	title={Q-learning},
	author={Watkins, Christopher JCH and Dayan, Peter},
	journal={Machine Learning},
	volume={8},
	pages={279--292},
	year={1992},
	publisher={Springer}
}

@inproceedings{warren1989global,
	title={Global path planning using artificial potential fields},
	author={Warren, Charles W},
	booktitle={1989 IEEE International Conference on Robotics and Automation},
	pages={316--317},
	year={1989},
	organization={IEEE Computer Society}
}

@article{ramezani2024generalization,
	title={On the generalization of stochastic gradient descent with momentum},
	author={Ramezani-Kebrya, Ali and Antonakopoulos, Kimon and Cevher, Volkan and Khisti, Ashish and Liang, Ben},
	journal={Journal of Machine Learning Research},
	volume={25},
	number={22},
	pages={1--56},
	year={2024}
}

@article{chen2024learning,
	title={Learning Performance-Oriented Control Barrier Functions Under Complex Safety Constraints and Limited Actuation},
	author={Chen, Shaoru and Fazlyab, Mahyar},
	journal={arXiv preprint arXiv:2401.05629},
	year={2024}
}

@article{fisac2018general,
	title={A general safety framework for learning-based control in uncertain robotic systems},
	author={Fisac, Jaime F and Akametalu, Anayo K and Zeilinger, Melanie N and Kaynama, Shahab and Gillula, Jeremy and Tomlin, Claire J},
	journal={IEEE Transactions on Automatic Control},
	volume={64},
	number={7},
	pages={2737--2752},
	year={2018},
	publisher={IEEE}
}

@inproceedings{LillicrapHPHETS15,
	author       = {Timothy P. Lillicrap and
	Jonathan J. Hunt and
	Alexander Pritzel and
	Nicolas Heess and
	Tom Erez and
	Yuval Tassa and
	David Silver and
	Daan Wierstra},
	title        = {Continuous control with deep reinforcement learning},
	booktitle    = {4th International Conference on Learning Representations, {ICLR} 2016},
	year         = {2016}}

@article{gupta2020policy,
	title={Policy-gradient and actor-critic based state representation learning for safe driving of autonomous vehicles},
	author={Gupta, Abhishek and Khwaja, Ahmed Shaharyar and Anpalagan, Alagan and Guan, Ling and Venkatesh, Bala},
	journal={Sensors},
	volume={20},
	number={21},
	pages={5991},
	year={2020},
	publisher={MDPI}
}

@article{mestres2025control,
	title={Control Barrier Function-Based Safety Filters: Characterization of Undesired Equilibria, Unbounded Trajectories, and Limit Cycles},
	author={Mestres, Pol and Chen, Yiting and Dall'anese, Emiliano and Cort{\'e}s, Jorge},
	journal={arXiv preprint arXiv:2501.09289},
	year={2025}
}

@inproceedings{he2020composite,
	title={Composite deep learning control for autonomous bicycles by using deep deterministic policy gradient},
	author={He, Kanghui and Dong, Chaoyang and Yan, An and Zheng, Qingyuan and Liang, Bin and Wang, Qing},
	booktitle={IECON 2020 The 46th Annual Conference of the IEEE Industrial Electronics Society},
	pages={2766--2773},
	year={2020},
}

@article{haydari2020deep,
	title={Deep reinforcement learning for intelligent transportation systems: A survey},
	author={Haydari, Ammar and Y{\i}lmaz, Yasin},
	journal={IEEE Transactions on Intelligent Transportation Systems},
	volume={23},
	number={1},
	pages={11--32},
	year={2020},
	publisher={IEEE}
}

@article{westenbroek2021combining,
	title={Combining model-based design and model-free policy optimization to learn safe, stabilizing controllers},
	author={Westenbroek, Tyler and Agrawal, Ayush and Casta{\~n}eda, Fernando and Sastry, S Shankar and Sreenath, Koushil},
	journal={IFAC-PapersOnLine},
	volume={54},
	number={5},
	pages={19--24},
	year={2021},
	publisher={Elsevier}
}

@inproceedings{taylor2020learning,
	title={Learning for safety-critical control with control barrier functions},
	author={Taylor, Andrew and Singletary, Andrew and Yue, Yisong and Ames, Aaron},
	booktitle={Learning for Dynamics and Control},
	pages={708--717},
	year={2020},
	organization={PMLR}
}

@article{dhiman2021control,
	title={Control barriers in bayesian learning of system dynamics},
	author={Dhiman, Vikas and Khojasteh, Mohammad Javad and Franceschetti, Massimo and Atanasov, Nikolay},
	journal={IEEE Transactions on Automatic Control},
	volume={68},
	number={1},
	pages={214--229},
	year={2021},
	publisher={IEEE}
}

@article{zheng2024data,
	title={Data-Driven Safe Control of Discrete-Time Non-Linear Systems},
	author={Zheng, Jian and Miller, Jared and Sznaier, Mario},
	journal={IEEE Control Systems Letters},
	volume={8},
	pages={1553--1558},
	year={2024},
	publisher={IEEE}
}

@article{didier2024approximate,
	title={Approximate predictive control barrier function for discrete-time systems},
	author={Didier, Alexandre and Zeilinger, Melanie N},
	journal={arXiv preprint arXiv:2411.11610},
	year={2024}
}

@book{bertsekas2022abstract,
	title={Abstract Dynamic Programming},
	author={Bertsekas, Dimitri},
	year={2022},
	publisher={Athena Scientific}
}

@article{tan2023value,
	title={Value functions are control barrier functions: Verification of safe policies using control theory},
	author={Tan, Daniel CH and Acero, Fernando and McCarthy, Robert and Kanoulas, Dimitrios and Li, Zhibin},
	journal={arXiv preprint arXiv:2306.04026},
	year={2023}
}

@book{rawlings2017model,
	title={Model Predictive Control: Theory, Computation, and Design},
	author={Rawlings, James Blake and Mayne, David Q and Diehl, Moritz and others},
	volume={2},
	year={2017},
	publisher={Nob Hill Publishing Madison, WI}
}

@article{hoeffding1994probability,
	title={Probability inequalities for sums of bounded random variables},
	author={Hoeffding, Wassily},
	journal={The Collected Works of Wassily Hoeffding},
	pages={409--426},
	year={1994},
	publisher={Springer}
}

@inproceedings{ni2025learning,
  title={A learning-based approach to stochastic optimal control under reach-avoid constraint},
  author={Ni, Tingting and Kamgarpour, Maryam},
  booktitle={Proceedings of the 28th ACM International Conference on Hybrid Systems: Computation and Control},
  pages={1--8},
  year={2025}
}

@article{oh2025safety,
  title={Safety with Agency: Human-Centered Safety Filter with Application to {AI}-Assisted Motorsports},
  author={Oh, Donggeon David and Lidard, Justin and Hu, Haimin and Sinhmar, Himani and Lazarski, Elle and Gopinath, Deepak and Sumner, Emily S and DeCastro, Jonathan A and Rosman, Guy and Leonard, Naomi Ehrich and others},
  journal={arXiv preprint arXiv:2504.11717},
  year={2025}
}

@inproceedings{so2024train,
  title={How to train your neural control barrier function: Learning safety filters for complex input-constrained systems},
  author={So, Oswin and Serlin, Zachary and Mann, Makai and Gonzales, Jake and Rutledge, Kwesi and Roy, Nicholas and Fan, Chuchu},
  booktitle={2024 IEEE International Conference on Robotics and Automation (ICRA)},
  pages={11532--11539},
  year={2024},
}

@article{zhang2025discrete,
  title={Discrete {GCBF} proximal policy optimization for multi-agent safe optimal control},
  author={Zhang, Songyuan and So, Oswin and Black, Mitchell and Fan, Chuchu},
  journal={arXiv preprint arXiv:2502.03640},
  year={2025}
}

@article{choi2023forward,
  title={A forward reachability perspective on robust control invariance and discount factors in reachability analysis},
  author={Choi, Jason J and Lee, Donggun and Li, Boyang and How, Jonathan P and Sreenath, Koushil and Herbert, Sylvia L and Tomlin, Claire J},
  journal={arXiv preprint arXiv:2310.17180},
  year={2023}
}

@inproceedings{hsu2023isaacs,
  title={{Isaacs}: Iterative soft adversarial actor-critic for safety},
  author={Hsu, Kai-Chieh and Nguyen, Duy Phuong and Fisac, Jaime Fernandez},
  booktitle={Learning for Dynamics and Control Conference},
  pages={90--103},
  year={2023},
  organization={PMLR}
}

@article{li2025certifiable,
  title={Certifiable Reachability Learning Using a New {Lipschitz} Continuous Value Function},
  author={Li, Jingqi and Lee, Donggun and Lee, Jaewon and Dong, Kris Shengjun and Sojoudi, Somayeh and Tomlin, Claire},
  journal={IEEE Robotics and Automation Letters},
  year={2025},
  publisher={IEEE}
}

@article{nejati2023data,
  title={Data-driven synthesis of safety controllers via multiple control barrier certificates},
  author={Nejati, Ameneh and Zamani, Majid},
  journal={IEEE Control Systems Letters},
  volume={7},
  pages={2497--2502},
  year={2023},
  publisher={IEEE}
}

@article{castaneda2025recursively,
  title={Recursively Feasible Probabilistic Safe Online Learning with Control Barrier Functions},
  author={Casta{\~n}eda, Fernando and Choi, Jason J and Jung, Wonsuhk and Zhang, Bike and Tomlin, Claire J and Sreenath, Koushil},
  journal={IEEE Open Journal of Control Systems},
  year={2025},
  publisher={IEEE}
}
\begin{IEEEbiography}[{\includegraphics[width=1in,height=1.25in,clip,keepaspectratio]{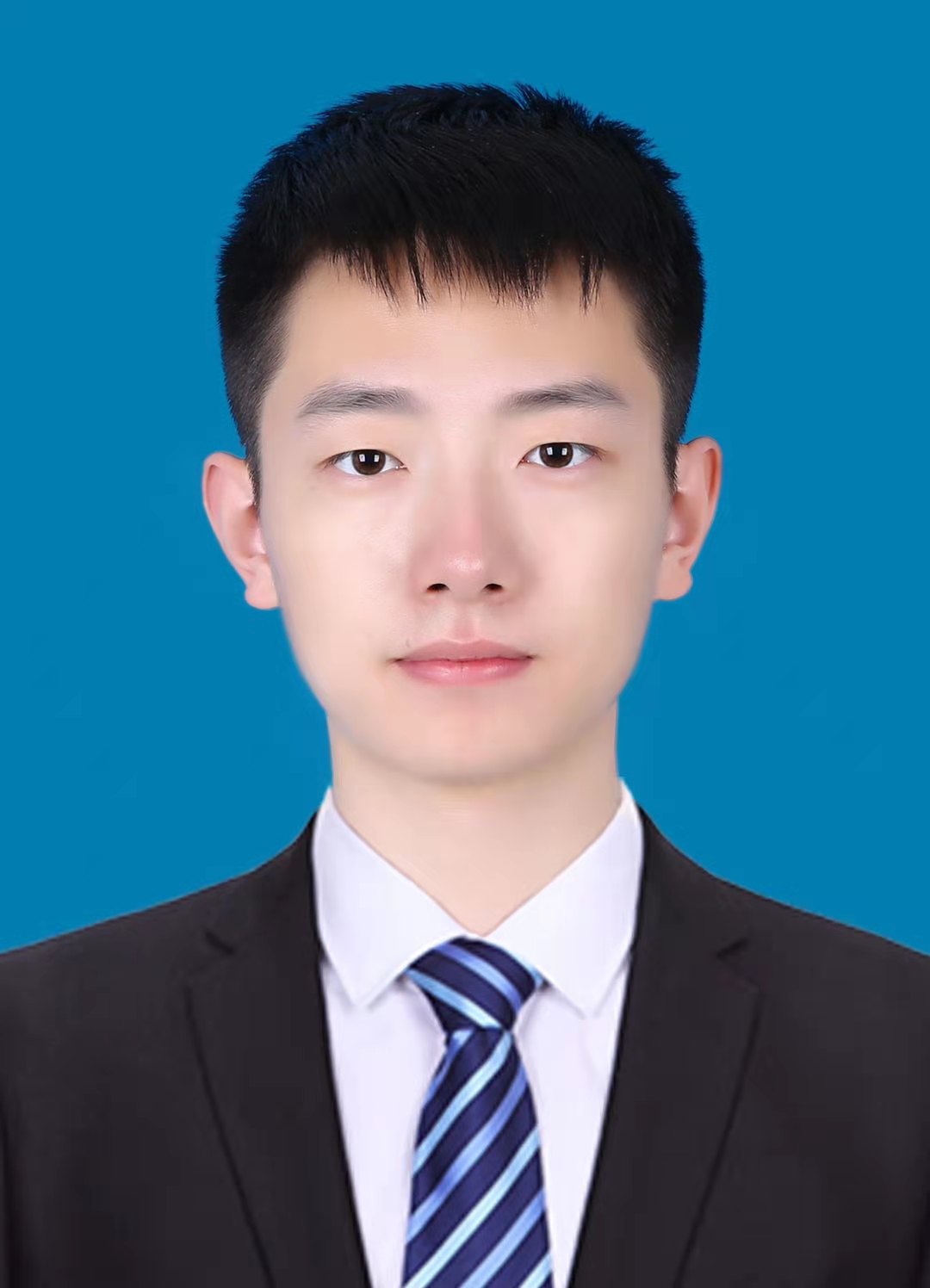}}]{Kanghui He} is a PhD candidate at the Delft Center for Systems and Control, Delft University of Technology, the Netherlands. He received the B.Sc. degree at the School of Mechanical Engineering and Automation, Beihang University, in 2018, and the M.Sc. degree (with outstanding graduation thesis and Chinese national scholarship) in the Department of Flight Dynamics and Control, Beihang University, in 2021. His research interests include learning-based control, model predictive control, hybrid systems, and their applications in mobile robots.
\end{IEEEbiography}
\begin{IEEEbiography}[{\includegraphics[width=1in,height=1.25in,clip,keepaspectratio]{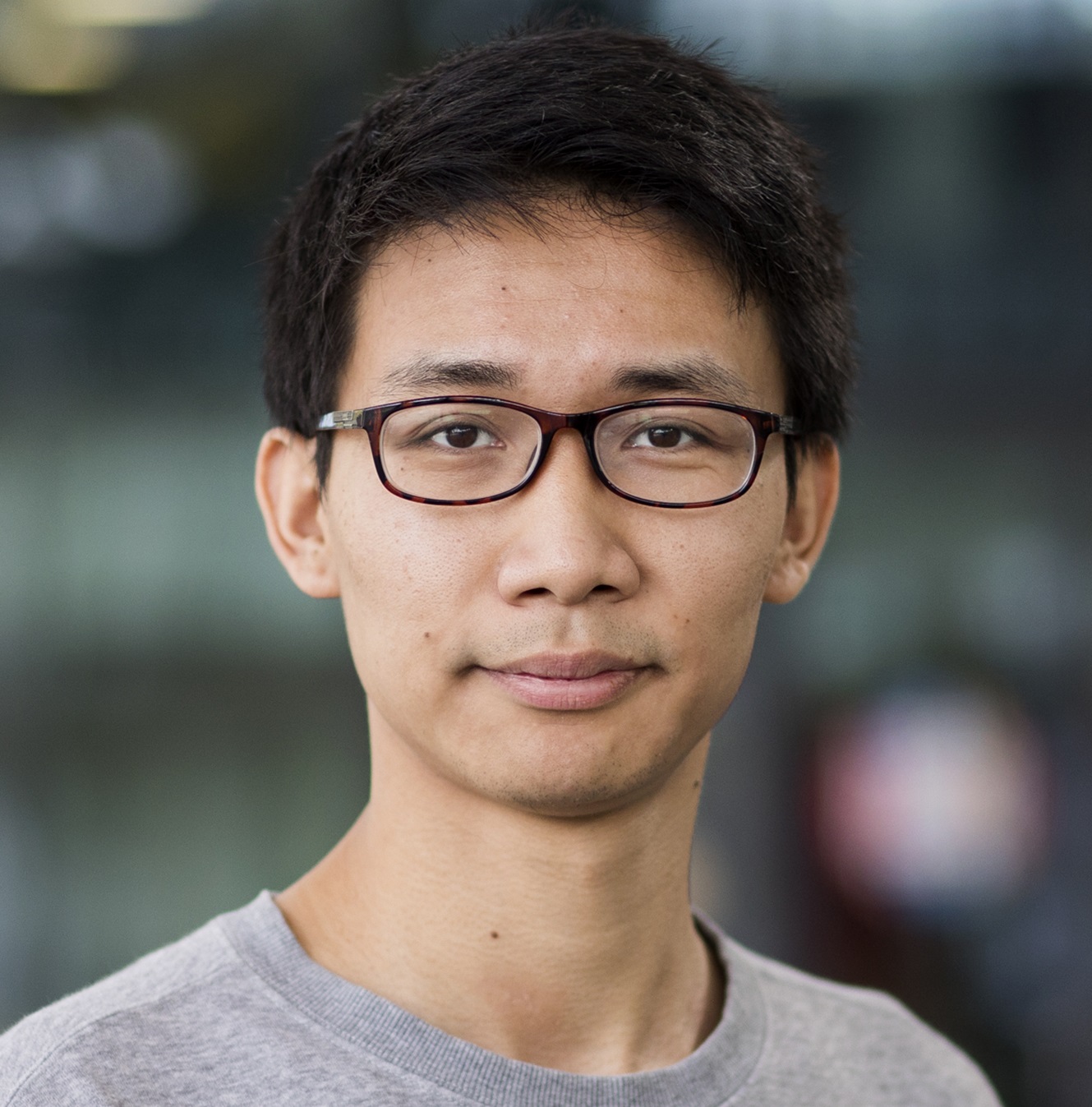}}]{Shengling Shi} is an Assistant Professor with the Delft Center for Systems and Control (DCSC), Delft University of Technology (TU Delft), Netherlands. He was a postdoctoral researcher at the Massachusetts Institute of Technology, USA, and at DCSC, TU Delft. He received the M.Sc. degree (with distinction) in 2017 and the Ph.D. degree in 2021 from the Eindhoven University of Technology, Netherlands. His research interests include system identification, optimal control, reinforcement learning, and their applications.
\end{IEEEbiography}
\begin{IEEEbiography}[{\includegraphics[width=1in,height=1.25in,clip,keepaspectratio]{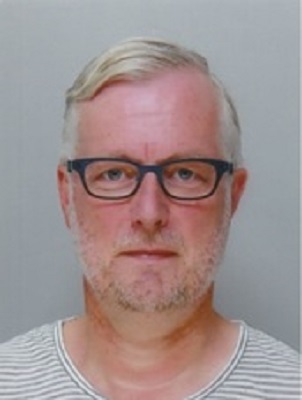}}]{Ton van den Boom} received his M.Sc. and Ph.D. degrees in Electrical Engineering from the Eindhoven University of Technology, The Netherlands, in 1988 and 1993, respectively. Currently, he is an associate professor at the Delft Center for Systems and Control (DCSC) department of Delft University of Technology. His research focus is mainly in modeling and control of discrete event and hybrid systems, in particular max-plus-linear systems, max–min-plus-scaling systems, and switching max-plus-linear systems (both stochastic and deterministic), with applications in manufacturing systems and transportation networks.
\end{IEEEbiography}
\begin{IEEEbiography}[{\includegraphics[width=1in,height=1.25in,clip,keepaspectratio]{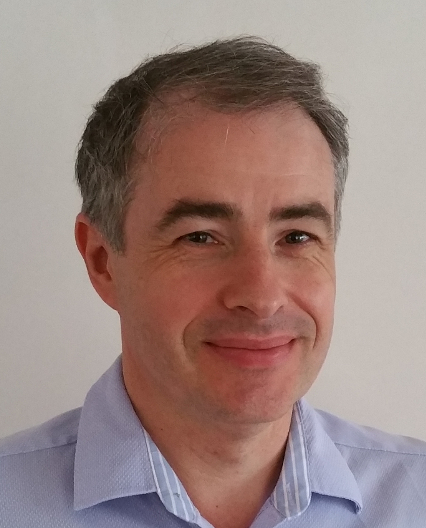}}]{Bart De Schutter} (Fellow, IEEE) received the PhD degree (summa cum laude) in applied sciences from KU Leuven, Belgium, in 1996. He is currently a Full Professor and Head of Department at the Delft Center for Systems and Control, Delft University of Technology, The Netherlands. His research interests include multi-level and multi-agent control, model predictive control, learning-based control, and control of hybrid systems, with applications in intelligent transportation systems and smart energy systems.
	
	Prof. De Schutter is a Senior Editor of the IEEE Transactions on Intelligent Transportation Systems.
\end{IEEEbiography}

\end{document}